\newcommand{\RNum}[1]{\uppercase\expandafter{\romannumeral #1\relax}}
\newcommand{\norm}[1]{\left\lVert#1\right\rVert} 
\newtheorem{theorem}{Theorem}
\newtheorem{setting}{Setting}
\newtheorem{lemma}[theorem]{Lemma}
\newtheorem{proposition}[theorem]{Proposition}
\newtheorem{remark}{Remark}
\newcommand{\indep}{\perp\!\!\!\perp}
\newcommand{\summ}[2]{\overset{#2}{\underset{#1}{\sum}}}
\title{\textbf{Selective inference for multiple pairs of clusters} \\ \textbf{after $K$-means clustering}} 
\author[1]{Youngjoo Yun}
\author[1]{Yinqiu He}
\affil[1]{Department of Statistics, University of Wisconsin-Madison}
\date{}
\begin{document}
\sloppy
\maketitle
\begin{abstract}
If the same data is used for both clustering and for testing a null hypothesis that is formulated in terms of the estimated clusters, then the traditional hypothesis testing  framework often fails to control the Type I error. \cite{gao2022selective} and \cite{chen2023selective} provide selective inference frameworks for testing if a pair of estimated clusters indeed stem from underlying differences, for the case where hierarchical clustering and $K$-means clustering, respectively, are used to define the clusters. In applications, however, it is often of interest to test for 
multiple pairs of clusters. In our work, we extend the pairwise test of \cite{chen2023selective} to a test for multiple pairs of clusters, where the cluster assignments are produced by $K$-means clustering. We further develop an analogous test for the setting where the variance is unknown, building on the work of \cite{yun2023selective} that extends \citet{gao2022selective}'s pairwise test to the case of unknown variance. For both known and unknown variance settings, we present methods that address certain forms of data-dependence in the choice of pairs of clusters to test for. We show that our proposed tests control the Type I error, both theoretically and empirically, and provide a numerical study of their empirical powers under various settings. 
\end{abstract}

\section{Introduction}
Clustering is a widely used tool for studying unlabeled data that works by dividing a given data into groups based on certain similarity measures. The number of clusters to expect from a given data may not always be apparent, but several popular clustering algorithms require that it be specified, including $K$-means clustering. In such cases, 
researchers might try running the algorithm with a relatively large number of clusters and then examine if there  exists any difference among several of the estimated clusters. It could then be of interest
to do a statistical test to investigate if there is indeed an underlying group structure among these clusters. In this work, we propose methods for testing the null hypothesis that states there is no difference in the means of the cluster centers of an arbitrary number of clusters, where the cluster assignments are produced by the $K$-means algorithm run on the same data as the one used for inference; here, and in the rest of the paper, the cluster center of a cluster denotes the sample mean of the observations in the cluster.  

Since our null hypothesis is formulated in terms of the clusters defined by $K$-means clustering, it is inevitably data-dependent and breaches the assumption of the traditional hypothesis testing framework that all aspects of the inference procedure are determined independently of the data. As a result, traditional inference procedures may no longer be valid for such a setting---they could invalidate the p-value obtained, as well as undermine replicability, as discussed in \citet{benjamini2020selective}. While sample splitting is a widely used method for addressing issues of data-dependence, it is not applicable to statistical inferences involving clusters, as discussed in 
\citet{gao2022selective} and \citet{neufeld2024inference}. An alternative approach that is relevant to such settings is a framework called selective inference, which, first proposed in \citet{fithian2014optimal}, allows for valid inference in the presence of data-dependence in the inference procedure. It works by accounting for the event that leads to this dependence---more specifically, \citet{fithian2014optimal} account for this event by considering the selective Type I error, the Type I error conditioned on the selection event. 

To address the challenges of statistical inference for data-dependent clusters, 
various selective inference  procedures have been developed; examples include
\citet{gao2022selective} for hierarchical clustering algorithms, \citet{chen2023selective} for $K$-means clustering, \citet{bachoc2023selective} for convex clustering, and \citet{watanabe2021selective} for latent block models whose structure is determined by a clustering algorithm. 

In particular, 
\citet{gao2022selective} and \citet{chen2023selective} provide selective inference frameworks for testing for 
the difference in means between the cluster centers of a pair of clusters, which are chosen independently of the data from the $K$ estimated clusters. They consider the data generating distribution
where $n$ observations  $ X_i\in \mathbb{R}^q$ for $ i\in\{1,\ldots, n \}$ are generated independently as
\begin{align}\label{data_gen}
	X_i \sim  \mathcal{N}\left(\mu_i, \sigma^2 I_q\right)
\end{align}
with unknown $\mu_i\in \mathbb{R}^q$ and known $\sigma>0$. Let $\{\mathcal{C}_1,...,\mathcal{C}_K\}$ be a partition of \{1,...,n\}, where $\mathcal{C}_k\subset\{1,...,n\}$ denotes the set of indices of the observations $X_i$s that are assigned to the $k$th cluster. 
\cite{gao2022selective} and \cite{chen2023selective} 
 consider the null hypothesis that states 
\begin{align}\label{null:gao_chen}
        H_{0,\, \{k,k'\}}:\ \  \bar{\mu}_{\, \mathcal{C}_k}=\bar{\mu}_{\, \mathcal{C}_{k'}},
 \end{align}
where $k\neq k'\in \{1,...,K\},$ and $\bar{\mu}_{\mathcal{C}_k} = \sum_{i\in \mathcal{C}_k}\mu_i/|\mathcal{C}_k|$ denotes the mean of the cluster center of the $k$th cluster.

Several works have extended these works to other related settings. \cite{yun2023selective} study the case where the parameter $\sigma$ in \eqref{data_gen} is unknown, and \citet{gonzalez2023post} relax the assumption on the data generating process in \eqref{data_gen} by allowing for dependence across the observations. Furthermore, \citet{chen2023testing} develop a method for testing the null hypothesis analogous to \eqref{null:gao_chen} for a single feature, which is also studied in \citet{hivert2022post} but under different assumptions on the data. \citet{hivert2022post} additionally propose a method for testing for clusters that are in a specific arrangement with respect to each other. In this setting, they provide a method that combines the p-values produced by the test of \citet{gao2022selective} for the pairs of clusters involved. While this work also considers a test for multiple clusters, it differs from our proposed tests in that the latter can be applied to any collection of clusters.

In this work, we extend the work of both \cite{chen2023selective} and \cite{yun2023selective} by developing tests for multiple pairs of clusters for both known and unknown variances, where the clusters are produced by $K$-means clustering.

\subsection{Global null hypothesis for multiple pairs of clusters} 

Let $\mathcal{V}_{\mathrm{all}}=\{(k,k'): k< k'\in [K] \}$ denote the set of all possible index pairs out of integers 1 through $K,$ the number of clusters outputted by $K$-means clustering. We consider the null hypothesis that states  
\begin{align}\label{global_null}
    H_{0,\, \mathcal{V}}: \, \bar{\mu}_{\,\mathcal{C}_k} = \bar{\mu}_{\,\mathcal{C}_{k'}} \ \text{ for all }\,   (k,k')\in \mathcal{V}, 
\end{align}  
where $\mathcal{V}\subseteq \mathcal{V}_{\mathrm{all}}$ is defined as
the set of index pairs corresponding to the pairs of clusters to test for. Note that if $\mathcal{V}=\mathcal{V}_{\mathrm{all}},$ then the null hypothesis states that the cluster centers of all $K$ clusters have equal means, i.e., $H_{0,\mathcal{V}}:\bar{\mu}_{\mathcal{C}_1}=\cdots=\bar{\mu}_{\mathcal{C}_K}.$

The set $\mathcal{V}$ can be any subset of $\mathcal{V}_{\mathrm{all}},$ and the way in which it is chosen determines the corresponding inference procedure. 
\begin{itemize}
    \item 
If $\mathcal{V}$ is chosen from $\mathcal{V}_{\mathrm{all}}$ independently of the data, then the data-dependence in the null hypothesis exists only through the clustering procedure. In this setting, note that if $\mathcal{V}$ contains a single index pair $(k,k')$, then the null hypothesis in \eqref{global_null} reduces to the null hypothesis in \eqref{null:gao_chen} that tests for the pair of clusters $\mathcal{C}_k$ and $\mathcal{C}_{k'}.$ 
\item If the choice of $\mathcal{V}$ is data-dependent, then it introduces an additional selection event, which may lead to a lack of Type I error control if not accounted for. An example of such a data-dependent choice includes the selection of the pair of clusters  
that are the closest to each other among all pairs.  
\end{itemize} 
Our contributions  include the following:  
\begin{itemize}
    \item when the noise level $\sigma$ in \eqref{data_gen} is known, we provide a selective inference procedure for testing the null hypothesis in \eqref{global_null} for pre-specified $\mathcal{V},$ along with methods that address certain data-dependent choices of $\mathcal{V}$, and
    \item we develop analogous procedures for the case where $\sigma$ is unknown. 
\end{itemize} 
For each of the tests that we propose, we provide a p-value that can be computed exactly. 

An immediate test for the null hypothesis in \eqref{global_null} would be to combine the pairwise test of \citet{chen2023selective} with a correction for multiple comparisons. One such method for multiplicity adjustment is the Bonferroni correction, which is applicable to many settings due to the lack of distributional assumptions it makes, especially in this setting where the selective p-values may have complicated dependence structures; further discussion on the use of the Bonferroni correction in the context of testing for the null hypothesis in \eqref{global_null} can be found in Section \ref{sec:bonfcorr}. Our method differs from this testing procedure in that it is based on a single test statistic that combines signals across all pairs of clusters of interest. With both being valid tests that control the Type I error, it would be interesting to compare how the two methods perform in terms of power---we provide a simulation study in Section \ref{sec:simul_known_pre} that explores their empirical powers in different settings. 

The rest of the paper is organized as follows. In Section \ref{sec:Bon}, 
we review the pairwise test of \cite{chen2023selective}. In Sections \ref{sec:jointtest} and \ref{sec:group_dep}, we discuss the proposed tests for the null hypothesis in \eqref{global_null}, for the cases where the set $\mathcal{V}$ is pre-specified and chosen in a data-dependent way, respectively. In Section \ref{sec:var}, we propose analogous tests for the case of unknown $\sigma.$ Section \ref{sec:simul} presents a simulation study on the Type I error control and empirical powers of the proposed tests, followed by an application to a real data in Section \ref{sec:data}. All of the proofs can be found in the Appendix, and the codes for reproducing the empirical results are available at \url{https://github.com/yjyun97/cluster_inf_multiple}. 

Throughout the paper, we let $\|A\|_F$  denote the Frobenius norm of a matrix $A$, and let $\|\mathbf{a}\|_2$  denote the $\ell_2$-norm of a vector $\mathbf{a}$. For $r>0,$ we let $\chi_r$ denote the distribution of a random variable $\sqrt{Y}$ where $Y\sim \chi_r^2,$ the chi-squared distribution with $r$ degrees of freedom. For a null hypothesis $H_0$ and an event $\mathcal{A},$ $\mathbb{P}_{H_0}(\mathcal{A})$ denotes the probability of the event under $H_0$. {\color{black} In the case where a null hypothesis $H_0(f(X))$ depends on $f(X)$ for some function $f$ and random variable $X,$ $\mathbb{P}_{H_0(f(X))}(\mathcal{A}\mid f(X))$ denotes the conditional probability of an event $\mathcal{A}$ given $f(X)$ under $H_0(f(X))$---this notation appears in the statements of the theorems throughout this paper, which follow the style of those in \citet{yun2023selective}.} For a set $\mathcal{A}\subset \{1,...,m\}$ for some $m\in\mathbbm{N},$ we let $\mathbf{1}_{\mathcal{A}}\in\mathbb{R}^m$ denote a vector whose $i$th entry is 1 if $i\in \mathcal{A}$ and $0$ otherwise; likewise, $I_{\mathcal{A}}$ denotes a diagonal matrix whose $i$th diagonal entry is 1 if $i\in \mathcal{A}$ and 0 otherwise. For any $m\in \mathbb{N},$ $I_m$ and $\mathbf{1}_m$ denote an identity matrix in $\mathbb{R}^{m\times m}$ and a vector of 1s in $\mathbb{R}^m,$ respectively, and for $m,n\in \mathbb{N},$ $0_{m\times n}$ denotes a matrix of 0s and $0_m$ a vector of 0s. For a set $\mathcal{A},$ $|\mathcal{A}|$ denotes its cardinality, and for a matrix $A,$ $A_i$ denotes its $i$th row and $A^j$ its jth column. Finally, given a positive integer $m$, we define $[m]=\{1,\ldots,m\}$. 

\section{Pairwise test of \texorpdfstring{\cite{chen2023selective}}{chen}}\label{sec:Bon}
We first review the method of \cite{chen2023selective} for testing the null hypothesis in \eqref{null:gao_chen}, where $K$-means clustering is used for generating the cluster assignments. Let $X=[X_1\cdots X_n]^{\top} \in \mathbb{R}^{n\times q}$ denote the matrix consisting of the $n$ observations and
 $\bar{X}_{\mathcal{C}_k} \coloneqq \sum_{i\in \mathcal{C}_k}X_i/|\mathcal{C}_k|$ the cluster center of the $k$th cluster. Further define $\mathbf{v}_{k,k'} \coloneqq\frac{1}{|\mathcal{C}_{k}|}\mathbf{1}_{\mathcal{C}_{k}}-\frac{1}{|\mathcal{C}_{k'}|}\mathbf{1}_{\mathcal{C}_{k'}}$ and  $P_{\mathbf{v}_{k,k'}}:= \frac{\mathbf{v}_{k,k'}\mathbf{v}_{k,k'}^\top}{\|\mathbf{v}_{k,k'}\|_2^2},$ which
denotes the projection matrix that projects onto the span of $\mathbf{v}_{k,k'}.$ \cite{chen2023selective} propose the test statistic 
\begin{align}\label{eq:tkkprime}
	T_{\sigma, k,k'}\coloneqq \left\|\bar{X}_{\mathcal{C}_k}- \bar{X}_{\mathcal{C}_{k'}}\right\|_2 /\|\mathbf{v}_{k,k'}\|_2\sigma =\|P_{\mathbf{v}_{k,k'}}X\|_F/\sigma,
\end{align}  
which is the scaled distance between the cluster centers of the $k$th and $k'$th clusters. 

To account for the selection event, they condition on the clustering outcome---henceforth, let $\mathcal{C}(A)$ denote the outcome of $K$-means clustering run on the rows of a matrix $A\in \mathbb{R}^{n\times q},$ where the outcome refers to the cluster assignments generated in every iteration of the algorithm. They further condition on components of $X$ that are independent of $T_{\sigma, k, k'}$ 
in order to derive a p-value that can be computed. They thus provide the selective p-value  
$p_{\sigma, k,k'} = 1 - Q_{\sigma, k,k'} (T_{\sigma, k,k'}),$
where $ Q_{\sigma,k,k'}$ is the conditional CDF (cumulative distribution function) of 
$T_{\sigma,k,k'}$ given
\begin{align*}
    \mathcal{C}(X)\hspace{1em}\text{and}\hspace{1em}Z_{k,k'}\coloneqq\left(\frac{P_{\mathbf{v}_{k,k'}}X}{\|P_{\mathbf{v}_{k,k'}}X\|_F},\ \ P_{\mathbf{v}_{k,k'}}^\perp X\right),
\end{align*}
where $P_{\mathbf{v}_{k,k'}}^{\perp}:=I_n-P_{\mathbf{v}_{k,k'}}.$ They show that the p-value $p_{\sigma, k,k'},$ conditioned on $\mathcal{C}(X),$ is uniformly distributed under the null hypothesis in \eqref{null:gao_chen}; specifically, they derive that the distribution function $Q_{\sigma,k,k'}$ equals $F_{\chi_q}(\cdot; \mathcal{S}_{\sigma,k,k'}),$
which represents the CDF of the $\chi_q$ distribution truncated to the set 
\begin{align*}
    \mathcal{S}_{\sigma, k,k'}\coloneqq\{\psi \geq 0: \mathcal{C}(X)=\mathcal{C}(x_{\sigma,k,k'}(\psi))\},
\end{align*}
where  $x_{\sigma,k,k'}: [0, \infty) \rightarrow \mathbb{R}^{n\times q},$ 
\begin{align}\label{eq:decomp}
	x_{\sigma,k,k'}(\psi)=\psi \frac{\sigma P_{\mathbf{v}_{k,k'}}X}{\|P_{\mathbf{v}_{k,k'}}X\|_F}+P^\perp_{\mathbf{v}_{k,k'}}X. 
\end{align} 
In words, $\mathcal{S}_{\sigma, k,k'}$ is the set of realizations of the test statistic for which the $K$-means algorithm yields identical outcomes in every step of the algorithm to those of the algorithm run on $X.$  

\cite{chen2023selective} derive an explicit characterization of the truncation set $\mathcal{S}_{\sigma, k,k'}$ for the clusters produced by the $K$-means algorithm, specifically the standard Lloyd's algorithm (\citet{lloyd1982least}). Details of the algorithm can be found in \citet[Section 2.1]{chen2023selective}. 

To present \cite{chen2023selective}'s characterization of $\mathcal{S}_{\sigma, k,k'},$ we first define relevant notations that have been modified from those of \cite{chen2023selective}. Suppose that $K$-means clustering is applied to the rows of $A\in \mathbb{R}^{n\times q}$ for $J$ iterations. We let $m_l^{(0)}(A)$ denote the $l$th cluster center determined at the initialization step of the $K$-means algorithm, and for each $j\in \{0\}\cup [J]$, where $j=0$ represents the initialization step, we let $c_i^{(j)}(A)\in [K]$ denote the cluster assignment of the $i$th row of $A$ at $j$th step of the algorithm. Further define  
\begin{align*}
    \hspace{1em}w_{l,s}^{(j-1)}=\frac{\mathbf{1}\left\{c_s^{(j-1)}(X) =l\right\}}{\sum_{i=1}^n\mathbf{1}\left\{c_{i}^{(j-1)}(X) =l\right\}}
    \hspace{1em}\text{and}\hspace{1em}M^{(j-1)}_l (A)= \sum_{s=1}^n w_{l,s}^{(j-1)}A_{s}
\end{align*}
for $l\in[K]$ and $j\in[J],$ where $M_l^{(j-1)}(A)$ represents the average of the rows of $A$ in the $l$th cluster, where the clusters are defined by the output of the $j-1$th iteration of $K$-means clustering run on $X.$ For each $i\in[n],$ $l\in[K],$ and $j\in\{0\}\cup [J],$ define the function $\mathcal{S}_{i,l,j}:\mathcal{F}([0,\infty),\mathbb{R}^{n\times q})\rightarrow \mathcal{P}([0,\infty)),$ where $\mathcal{F}([0,\infty),\mathbb{R}^{n\times q})$ and $\mathcal{P}([0,\infty))$ denote the set of functions from $[0,\infty)$ to $\mathbb{R}^{n\times q}$ and  the power set of $[0,\infty),$ respectively, 
\begin{align*}
\mathcal{S}_{ i, l, j}(y)=\begin{cases}\{\psi\geq 0:\, \| [y(\psi)]_i-m^{(0)}_{c_i^{(0)}(X)}(y(\psi))\|^2_2\leq \|[y(\psi)]_i-m^{(0)}_{l}(y(\psi))\|^2_2\}&\text{ if }j=0\\[15pt]
\{\psi\geq 0:\, \|[y(\psi)]_i- M^{(j-1)}_{c_i^{(j)}(X)}(y(\psi))\|^2 _2 
\leq \|[y(\psi)]_i- M^{(j-1)}_l(y(\psi)) \|^2_2\}&\text{ if }j\in [J].
\end{cases} 
\end{align*}
Here, and throughout the rest of the paper, $[A]_i$ for a matrix $A$ and $i\in[n]$ is used interchangeably with $A_i$ for the clarity of notations. \citet{chen2023selective} show that 
\begin{align} \label{eq:charac} 
\mathcal{S}_{\sigma, k,k'}=\bigcap_{j=0}^J\bigcap_{l=1}^K \bigcap_{i=1}^n 	\mathcal{S}_{ i, l, j} (x_{\sigma,k,k'}). 
\end{align}
The authors further show that  
each $\mathcal{S}_{ i, l, j} (x_{\sigma, k,k'})$ is the set of solutions to a quadratic inequality, and thus $\mathcal{S}_{\sigma, k,k'}$ in \eqref{eq:charac} is the set of solutions to a system of 
$nK(J+1)$ quadratic inequalities.  

\section{Proposed test for $H_{0,\mathcal{V}}$ with pre-specified $\mathcal{V}$}\label{sec:jointtest}

Building on the work of \cite{chen2023selective}, we develop a test for the null hypothesis in \eqref{global_null}. Throughout this section, we consider the setting where $\sigma$ is known and $\mathcal{V}$ is pre-specified.

\subsection{Baseline testing procedure for $H_{0,\mathcal{V}}$}\label{sec:bonfcorr} 

Before presenting the proposed method, we first discuss a baseline test for the null hypothesis in \eqref{global_null} that combines the pairwise test of \cite{chen2023selective} reviewed in Section \ref{sec:Bon} with the Bonferroni correction. We consider the Bonferroni adjusted p-value given by
$p_{\sigma, \textnormal{Bon}} =\min_{(k,k')\in \mathcal{V}}\{|\mathcal{V}|\cdot p_{\sigma, k,k'},\ 1\},$ which, conditioned on $\mathcal{C}(X),$ follows the super-uniform distribution. As an illustration of the test's Type I error control, Figure \ref{figure_1} presents a QQ plot of a sample of $p_{\sigma,\mathrm{Bon}}$ generated under the null hypothesis in \eqref{global_null} with $\mathcal{V}=\mathcal{V}_{\mathrm{all}},$ i.e.,  $H_{0,\mathcal{V}}:\bar{\mu}_{\mathcal{C}_1}=\cdots = \bar{\mu}_{\mathcal{C}_K},$ for $K\in\{3, 5,7\}.$ The test, as expected, controls the Type I error, and we see that it becomes more conservative as $K$ increases. More details on the simulation settings of Figure \ref{figure_1} can be found in Section \ref{sec:simul_known_pre}.
\begin{figure}[ht]
    \centering
    {{\includegraphics[width=0.4\textwidth]{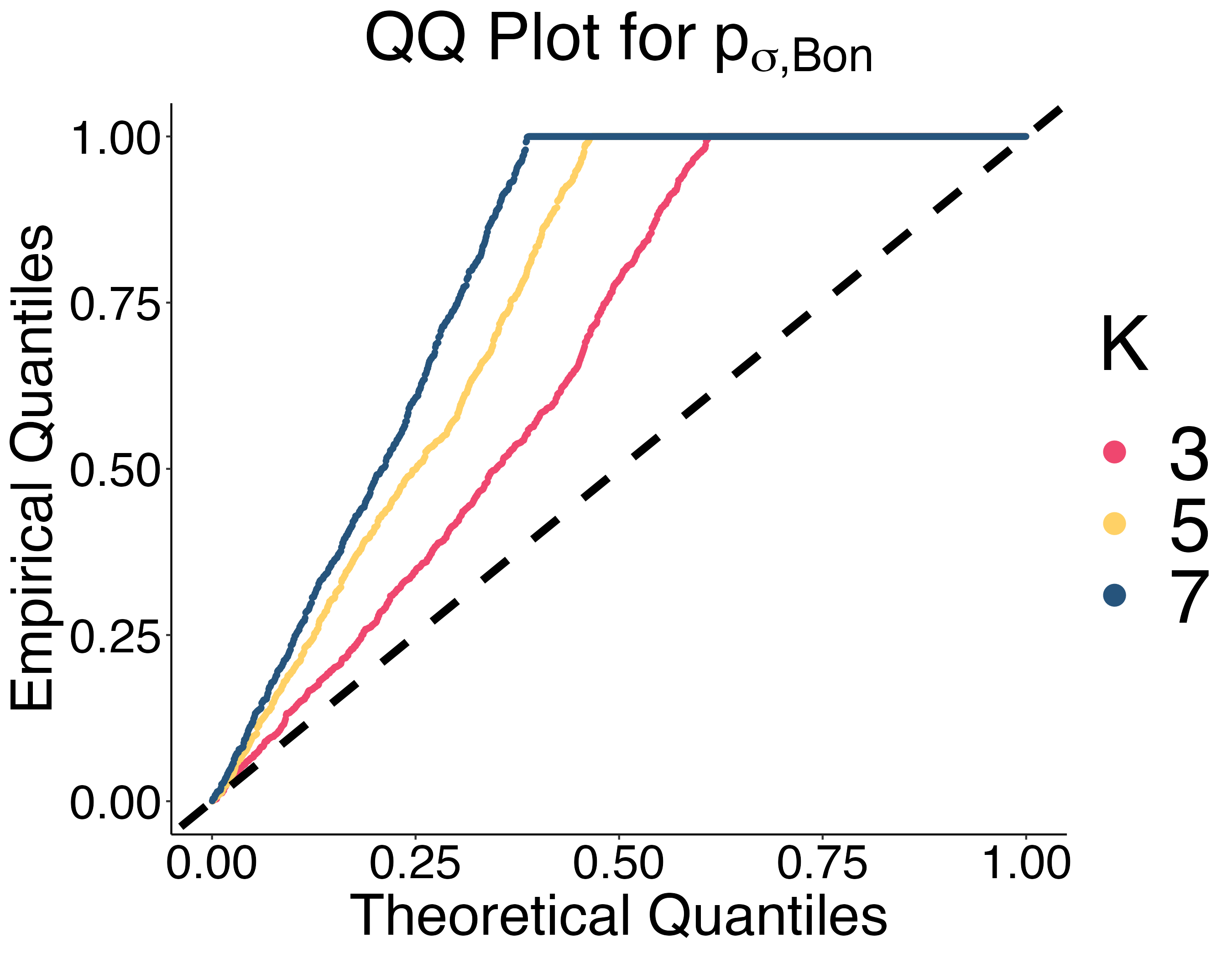}}}
    \caption{QQ plot of a 
sample of $p_{\sigma, \mathrm{Bon}}$ generated under the null hypothesis in \eqref{global_null} with $\mathcal{V}=\mathcal{V}_{\mathrm{all}}.$ }\label{figure_1}
\end{figure}

\subsection{Proposed method}\label{sec:proposed} 

We now present the proposed test for the null hypothesis in \eqref{global_null}. Define $P_{\mathcal{E}}$ as the matrix of the orthogonal projection that projects onto the space $\mathcal{E}\coloneqq \mathrm{span}\{\mathbf{v}_{k,k'}: \ (k,k')  \in \mathcal{V} \}.$ We propose the test statistic
\begin{align}\label{eq:tsigma}
T_{\sigma}=\frac{\|P_{\mathcal{E}}X\|_F}{\sigma}, 
\end{align}
which captures signals across all pairs of clusters of interest. Building on the work of \cite{gao2022selective} and \cite{chen2023selective}, we consider the decomposition of $X,$
\begin{align*}
X=T_{\sigma}\cdot \frac{\sigma P_{\mathcal{E}}X}{\|P_{\mathcal{E}}X\|_F}+ P_{\mathcal{E}}^\perp X,
\end{align*}
where $P_{\mathcal{E}}^{\perp}:= I_n - P_{\mathcal{E}}.$ With this decomposition in mind, we define the p-value as 
 $p_{\sigma } = 1-Q_{\sigma}(T_{\sigma }),$ where $Q_{\sigma}$ is the conditional CDF of $T_{\sigma}$ given 
\begin{align} 
\mathcal{C}(X)\hspace{1em}\text{and}\hspace{1em} Z\coloneqq\left(\frac{P_{\mathcal{E}}X}{\|P_{\mathcal{E}}X\|_F},\ \ P_{\mathcal{E}}^{\perp}X\right).\label{eq:condsetg}
\end{align}
Theorem \ref{thm:main_thm} provides an explicit form of the distribution function $Q_{\sigma}.$

\begin{theorem}\label{thm:main_thm}
Under the null hypothesis $H_{0, \mathcal{V}}$ in \eqref{global_null}, 
\begin{align*}
Q_{\sigma}=F_{\chi_{d}}\!\left(\cdot ;\,  \mathcal{S}_\sigma  \right),
\end{align*}
where $d= q\cdot \mathrm{dim}(\mathcal{E})$, and $F_{\chi_{d}}(\cdot;\,  \mathcal{S}_\sigma )$ 
 represents the CDF of the
 $\chi_{d}$ distribution truncated to the set 
\begin{align*}
	\mathcal{S}_\sigma \coloneqq \{\psi \geq 0: \mathcal{C}(X)=\mathcal{C}(x_\sigma (\psi))\}, \notag
\end{align*} 
where $x_\sigma:[0,\infty)\rightarrow \mathbb{R}^{n\times q},$
\begin{align*}
	x_\sigma(\psi)=\psi \, \cdot \, \frac{\sigma P_{\mathcal{E}}X}{\|P_{\mathcal{E}}X\|_F}+ P_{\mathcal{E}}^\perp X. 
\end{align*}  
Furthermore, $\mathbb{P}_{H_{0,\mathcal{V}}}(p_\sigma\leq \alpha \mid \mathcal{C}(X))=\alpha.$
\end{theorem}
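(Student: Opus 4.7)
My plan follows the template of \cite{gao2022selective} and \cite{chen2023selective}, with the observation that the global null is exactly the condition that kills the mean of the test statistic's ``numerator'' $P_{\mathcal{E}}X$. Concretely, I would first verify that under $H_{0,\mathcal{V}}$, $P_{\mathcal{E}}\mu = 0$, where $\mu \in \mathbb{R}^{n\times q}$ has rows $\mu_i$. This reduces to showing that each column $\mu^j$ is orthogonal to every $\mathbf{v}_{k,k'}$ with $(k,k')\in\mathcal{V}$, but $\mathbf{v}_{k,k'}^\top \mu^j = (\bar{\mu}_{\mathcal{C}_k})_j - (\bar{\mu}_{\mathcal{C}_{k'}})_j$, which vanishes by the definition of $H_{0,\mathcal{V}}$. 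Since $\mathcal{E} = \mathrm{span}\{\mathbf{v}_{k,k'}:(k,k')\in\mathcal{V}\}$, this column-wise orthogonality is exactly $P_{\mathcal{E}}\mu = 0$.

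Next I would determine the joint law of $T_\sigma$ and the conditioning statistic $Z$ under the null. Because the columns of $X$ are independent with $X^j \sim \mathcal{N}(\mu^j,\sigma^2 I_n)$, each column of $P_{\mathcal{E}}X$ is an independent $\mathcal{N}(0,\sigma^2 P_{\mathcal{E}})$ vector under $H_{0,\mathcal{V}}$, while $P_{\mathcal{E}}^\perp X$ is independent of $P_{\mathcal{E}}X$ by the usual Gaussian-orthogonality argument. Summing column-wise, $T_\sigma^2 = \|P_{\mathcal{E}}X\|_F^2/\sigma^2 \sim \chi^2_{d}$ with $d = q\cdot\dim(\mathcal{E})$. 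By rotational invariance of the isotropic Gaussian on $\mathcal{E}\otimes\mathbb{R}^q$, the direction $P_{\mathcal{E}}X/\|P_{\mathcal{E}}X\|_F$ is uniformly distributed on the unit sphere of that $d$-dimensional subspace and is independent of the length $T_\sigma$. Combining, $T_\sigma$ is independent of $Z$ under $H_{0,\mathcal{V}}$, and $T_\sigma\sim \chi_d$.

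Once the independence is in place, I would rewrite the conditioning event. The decomposition $X = x_\sigma(T_\sigma)$ shows that $X$ is a deterministic function of $(T_\sigma, Z)$; consequently, for any realization of $Z$, the event $\{\mathcal{C}(X) = \mathcal{C}_{\mathrm{obs}}\}$ translates into the event $\{T_\sigma \in \mathcal{S}_\sigma\}$, where $\mathcal{S}_\sigma = \{\psi\geq 0:\mathcal{C}(X)=\mathcal{C}(x_\sigma(\psi))\}$ is itself $Z$-measurable (after we have conditioned on $Z$, the function $x_\sigma$ is fully determined). Conditioning on $Z$ first (which does not change the law of $T_\sigma$) and then further conditioning on $\mathcal{C}(X)$, one obtains that the conditional distribution of $T_\sigma$ given $(\mathcal{C}(X),Z)$ is the $\chi_d$ distribution truncated to $\mathcal{S}_\sigma$. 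Hence $Q_\sigma = F_{\chi_d}(\cdot;\mathcal{S}_\sigma)$.

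For the uniformity claim, the probability integral transform gives that $Q_\sigma(T_\sigma)$ is $\mathrm{Unif}(0,1)$ conditional on $(\mathcal{C}(X),Z)$, so $p_\sigma = 1-Q_\sigma(T_\sigma)$ is as well. Marginalising over $Z$ while keeping $\mathcal{C}(X)$ fixed preserves uniformity, yielding $\mathbb{P}_{H_{0,\mathcal{V}}}(p_\sigma\leq\alpha\mid \mathcal{C}(X)) = \alpha$. The main technical point I expect to spell out carefully is the step that identifies conditioning on $\mathcal{C}(X)$ with restricting $T_\sigma$ to $\mathcal{S}_\sigma$; the Gaussian independence and chi-distribution calculations are standard, but the argument that $\mathcal{S}_\sigma$ is $\sigma(Z)$-measurable and that $T_\sigma$ remains independent of this $\sigma$-algebra before further conditioning on clustering output is the one that needs the cleanest articulation.
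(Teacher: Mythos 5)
Your proposal is correct and follows essentially the same route as the paper: show $P_{\mathcal{E}}\mu=0$ under $H_{0,\mathcal{V}}$ (the paper's Lemma on $\mu^\top\mathbf{e}=0_q$ for $\mathbf{e}\in\mathcal{E}$), deduce $T_\sigma\sim\chi_d$ and $T_\sigma\indep Z$ via Gaussian orthogonality and the norm--direction independence, translate conditioning on $\mathcal{C}(X)$ given $Z$ into truncation of $T_\sigma$ to $\mathcal{S}_\sigma$, and conclude uniformity by the probability integral transform and iterated expectation. The only cosmetic difference is that the paper proves the more general data-dependent-$\mathcal{V}$ version (its Theorem on $Q_{\sigma,J}$) and obtains this statement as the special case of pre-specified $\mathcal{V}$.
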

It remains to characterize the selection event $\mathcal{S}_\sigma$ in Theorem \ref{thm:main_thm}. It follows from the characterization of $S_{\sigma, k,k'}$ of \citet{chen2023selective} discussed in Section \ref{sec:Bon} that 
\begin{align} \label{eq:charac2}
	\mathcal{S}_\sigma = \bigcap_{j=0}^J\bigcap_{l=1}^K\bigcap_{i=1}^n  	\mathcal{S}_{ i, l, j} (x_\sigma).
\end{align} 
By the definitions of the functions $\mathcal{S}_{ i, l, j}$ for $i\in[n],l\in[K],$ and $j\in\{0\}\cup [J]$ given above \eqref{eq:charac}, each 
 $\mathcal{S}_{i,l,j}(x_{\sigma})$ is characterized by inequalities of the form \begin{align}\label{eq:form1}
      \|{ [x_{\sigma}(\psi)]_i-[x_{\sigma}(\psi)]_{i'}}\|^2_2\leq \|{[x_{\sigma}(\psi)]_i-[x_{\sigma}(\psi)]_{i^{''}}}\|^2_2
 \end{align} 
if $j=0$ and those of the form 
\begin{align}\label{eq:form2}
    \|[x_{\sigma}(\psi)]_i-M_{l}^{(j-1)}(x_{\sigma}(\psi) ) \|_2^2 \leq \|[x_{\sigma}(\psi)]_i-M_{l'}^{(j-1)}(x_{\sigma}(\psi)) \|_2^2
\end{align} 
if $j\in [J],$
where $i',i''\in [n]$ and $l'\in[K].$
Proposition \ref{prop:quadratic} below, analogous to \citet[Lemma 2]{gao2022selective} and \citet[Lemma 2]{chen2023selective}, states that these inequalities can be written as quadratic inequalities in $\psi.$
\begin{proposition}
\label{prop:quadratic}
Let $D=\sigma{P_\mathcal{E}X}/{\|P_\mathcal{E}X\|_F}$ and $E=P_\mathcal{E}^\perp X.$ For all $i,i'\in [n],$ $l\in[K],$ and $j\in [J],$ 
\begin{align*}
    \norm{[x_\sigma(\psi)]_i-[x_\sigma(\psi)]_{i'}}_2^2=&~\lambda_{ii', 1}\psi^2+\lambda_{ii', 2}\psi+\lambda_{ii', 3}\text{ and} \\
    \norm{[x_\sigma(\psi)]_i-M_l^{(j-1)}(x_\sigma(\psi)) }_2^2=&~\lambda_{ilj,1}\psi^2+\lambda_{ilj,2}\psi+\lambda_{ilj,3}, 
\end{align*} 
with coefficients 
\begin{align*}
    \lambda_{ii',1}&=\|\mathbf{d}_{ii'}\|_2^2,  & \lambda_{ii', 2}&=2\langle{\mathbf{d}_{ii'}, \mathbf{e}_{ii'}}\rangle,  &   \lambda_{ii',3}&=\|\mathbf{e}_{ii'}\|_2^2, \\ 
    \lambda_{ilj,1}&=\|\mathbf{d}_{ilj}\|_2^2, &
    \lambda_{ilj,2}&=2\langle{\mathbf{d}_{ilj}, \mathbf{e}_{ilj}}\rangle, 
  & \lambda_{ilj,3}&=\|\mathbf{e}_{ilj}\|_2^2,
\end{align*} 
where $\mathbf{d}_{ii'}=D_i-D_{i'},$ $\mathbf{e}_{ii'}=E_i-E_{i'},$ $\mathbf{d}_{ilj}=D_i-M_l^{(j-1)}(D),$ and $\mathbf{e}_{ilj}=E_i-M_l^{(j-1)}(E).$
\end{proposition}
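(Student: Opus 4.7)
The plan is to prove Proposition~\ref{prop:quadratic} by direct substitution, exploiting the fact that $x_\sigma(\psi) = \psi D + E$ is affine in $\psi$, and that the cluster-center map $A \mapsto M_l^{(j-1)}(A)$ is linear in $A$. Once these two observations are in hand, both identities reduce to expanding a squared Euclidean norm of an affine function of $\psi$.

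First I would note that by the definition of $x_\sigma$ and the choice $D = \sigma P_{\mathcal{E}}X/\|P_{\mathcal{E}}X\|_F$, $E = P_{\mathcal{E}}^\perp X$, we have $[x_\sigma(\psi)]_i = \psi D_i + E_i$ for every $i \in [n]$. For the first identity, subtracting gives $[x_\sigma(\psi)]_i - [x_\sigma(\psi)]_{i'} = \psi\,\mathbf{d}_{ii'} + \mathbf{e}_{ii'}$, and expanding the squared $\ell_2$-norm yields the quadratic $\|\mathbf{d}_{ii'}\|_2^2\,\psi^2 + 2\langle \mathbf{d}_{ii'}, \mathbf{e}_{ii'}\rangle\,\psi + \|\mathbf{e}_{ii'}\|_2^2$, matching the stated coefficients.

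For the second identity, the key observation is that the weights $w_{l,s}^{(j-1)}$ depend only on the $K$-means iterates on the fixed data $X$ (not on the argument of $M_l^{(j-1)}$), so the map $A \mapsto M_l^{(j-1)}(A) = \sum_s w_{l,s}^{(j-1)} A_s$ is linear. Applying this to $A = x_\sigma(\psi) = \psi D + E$ gives $M_l^{(j-1)}(x_\sigma(\psi)) = \psi\,M_l^{(j-1)}(D) + M_l^{(j-1)}(E)$, so
\begin{equation*}
[x_\sigma(\psi)]_i - M_l^{(j-1)}(x_\sigma(\psi)) = \psi\bigl(D_i - M_l^{(j-1)}(D)\bigr) + \bigl(E_i - M_l^{(j-1)}(E)\bigr) = \psi\,\mathbf{d}_{ilj} + \mathbf{e}_{ilj}.
\end{equation*}
Squaring and expanding as before produces the claimed $\lambda_{ilj,1}\psi^2 + \lambda_{ilj,2}\psi + \lambda_{ilj,3}$.

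There is no real obstacle here; the only conceptual point that needs emphasis is that the weights $w_{l,s}^{(j-1)}$ are fixed (they encode the clustering history on $X$) and hence $M_l^{(j-1)}$ acts linearly on its matrix argument. With that in place the whole proposition is just expansion of two squared norms, so the write-up can be kept to a few lines.
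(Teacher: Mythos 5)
Your proposal is correct and follows essentially the same argument as the paper: write $x_\sigma(\psi)=\psi D+E$, use that $M_l^{(j-1)}$ acts linearly (since the weights $w_{l,s}^{(j-1)}$ are fixed by the clustering of $X$), and expand the squared $\ell_2$-norm of the resulting affine expression in $\psi$. The paper carries out the linearity step by directly expanding $\sum_{s}w_{l,s}^{(j-1)}(\psi D_s+E_s)$, which is exactly the observation you make explicitly.
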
 
It then follows from Theorem \ref{thm:main_thm} and Proposition \ref{prop:quadratic} that $p_\sigma$ can be computed exactly. 

\begin{remark}
The null hypothesis $H_{0,\mathcal{V}}$ reduces to the null hypothesis $H_{0,\{k,k'\}}$ if  $\mathcal{V}=\{(k,k')\}.$ In this setting, $p_\sigma,$ $T_{\sigma},$ and $S_\sigma$ reduce to $p_{\sigma, k,k'},$ $T_{\sigma, k,k'},$ and $S_{\sigma, k,k'},$ respectively, and thus 
the proposed test generalizes the pairwise test of \citet{chen2023selective}.
\end{remark}

\begin{remark}\label{rem:invariance}
Different choices of $\mathcal{V}$ can lead to equivalent statements of the null hypothesis $H_{0,\mathcal{V}}.$ For example, two different choices $\mathcal{V}_1$ and $\mathcal{V}_2$ lead to the same statement of $H_{0,\mathcal{V}}$ if the spans of $\{\mathbf{v}_{k,k'}: \ (k,k')  \in \mathcal{V}_1 \}$ and $\{\mathbf{v}_{k,k'}: \ (k,k') \in \mathcal{V}_2\}$ are equal. Under this condition, the proposed test remains the same since $T_\sigma$ is invariant to different specifications of $\mathcal{V}$ that correspond to the same set $\mathcal{E}.$ The baseline testing procedure discussed in Section \ref{sec:bonfcorr}, however, differs since different choices of $\mathcal{V}$ lead to different tests. The reason is that $p_{\sigma, \textnormal{Bon}}$ depends on both the cardinality of $\mathcal{V}$ and the specific pairs of clusters chosen for the pairwise test of \cite{chen2023selective}---Section \ref{sec:simul_known_pre} presents a comparison of empirical power of the baseline testing procedure for different specifications of $\mathcal{V}$ associated with the same set $\mathcal{E}.$
\end{remark}

\section{Proposed test for $H_{0,\mathcal{V}}$ with data-dependent choice of $\mathcal{V}$} \label{sec:group_dep}

In many applications, researchers may use the clustering outcomes to choose the pairs of clusters to test for. For instance, one may choose the index pair $(k,k')$ that corresponds to the two clusters whose cluster centers are the farthest (or the closest) from each other (or to each other) and then test for the null hypothesis in \eqref{global_null} with $\mathcal{V}=\{(k,k')\}$
to decide if the clusters $\mathcal{C}_k$ and $\mathcal{C}_{k'}$ indeed stem from underlying differences. In such cases, $\mathcal{V}\subset \mathcal{V}_{\mathrm{all}}$ is chosen in a data-dependent way, and the way in which it is chosen defines the additional selection event. Depending on how $\mathcal{V}$ is chosen, not accounting for the selection event may lead to a lack of Type I error control. The left-hand plot of Figure \ref{figure_2} presents a QQ plot of a sample of $p_{\sigma}$ generated under the null hypothesis in \eqref{global_null} with $\mathcal{V}=\{(k,k')\}$ that corresponds to the pair of clusters that are the farthest apart among the $K=20$ clusters. It illustrates that the distribution of $p_{\sigma}$ deviates from the uniform distribution, in such a way that the test based on $p_\sigma$ fails to control the Type I error. However, not accounting for the data-dependence in the selection of $\mathcal{V}$ may not always lead to a lack of Type I error control, as we see in the right-hand plot of Figure \ref{figure_2}, whose settings are analogous to those of the left-hand plot except that the pair of clusters that are the closest to each other is chosen. More discussion on this phenomenon can be found in Remark \ref{remark_dep}, and additional details on the simulation settings of Figure \ref{figure_2} can be found in Section \ref{sec:simul_known_dep}.

\begin{figure}[ht]
    \centering
    {{\includegraphics[width=0.6\textwidth]{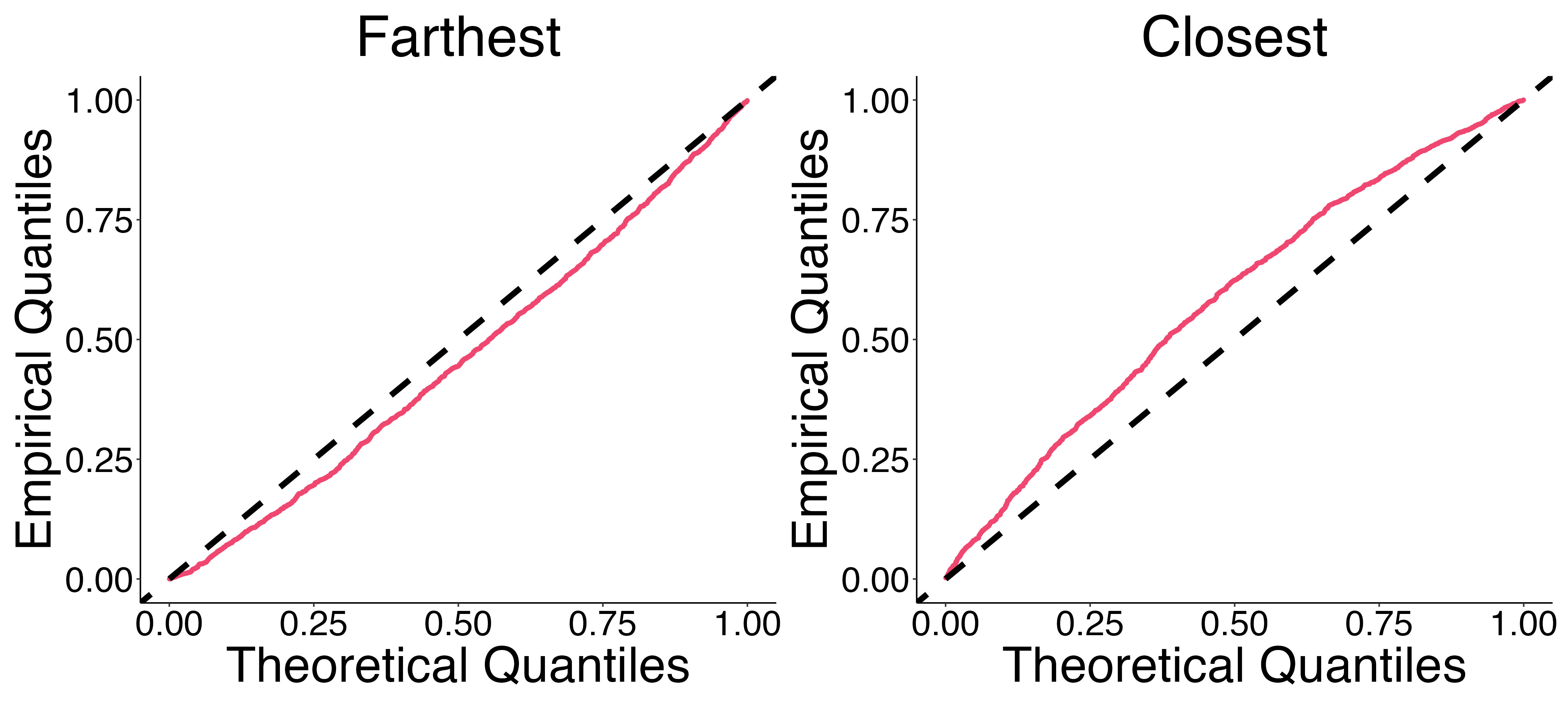}}}
    \caption{QQ plots of samples of $p_{\sigma}$ generated under the null hypothesis in \eqref{global_null} with two different data-dependent choices of $\mathcal{V}.$ }\label{figure_2}
\end{figure}

In this section, we propose a test for the null hypothesis in \eqref{global_null} where $\mathcal{V}$ is chosen in a data-dependent way, for specific forms of data dependence. We adapt the test presented in Section \ref{sec:proposed} to additionally account for the data-dependence in the choice of $\mathcal{V}.$ Henceforth, let $\mathcal{V}(A)$ for a matrix $A\in\mathbb{R}^{n\times q}$ denote the outcome of the procedure that selects the set $\mathcal{V}$ based on $A.$ We consider the same test statistic $T_\sigma$ and define the p-value as $p_{\sigma, J} = 1-Q_{\sigma,J}(T_{\sigma})$, where  $ Q_{\sigma,J}$ is the conditional CDF of $T_{\sigma}$ given 
\begin{align*}
    \mathcal{C}(X),\hspace{1em}\mathcal{V}(X),\hspace{1em}\text{and}\hspace{1em}Z,
\end{align*}
where $Z$ is as defined in \eqref{eq:condsetg}. Theorem \ref{thm:group_dep} below, analogous to Theorem \ref{thm:main_thm}, provides an explicit form of the distribution function $Q_{\sigma,J}.$ 

\begin{theorem}\label{thm:group_dep}
Under the null hypothesis $H_{0,\mathcal{V}}$ in \eqref{global_null}, 
\begin{align*}
Q_{\sigma,J}=F_{\chi_{d}}\!\left(\cdot ;\,  \mathcal{S}_{\sigma, J}  \right),
\end{align*}
 where $d$ is as defined in Section \ref{sec:proposed}, 
and $F_{\chi_{d}}(\cdot;\,  \mathcal{S}_{\sigma, J} )$ 
 represents the CDF of the 
 $\chi_{d}$ distribution truncated to the set 
 $\mathcal{S}_{\sigma, J} = \mathcal{S}_{\sigma} \cap \mathcal{S}_{\sigma, \mathcal{V}},$ where
 \begin{align*}
     \mathcal{S}_{\sigma, \mathcal{V}}\coloneqq   \{\psi \geq 0: \, \mathcal{V}(X)=\mathcal{V}(x_\sigma(\psi))\}. 
 \end{align*}
Furthermore, $\mathbb{P}_{H_{0,\mathcal{V}}}(p_{\sigma,J}\leq \alpha \mid \mathcal{C}(X),\mathcal{V}(X))=\alpha.$
\end{theorem}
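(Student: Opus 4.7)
The plan is to adapt the argument underlying Theorem \ref{thm:main_thm} to accommodate the additional selection event arising from $\mathcal{V}(X)$. The key observation, already central in Theorem \ref{thm:main_thm}, is that given $Z$ the full data matrix is recoverable as $X = x_\sigma(T_\sigma)$. Hence any function of $X$ alone, in particular both $\mathcal{C}(X)$ and $\mathcal{V}(X)$, agrees with the corresponding function of $x_\sigma(T_\sigma)$ on the conditioning event, and conditioning on their observed values simply restricts $T_\sigma$ to a subset of $[0,\infty)$.

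First I would re-derive the marginal law of $T_\sigma$ under $H_{0,\mathcal{V}}$. The condition $\bar{\mu}_{\mathcal{C}_k}=\bar{\mu}_{\mathcal{C}_{k'}}$ for every $(k,k')\in\mathcal{V}$ is equivalent to $\mathbf{v}_{k,k'}^\top \mu = 0$ for the mean matrix $\mu=\mathbb{E}[X]$, so every column of $\mu$ lies in $\mathcal{E}^\perp$ and $P_{\mathcal{E}}\mu=0$. Writing $X=\mu+\epsilon$ with the columns of $\epsilon$ i.i.d.\ $\mathcal{N}(0,\sigma^2 I_n)$, one has $P_{\mathcal{E}}X=P_{\mathcal{E}}\epsilon$, whose squared Frobenius norm divided by $\sigma^2$ is $\chi^2_{q\cdot \mathrm{dim}(\mathcal{E})}=\chi^2_d$, giving $T_\sigma\sim \chi_d$. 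The spherical symmetry of each column's distribution on $\mathcal{E}$ gives independence of $T_\sigma$ from $P_{\mathcal{E}}X/\|P_{\mathcal{E}}X\|_F$, and orthogonality under joint Gaussianity gives independence of $P_{\mathcal{E}}X$ from $P_{\mathcal{E}}^\perp X$; together $T_\sigma$ is independent of $Z$.

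With this independence in hand, the conditional distribution of $T_\sigma$ given $Z$ remains $\chi_d$. Further conditioning on $\mathcal{C}(X)$ and $\mathcal{V}(X)$ is, via the identity $X = x_\sigma(T_\sigma)$, equivalent to restricting $T_\sigma$ to
\[
\{\psi\geq 0:\mathcal{C}(X)=\mathcal{C}(x_\sigma(\psi))\}\cap \{\psi\geq 0:\mathcal{V}(X)=\mathcal{V}(x_\sigma(\psi))\} = \mathcal{S}_\sigma\cap \mathcal{S}_{\sigma,\mathcal{V}} = \mathcal{S}_{\sigma, J}.
\]
Hence the conditional CDF of $T_\sigma$ given $(\mathcal{C}(X),\mathcal{V}(X),Z)$ is $F_{\chi_d}(\cdot;\,\mathcal{S}_{\sigma,J})$, which is exactly $Q_{\sigma, J}$. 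The Type I error claim then follows from the probability integral transform: conditionally on $(\mathcal{C}(X),\mathcal{V}(X),Z)$, the p-value $p_{\sigma, J}=1-Q_{\sigma, J}(T_\sigma)$ is $\mathrm{Uniform}[0,1]$, and integrating out $Z$ preserves uniformity and yields $\mathbb{P}_{H_{0,\mathcal{V}}}(p_{\sigma,J}\leq \alpha\mid \mathcal{C}(X),\mathcal{V}(X))=\alpha$.

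Because most of the structural work is inherited from Theorem \ref{thm:main_thm}, the only genuinely new step is treating $\mathcal{V}(X)$ inside the same conditioning device. The one point of care is ensuring that $\mathcal{S}_{\sigma,\mathcal{V}}$ is a measurable subset of $[0,\infty)$ so that the truncated CDF is well-defined; this will be automatic for the concrete data-dependent selection rules considered in the sequel, where $\mathcal{V}(x_\sigma(\psi))$ is piecewise constant in $\psi$ and the level sets are finite unions of intervals.
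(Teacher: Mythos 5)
Your proposal is correct and takes essentially the same route as the paper: establish $T_\sigma\sim\chi_d$ and $T_\sigma\indep Z$ under the null for pre-specified clusters and $\mathcal{V}$ (the paper's Lemmas, via $\mu^\top\mathbf{e}=0_q$ for $\mathbf{e}\in\mathcal{E}$ and the norm--direction independence lemma), then note that conditioning on $\mathcal{C}(X)$ and $\mathcal{V}(X)$ amounts, through $X=x_\sigma(T_\sigma)$ given $Z$, to truncating $T_\sigma$ to $\mathcal{S}_\sigma\cap\mathcal{S}_{\sigma,\mathcal{V}}$, with uniformity of $p_{\sigma,J}$ following from the probability integral transform and the tower property. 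The only difference is one of formality: the paper makes your ``restriction'' step precise via a density-factorization argument over the joint truncation set $\tilde{\mathcal{S}}_{\sigma,J}(\mathcal{C}^0,\mathcal{V}^0)$ in $(\psi,z)$-space, which cleanly handles the fact that the null hypothesis and $P_{\mathcal{E}}$ are themselves functions of the clustering outcome.
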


The characterization of the set $\mathcal{S}_{\sigma, \mathcal{V}}$ depends on the way in which $\mathcal{V}$ is chosen. We consider three settings where the choice of $\mathcal{V}$ depends on the data $X$ only through the $\ell_2$ distances between cluster centers, after the data has been divided into $K$ clusters. We show that $\mathcal{S}_{\sigma,\mathcal{V}}$ in each of the settings can be written as a set of solutions to a system of inequalities. 

\begin{setting}\label{ex:1}
Let $t_{g}$ denote the $g$-th largest among all between-cluster distances $\|\bar{X}_{\mathcal{C}_k}-\bar{X}_{\mathcal{C}_{k'}} \|=\|X^{\top}\mathbf{v}_{k,k'}\|_2$ for $(k,k')\in \mathcal{V}_{\mathrm{all}},$ where $g$ is pre-specified. If we choose the pairs of clusters whose corresponding distances are among the $g$-th largest, then 
$
    \mathcal{V} = \{(k,k')\in  \mathcal{V}_{\mathrm{all}}: \, \|X^{\top}\mathbf{v}_{k,k'}\|_2 \geqslant t_g \}.
$ The corresponding truncation set takes the form 
 \begin{align*} 
    \mathcal{S}_{\sigma,\mathcal{V}} = \bigcap_{\mathbf{v}\in \mathcal{V} } \bigcap_{\mathbf{v}'\in \mathcal{V}_{\mathrm{all}}\backslash \mathcal{V} } \left\{\psi\geqslant 0:\,  \big\|[x_{\sigma}(\psi)]^{\top}\mathbf{v}   \big\|_2^2> \big\|[x_{\sigma}(\psi)]^{\top}\mathbf{v}'  \big\|_2^2 \right\},  
\end{align*}
where $\mathcal{V}_{\mathrm{all}}\backslash \mathcal{V}$ represents the relative complement of $ \mathcal{V}$ with respect to $\mathcal{V}_{\mathrm{all}}$. If $g=1,$ $\mathcal{V}$ consists of a single index pair corresponding to the pair of clusters whose cluster centers are the farthest apart among all pairs of clusters, which coincides with the setting of the left-hand plot of Figure \ref{figure_2}.
\end{setting}

\begin{setting}\label{ex:2}
Let $t_{g}$ be the $g$-th smallest among all between-cluster distances $\|X^{\top}\mathbf{v}_{k,k'}\|_2$ for $(k,k')\in \mathcal{V}_{\mathrm{all}},$ where $g$ is pre-specified. If we choose the pairs of clusters whose corresponding distances are among the $g$-th smallest, then $\mathcal{V} = \{(k,k')\in \mathcal{V}_{\mathrm{all}}: \, \|X^{\top}\mathbf{v}_{k,k'}\|_2 \leqslant t_g \}.$ The corresponding truncation set takes the form 
\begin{align*} 
    \mathcal{S}_{\sigma,\mathcal{V}} = \bigcap_{\mathbf{v}\in \mathcal{V}_{\mathrm{all}} } \bigcap_{\mathbf{v}'\in \mathcal{V}_{\mathrm{all}}\backslash \mathcal{V} } \left\{\psi\geqslant 0:\,  \big\|[x_{\sigma}(\psi)]^{\top}\mathbf{v}   \big\|_2^2< \big\|[x_{\sigma}(\psi)]^{\top}\mathbf{v}'  \big\|_2^2 \right\}.
\end{align*} 
If $g=1$, $\mathcal{V}$ consists of a single index pair corresponding to the pair of clusters whose cluster centers are the closest to each other among all pairs of clusters, which coincides with the setting of the right-hand plot of Figure \ref{figure_2}.
\end{setting}

\begin{setting}\label{ex:3}
Suppose $t_g>0$ is a pre-specified threshold. If we choose the pairs of clusters where the distances between the cluster centers are no larger than $t_g,$ then 
$
    \mathcal{V} = \{(k,k')\in \mathcal{V}_{\mathrm{all}}: \, \|X^{\top}\mathbf{v}_{k,k'}\|_2 \leqslant t_g \}.
$ The corresponding truncation set takes the form 
\begin{align*}
      \mathcal{S}_{\sigma,\mathcal{V}} = \bigcap_{\mathbf{v}\in \mathcal{V} } \left\{\psi\geqslant 0:\,  \big\|[x_{\sigma}(\psi)]^{\top}\mathbf{v}  \big\|_2^2\leqslant t_g \right\} ~\bigcap~ \bigcap_{\mathbf{v}'\in \mathcal{V}_{\mathrm{all}}\backslash \mathcal{V} } \left\{\psi\geqslant 0:\,  \big\|[x_{\sigma}(\psi)]^{\top}\mathbf{v}'   \big\|_2^2>t_g \right\} .  
\end{align*}
Similarly, if $
    \mathcal{V} = \{(k,k')\in \mathcal{V}: \, \|X^{\top}\mathbf{v}_{k,k'}\|_2 \geqslant t_g \},$
    then $\mathcal{S}_{\sigma,\mathcal{V}}$ takes a similar form, but with the direction of the inequalities reversed.
\end{setting}

Note that the set $\mathcal{S}_{\sigma,\mathcal{V}}$ is defined by quantities of the form $\|[x_{\sigma}(\psi)]^{\top}\mathbf{v}_{k,k'}\|^2$ in each of the settings above. Proposition \ref{prop:group} shows that $\|[x_{\sigma}(\psi)]^{\top}\mathbf{v}\|^2_2$ for any vector ${\mathbf{v}\in\mathbb{R}^n}$ can be written as a quadratic expression in $\psi.$

\begin{proposition}\label{prop:group}
Let $D=\sigma{P_{\mathcal{E}}X}/{\|P_{\mathcal{E}}X\|_F}$ and $E=P_{\mathcal{E}}^\perp X.$ For any vector $\mathbf{v}\in \mathbb{R}^{n}$, 
$$  \|[x_{\sigma}(\psi)]^\top \mathbf{v} \|_2^2=\lambda_{{\mathbf{v}},1}\psi^2+\lambda_{{\mathbf{v}},2}\psi+\lambda_{{\mathbf{v}},3},$$ 
with coefficients 
\begin{align*}
\lambda_{{\mathbf{v}},1}&=\|\mathbf{d}_{\mathbf{v}}\|_2^2,  & \lambda_{{\mathbf{v}}, 2}&=2\langle \mathbf{d}_{\mathbf{v}}, \mathbf{e}_{\mathbf{v}} \rangle,  &   \lambda_{{\mathbf{v}},3}&= \|\mathbf{e}_{\mathbf{v}}\|_2^2,
\end{align*} 
where $\mathbf{d}_{\mathbf{v}}=D^\top \mathbf{v}$ and $\mathbf{e}_{\mathbf{v}}=E^\top \mathbf{v}.$
\end{proposition}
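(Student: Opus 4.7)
The plan is to proceed by direct computation, exploiting the fact that $x_\sigma(\psi)$ is an affine function of $\psi$ with matrix coefficients, which makes $[x_\sigma(\psi)]^\top \mathbf{v}$ an affine function of $\psi$ with vector coefficients, whose squared $\ell_2$-norm is then manifestly a quadratic polynomial in $\psi$.

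First, I would recall from the definition of $x_\sigma$ given in Section~\ref{sec:proposed} that
\begin{align*}
x_\sigma(\psi) \;=\; \psi\cdot \frac{\sigma P_{\mathcal{E}}X}{\|P_{\mathcal{E}}X\|_F} + P_{\mathcal{E}}^\perp X \;=\; \psi D + E,
\end{align*}
with $D$ and $E$ as given in the statement. Taking the transpose (both sides are $n\times q$ matrices, so $[\psi D+E]^\top = \psi D^\top + E^\top$) and right-multiplying by the fixed vector $\mathbf{v}\in\mathbb{R}^n$ gives
\begin{align*}
[x_\sigma(\psi)]^\top \mathbf{v} \;=\; \psi\, D^\top \mathbf{v} + E^\top \mathbf{v} \;=\; \psi\, \mathbf{d}_{\mathbf{v}} + \mathbf{e}_{\mathbf{v}} \;\in\; \mathbb{R}^{q}.
\end{align*}

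Next, I would expand the squared $\ell_2$-norm of this vector-valued affine function of $\psi$ using bilinearity of the inner product:
\begin{align*}
\bigl\|[x_\sigma(\psi)]^\top \mathbf{v}\bigr\|_2^2
&= \bigl\langle \psi\mathbf{d}_{\mathbf{v}}+\mathbf{e}_{\mathbf{v}},\ \psi\mathbf{d}_{\mathbf{v}}+\mathbf{e}_{\mathbf{v}}\bigr\rangle \\
&= \psi^2 \|\mathbf{d}_{\mathbf{v}}\|_2^2 + 2\psi\,\langle \mathbf{d}_{\mathbf{v}},\mathbf{e}_{\mathbf{v}}\rangle + \|\mathbf{e}_{\mathbf{v}}\|_2^2.
\end{align*}
Matching coefficients of $\psi^2$, $\psi$, and the constant term against $\lambda_{\mathbf{v},1}\psi^2 + \lambda_{\mathbf{v},2}\psi + \lambda_{\mathbf{v},3}$ yields the claimed identities for $\lambda_{\mathbf{v},1}, \lambda_{\mathbf{v},2}, \lambda_{\mathbf{v},3}$.

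There is no substantive obstacle here; the argument is a one-line algebraic manipulation once the affine-in-$\psi$ structure of $x_\sigma(\psi)$ is made explicit. The only thing worth being careful about is keeping track of dimensions: $x_\sigma(\psi)$ is $n\times q$, $\mathbf{v}$ is $n\times 1$, so $[x_\sigma(\psi)]^\top \mathbf{v}$ is a $q$-vector, and the resulting norm is taken in $\mathbb{R}^q$—this matches the conventions in Proposition~\ref{prop:quadratic}, which is the pattern I am mirroring. This proposition is the analogue of Proposition~\ref{prop:quadratic} for arbitrary linear functionals $\mathbf{v}$ rather than for coordinate differences or cluster-center differences, and feeding it into the set descriptions in Settings~\ref{ex:1}--\ref{ex:3} reduces each $\mathcal{S}_{\sigma,\mathcal{V}}$ to a finite system of quadratic inequalities in $\psi$, which can be solved explicitly in the same way as $\mathcal{S}_\sigma$ in \eqref{eq:charac2}.
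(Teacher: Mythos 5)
Your proof is correct and is essentially identical to the paper's: both rewrite $x_\sigma(\psi)=\psi D+E$, obtain $[x_\sigma(\psi)]^\top\mathbf{v}=\psi\,\mathbf{d}_{\mathbf{v}}+\mathbf{e}_{\mathbf{v}}$, and expand the squared norm to read off the coefficients. No gaps to note.
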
 

Thus, $\mathcal{S}_{\sigma,\mathcal{V}}$ in each of the three settings is a set of solutions to a system of quadratic inequalities, as is the case for $\mathcal{S}_{\sigma}$. It follows that $p_{\sigma, J}$ can be computed exactly by Theorem \ref{thm:group_dep} and Proposition \ref{prop:group}. 

\begin{remark}\label{remark_dep} We see in the right-hand plot of Figure \ref{figure_2} that the test based on $p_{\sigma},$ which does not account for the data-dependence in the choice of $\mathcal{V},$ still controls the Type I error under Setting \ref{ex:2}. Intuitively, $H_{0,\mathcal{V}}$ is more likely to happen under the selection event of Setting \ref{ex:2}, so, in this setting, it may be natural to expect the test based on $p_{\sigma}$ to be more conservative than the test based on $p_{\sigma, J}.$ It could then be of interest to ask if the latter has any advantage over the former---in Section \ref{sec:simul_known_dep}, we compare the empirical powers of the two tests to address this question, and we observe that the test based on $p_{\sigma, J}$ leads to a higher empirical power. Another example of such a phenomenon in a different context is discussed in \citet{saha2024inferring}.
\end{remark} 

\section{Unknown variance}\label{sec:var}
In practice, the noise level $\sigma$ in \eqref{data_gen} is often unknown. In Section \ref{sec:unknown_exist},
we give a review of existing methods that test for a pair of clusters under this setting. Then, in Section \ref{sec:ftest}, we present our proposed test that considers multiple pairs of clusters.

\subsection{Existing methods}\label{sec:unknown_exist} 
We start by discussing two existing approaches for testing for a pair of clusters $C_k$ and $C_{k'},$ which are assumed to be chosen independently of the data.

\subsubsection{Variance estimators of \citet{gao2022selective} and \citet{chen2023selective}}
\citet{gao2022selective} and \citet{chen2023selective} provide tests for the null hypothesis in \eqref{null:gao_chen} that are based on plug-in estimators for $\sigma.$ They show that if the estimators satisfy certain properties, then the tests control the Type I error asymptotically in one of the two dimensions. \citet{gao2022selective} provide the estimator
\begin{align}\label{est:sample}
\hat{\sigma}^2_{\textnormal{sample}}=\frac{1}{(n-1)q}\sum_{i=1}^n \norm{X_i-\bar{X}}^2_2,\hspace{1.5em} \text{where} \hspace{1.5em} \bar{X}=\frac{1}{n}\sum_{i=1}^n X_i, 
\end{align} 
which, as they show, asymptotically over-estimates $\sigma$ under certain conditions. \citet{chen2023selective} provide the median-based estimator 
\begin{align}\label{chen_estimator}
\hat{\sigma}^2_{\textnormal{med}}=\frac{1}{\textnormal{Median} (\chi_1^2)} \cdot\mathrm{Median}\left(\{(X_{ij}-\textnormal{Median}(\{X_{sj}\}_{s\in [n]}))^2\}_{i\in[n],j\in[q]}\right),
\end{align} 
where $\textnormal{Median} (\cdot)$ represents the median of a distribution or the median of a set of values depending on the input, and show that a closely related estimator to that of \eqref{chen_estimator} 
is consistent in one of the two dimensions under certain conditions. In terms of the performance of the two estimators, \citet{chen2023selective} numerically illustrate that their proposed test achieves a higher empirical power when applied with $\hat{\sigma}_{\textnormal{med}}.$ 

In Section \ref{sec:simul_unknown_pre}, we show empirically that the proposed test of Section \ref{sec:proposed} controls the Type I error when applied with each of the two estimators.

\subsubsection{Pairwise test of \cite{yun2023selective}}
Alternatively, \cite{yun2023selective} propose a selective inference procedure that avoids the use of a plug-in estimator. They consider a stronger null hypothesis that states 
\begin{align}\label{eq:null_yun}
    H_{0,\{k,k'\}}^*:\ \mu_i = \mu_{i'} \quad \text{for all } i,i'\in \mathcal{C}_k\cup \mathcal{C}_{k'}, 
\end{align}
where $k\neq k'\in[K]$ are pre-specified. The null hypothesis
assumes that the means of all of the observations in the clusters $\mathcal{C}_k$ and $\mathcal{C}_{k'}$ are equal, thereby allowing for the derivation of a test statistic whose distribution is known under the null hypothesis; more discussion on the motivation behind the stronger null hypothesis can be found in \citet[Section 3]{yun2023selective}. They propose the test statistic 
\begin{align}\label{ts_Yun}
T_{k,k'}^*=\frac{\|P_{\mathbf{v}_{k,k'}}X\|^2_F\, /\, q}{\|P_{1,k,k'}X\|^2_F\, /\, d^*_{k,k'}},  
\end{align} 
where $P_{\mathbf{v}_{k,k'}}$ is as defined in Section \ref{sec:proposed}, $d^*_{k,k'}\coloneqq q\cdot (|\mathcal{C}_k|+|\mathcal{C}_{k'}|-2),$ and
\begin{align*}
   P_{1,k,k'} \coloneqq \left(I_{\mathcal{C}_k}-\frac{\mathbf{1}_{\mathcal{C}_k}\mathbf{1}_{\mathcal{C}_k}^\top}{|\mathcal{C}_k|}\right)+\left(I_{\mathcal{C}_{k'}}-\frac{\mathbf{1}_{\mathcal{C}_{k'}}\mathbf{1}_{\mathcal{C}_{k'}}^\top}{|\mathcal{C}_{k'}|}\right).
\end{align*} 
They provide the selective p-value $p^*_{k,k'}=1-Q_{k,k'}^*(T^*_{k,k'}),$ where $Q_{k,k'}^*$ is the conditional CDF of $T_{k,k'}^*$ given 
\begin{align*}
    \mathcal{C}(X)\text{ and }Z^*_{k,k'}\coloneqq \left(\frac{P_{\mathbf{v}_{k,k'}}X}{\|  P_{\mathbf{v}_{k,k'}}X\|_F},\; \frac{  P_{1,k,k'} X}{\|P_{1,k,k'} X\|_F}, \;\|  P_{\mathbf{v}_{k,k'}}X\|_F^2+\|  P_{1,k,k'} X\|_F^2,\; P_{2,k,k'} X \right),
\end{align*}
where $P_{2,k,k'}\coloneqq I_n-P_{\mathbf{v}_{k,k'}}-P_{1,k,k'}.$ They show that the p-value $p_{k,k'}^*,$ conditioned on $\mathcal{C}(X),$ follows the uniform distribution under $H_{0,\{k,k'\}}^*.$ Specifically, they derive that the distribution function $Q^*_{k,k'}$ equals $F_{F_{q,d_{k,k'}^*}}(\cdot; \mathcal{S}_{k,k'}^*),$
 which represents the CDF of the $F_{q,d_{k,k'}^*}$ distribution truncated to the set 
\begin{align*}
    \mathcal{S}^*_{k,k'}\coloneqq \{\psi \geq 0: \mathcal{C}(X)=\mathcal{C}(x_{k,k'}^*(\psi))\}, \notag
\end{align*}
where  
$x^*_{k,k'}: [0, \infty) \rightarrow \mathbb{R}^{n\times q},$
\begin{multline*}
    x^*_{k,k'}(\psi)= \sqrt{\|P_{\mathbf{v}_{k,k'}}X\|_F^2+\|  P_{1,k,k'} X\|_F^2 } \bigg( \sqrt{\frac{\psi}{\psi+r^*_{k,k'}}}  \frac{P_{\mathbf{v}_{k,k'}}X} {\|P_{\mathbf{v}_{k,k'}}X\|_F}+ \\\sqrt{\frac{r^*_{k,k'}}{\psi+ r^*_{k,k'}}}\frac{P_{1,k,k'}X}{\|P_{1,k,k'}X\|_F}  \bigg)
    + P_{2,k,k'}X,
\end{multline*}
where $r^*_{k,k'}=d^*_{k,k'}/q.$  \cite{yun2023selective} derive an explicit characterization of the truncation set $\mathcal{S}^*_{k,k'}$ for the case where $K=2$ clusters are produced by a clustering algorithm that is invariant to the scale and location of the data, which includes $K$-means clustering, and provide an importance sampling algorithm for the case where $K>2.$

\subsection{Proposed test for $H_{0,\mathcal{V}}^*$} \label{sec:ftest}
As discussed in Section \ref{sec:unknown_exist}, the pairwise tests of \citet{gao2022selective} and \citet{chen2023selective} for the null hypothesis in \eqref{null:gao_chen} that are based on plug-in estimators for $\sigma^2$ control the Type I error asymptotically. While the pairwise test of \citet{yun2023selective} controls the Type I error in finite samples, the computation of the associated p-value relies on a sampling algorithm in the presence of $K>2$ clusters. In this section, we provide a test for multiple pairs of clusters that avoid the use of estimators and produce exact p-values, building on the works of \cite{chen2023selective} and \cite{yun2023selective}.

Following the approach of \citet{yun2023selective}, we consider the null hypothesis that states 
\begin{align}\label{eq:null_strong}
    H_{0,\mathcal{V}}^*: \ \mu_i=\mu_{i'} \ \text{for all } i, i' \in \mathcal{C}_k\cup \mathcal{C}_{k'} \ \text{and } (k,k')\in \mathcal{V}.
\end{align}
Note \eqref{eq:null_strong} is stronger than \eqref{global_null} in that the former assumes all of the observations in the clusters of interest have the same mean, and it reduces to the null hypothesis in \eqref{eq:null_yun} if $\mathcal{V}=\{(k,k')\}.$

Let $\mathcal{K} = \{k\in [K]: \text{ there exists } k'\in [K] \text{ such that } (k,k')\in \mathcal{V}\text{ or }(k',k)\in \mathcal{V}\}$ denote the set of indices of all clusters of interest. Analogous to the test statistic in \eqref{ts_Yun}, we propose the test statistic 
\begin{align*}
T^*=\frac{\|P_{\mathcal{E}}X\|^2_F\, /\, d}{\|P_{1}X\|^2_F\, /\, d^*}, 
\end{align*} 
where $d$ and $P_{\mathcal{E}}$ are as defined in Section \ref{sec:proposed}, $d^*=q(\sum_{k\in \mathcal{K}}|\mathcal{C}_k|- |\mathcal{K}| )$, and 
\begin{align*}
   P_1 =\sum_{k\in \mathcal{K}}\left(I_{\mathcal{C}_k}-\frac{\mathbf{1}_{\mathcal{C}_k}\mathbf{1}_{\mathcal{C}_k}^\top}{|\mathcal{C}_k|}\right),
\end{align*} 
which generalizes $P_{1,k,k'}.$ Intuitively, $\|P_{\mathcal{E}}X\|^2_F$ and $\|P_1 X\|^2_F$ reflect the overall between-cluster differences and within-cluster variations, respectively, for the clusters whose indices are in $\mathcal{K}.$ Note that the test statistic $T^*$ is undefined when each of the clusters with indices in $\mathcal{K}$ consists of a single observation, which aligns with the intuition that there is not enough information to estimate the level of within-cluster variations in such a case. Building on the work of \citet{yun2023selective}, we consider the decomposition 
\begin{align}
    X= &~ P_{\mathcal{E}}X + P_1 X +P_2X \notag \\
    =&~ \sqrt{\|P_{\mathcal{E}}X\|_F^2+\|  P_1 X\|_F^2 } \left( \sqrt{\frac{T^*}{T^*+r^*}}  \frac{P_{\mathcal{E}}X} {\|P_{\mathcal{E}}X\|_F} + \sqrt{\frac{r^*}{T^*+ r^*}}\frac{P_1X}{\|P_1X\|_F}  \right) + P_2X, \label{eq:xdecomp2}
\end{align} 
where $r^* =d^*/d$ and $P_2=I_n-P_{\mathcal{E}}-P_1.$ With this decomposition in mind, we derive the selective p-value, separately for the cases where the set $\mathcal{V}$ is pre-specified and chosen in a data-dependent way.

\subsubsection{Pre-specified $\mathcal{V}$}

We first consider the case where $\mathcal{V}$ is pre-specified. In this setting, we define the selective p-value as $p^*=1-Q^*(T^*),$ where $Q^*$ denotes the conditional CDF of $T^*$ given 
\begin{align}\label{eq:condsetg_unknown}
\mathcal{C}(X)\hspace{1em}\text{and}\hspace{1em}Z^*:=\left(\frac{  P_{\mathcal{E}}X}{\|  P_{\mathcal{E}}X\|_F}, \; \frac{  {P_1} X}{\|P_1 X\|_F}, \;\|  P_{\mathcal{E}}X\|_F^2+\|  P_1 X\|_F^2,\;P_2 X\right).
\end{align}
Theorem \ref{thm:unknown_var} gives an explicit form of the distribution function $Q^*.$ 

\begin{theorem}\label{thm:unknown_var}
Under the null hypothesis $H_{0,\mathcal{V}}^*$ in \eqref{eq:null_strong}, 
\begin{align*}
    Q^*=1-F_{F_{d, d^*}}(\cdot ;\mathcal{S}^*), 
\end{align*}
where $F_{F_{d, d^*}}(\cdot;\mathcal{S}^*)$ 
represents the CDF of the $F_{d,d^*}$ distribution truncated to the set 
\begin{align*}
    \mathcal{S}^*\coloneqq \big\{\psi \geq 0: \mathcal{C}(X)=\mathcal{C}(x^* (\psi))\big\},
\end{align*}
where $x^*:[0, \infty)\rightarrow \mathbb{R}^{n\times q},$
\begin{align*}
    x^* (\psi) = \sqrt{\|P_{\mathcal{E}}X\|_F^2+\|  P_1 X\|_F^2 } \left( \sqrt{\frac{\psi}{\psi+r^*}}  \frac{P_{\mathcal{E}}X} {\|P_{\mathcal{E}}X\|_F} + \sqrt{\frac{r^*}{\psi+ r^*}}\frac{P_1X}{\|P_1X\|_F}  \right) + P_2X.
\end{align*}
Furthermore, $\mathbb{P}_{H^*_{0,\mathcal{V}}}(p^*\leq \alpha \mid \mathcal{C}(X))=\alpha.$
\end{theorem}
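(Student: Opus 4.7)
The plan is to mimic the strategy of \citet{yun2023selective} for the $|\mathcal{V}|=1$ case, with the larger projection $P_{\mathcal{E}}$ and the generalized within-cluster projection $P_1$ replacing the single-pair objects. The heart of the proof is a Cochran-type orthogonal decomposition of $X$ under $H_{0,\mathcal{V}}^*$, followed by the same probability-integral-transform argument used for Theorem \ref{thm:main_thm}. First I would verify that $P_{\mathcal{E}}$, $P_1$, and $P_2$ are pairwise orthogonal projections with $P_{\mathcal{E}}+P_1+P_2=I_n$; only the identity $P_{\mathcal{E}}P_1=0$ requires work, and it holds because each $\mathbf{v}_{k,k'}$ with $(k,k')\in\mathcal{V}$ lies in $\mathrm{span}(\mathbf{1}_{\mathcal{C}_k},\mathbf{1}_{\mathcal{C}_{k'}})$ while every summand $I_{\mathcal{C}_l}-\mathbf{1}_{\mathcal{C}_l}\mathbf{1}_{\mathcal{C}_l}^\top/|\mathcal{C}_l|$ of $P_1$ annihilates $\mathbf{1}_{\mathcal{C}_l}$. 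Next I would use $H_{0,\mathcal{V}}^*$ to show $P_{\mathcal{E}}\mu=0$ and $P_1\mu=0$ for $\mu=[\mu_1\cdots\mu_n]^\top$: the strong null forces $\mu$ to be constant on $\mathcal{C}_k\cup\mathcal{C}_{k'}$ for each $(k,k')\in\mathcal{V}$, which simultaneously kills $\mu^\top\mathbf{v}_{k,k'}$ and makes $\mu$ constant on each $\mathcal{C}_k$ with $k\in\mathcal{K}$. Applying Cochran's theorem column by column to $X$ then yields $\|P_{\mathcal{E}}X\|_F^2/\sigma^2\sim\chi^2_d$ and $\|P_1X\|_F^2/\sigma^2\sim\chi^2_{d^*}$ independently under the null, so $T^*\sim F_{d,d^*}$ marginally.

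To obtain the conditional distribution, I would establish $T^*\indep Z^*$ under $H_{0,\mathcal{V}}^*$. The sphere-versus-radius decomposition of isotropic Gaussians gives that $P_{\mathcal{E}}X/\|P_{\mathcal{E}}X\|_F$ is independent of $\|P_{\mathcal{E}}X\|_F$, and similarly for $P_1X$; orthogonality of the projections makes $P_2X$ independent of $(P_{\mathcal{E}}X,P_1X)$; and the classical fact that, for independent $U\sim\chi^2_d$ and $V\sim\chi^2_{d^*}$, the ratio $U/V$ is independent of $U+V$ handles the remaining two components of $Z^*$. The decomposition \eqref{eq:xdecomp2} shows that once $Z^*$ is held fixed, $X=x^*(T^*)$; hence the conditioning event $\{\mathcal{C}(X)=c\}$ reduces to $\{T^*\in\mathcal{S}^*\}$, and the conditional law of $T^*$ given $(Z^*,\mathcal{C}(X))$ is $F_{d,d^*}$ truncated to $\mathcal{S}^*$, matching the form of $Q^*$ in the statement. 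The claim $\mathbb{P}_{H_{0,\mathcal{V}}^*}(p^*\leq \alpha\mid\mathcal{C}(X))=\alpha$ then follows from the probability integral transform applied conditionally on $(Z^*,\mathcal{C}(X))$ and subsequent marginalization over $Z^*$.

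The main obstacle is the orthogonality-and-independence bookkeeping for the three-way decomposition of the matrix-valued $X$: establishing $P_{\mathcal{E}}P_1=0$ cleanly, tracking the $q$-fold column replication so that the degrees of freedom come out to $d=q\dim(\mathcal{E})$ and $d^*=q(\sum_{k\in\mathcal{K}}|\mathcal{C}_k|-|\mathcal{K}|)$, and ensuring the joint independence of the four pieces making up $Z^*$ goes through. Beyond that, the reduction to a truncated $F$ distribution closely parallels the single-pair case handled by \citet{yun2023selective}.
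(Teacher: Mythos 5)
Your proposal is correct and follows essentially the same route as the paper: orthogonality of $P_{\mathcal{E}}$, $P_1$, $P_2$ (with $P_{\mathcal{E}}P_1=0$ argued exactly as in the paper's Lemma \ref{lem:thm_unknown_var_lem0}), annihilation of $\mu$ by $P_{\mathcal{E}}$ and $P_1$ under $H^*_{0,\mathcal{V}}$ (Lemmas \ref{lem:mean_null} and \ref{lem:thm_unknown_var_lem0.5}), the marginal $F_{d,d^*}$ law of $T^*$ and its independence from $Z^*$ (Lemmas \ref{lem:thm_unknown_var_lem1} and \ref{lem:thm_unknown_var_lem2}), and the truncation-plus-probability-integral-transform conclusion. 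The only organizational difference is that the paper proves the data-dependent version (Theorem \ref{thm:unknownvar_datav}) first and obtains Theorem \ref{thm:unknown_var} as a special case, whereas you argue the pre-specified case directly; this is not a substantive divergence.
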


We next characterize the set $\mathcal{S}^*.$ Analogous to \eqref{eq:charac} and \eqref{eq:charac2}, 
we have 
\begin{align}\label{eq:charac3}
    \mathcal{S}^* =  \bigcap_{j=0}^J\bigcap_{l=1}^K \bigcap_{i=1}^n\mathcal{S}_{ i, l, j} (x^*),
\end{align}
where, by the definitions of the functions $\mathcal{S}_{ i, l, j}$ for $i\in[n],l\in[K],$ and $j\in\{0\}\cup [J]$ given above \eqref{eq:charac}, each 
$\mathcal{S}_{i,l,j}(x^*)$ is defined by inequalities of the form 
\begin{align}\label{eq:form1_unknown}
      \|{ [x^*(\psi)]_i-[x^*(\psi)]_{i'}}\|^2_2\leq \|{[x^*(\psi)]_i-[x^*(\psi)]_{i^{''}}}\|^2_2
 \end{align}   
if $j=0$ and  those of the form 
\begin{align}\label{eq:form2_unknown}
    \|[x^*(\psi)]_i-M_{l}^{(j-1)}(x^*(\psi) ) \|_2^2 \leq \|[x^*(\psi)]_i-M_{l'}^{(j-1)}(x^*(\psi) ) \|_2^2
\end{align} 
if $j\in [J],$ where $i',i''\in[n]$ and $l'\in [K].$ Therefore, to characterize $\mathcal{S}^*$,
it suffices to find the sets of solutions to a system of inequalities of the forms above. Proposition \ref{prop:unknown_cts} expresses these inequalities explicitly with respect to $\psi.$
\begin{proposition}
\label{prop:unknown_cts}
Let $A={P_\mathcal{E}X}/{\|P_\mathcal{E}X\|_F}$, $B= P_1 X/\|P_1X\|_{F}$, and $C=P_2 X/ \sqrt{\|P_\mathcal{E}X\|_F^2+ \|P_1X\|_{F}^2}.$ 
For all $i,i',i^{''}\in [n],$ $l,l'\in[K],$ and $j\in [J],$ 
the inequalities in \eqref{eq:form1_unknown} and \eqref{eq:form2_unknown} are equivalent to 
\begin{align}\label{eq:ineqf}
    f_{ii'}(\psi)\leqslant f_{ii^{''}}(\psi) \hspace{1em}\text{ and } \hspace{1em}{f}_{ijl}(\psi)\leqslant {f}_{ijl'}(\psi), 
\end{align}
 respectively, where $f_{ii'},f_{ijl}:[0,\infty)\rightarrow \mathbb{R},$ 
 \begin{align*}
   f_{ii'}(\psi)= &~\lambda_{ii',1} \psi+\lambda_{ii',2} \sqrt{\psi}+\lambda_{ii',3} \sqrt{\psi}\sqrt{\psi+r^*}+\lambda_{ii',4}  \sqrt{\psi+r^*}+\lambda_{ii',5}, \\
      f_{ijl}(\psi)=&~ \lambda_{ijl,1} \psi+\lambda_{ijl,2} \sqrt{\psi}+\lambda_{ijl,3} \sqrt{\psi}\sqrt{\psi+r^*}+\lambda_{ijl,4}  \sqrt{\psi+r^*}+\lambda_{ijl,5}, 
 \end{align*} 
with coefficients   
\begin{align*} 
 \lambda_{ii',1} = & ~\|\mathbf{a}_{ii'}\|_2^2+  \|\mathbf{c}_{ii'}\|_2^2, & 
 \lambda_{ii',2} =  & ~2 \langle \mathbf{a}_{ii'},  \mathbf{b}_{ii'} \rangle \sqrt{r^*} , &
  \lambda_{ii',3} =  &~  2 \langle  \mathbf{a}_{ii'}, \mathbf{c}_{ii'} \rangle, \\
 \lambda_{ii',4} = & ~ 2\langle \mathbf{b}_{ii'}, \mathbf{c}_{ii'}\rangle \sqrt{r^*}, & 
 \lambda_{ii',5}=   & ~  (\|\mathbf{b}_{ii'}\|_2^2+\|\mathbf{c}_{ii'}\|_2^2)r^*,  &  \\
    {\lambda}_{ijl,1} =  &  ~\|\mathbf{a}_{ijl}\|_2^2+  \|\mathbf{c}_{ijl}\|_2^2, & 
     {\lambda}_{ijl,2} =  & ~ 2\langle \mathbf{a}_{ijl}, \mathbf{b}_{ijl}\rangle \sqrt{r^*}, & 
     {\lambda}_{ijl,3} =  & ~ 2\langle \mathbf{a}_{ijl}, \mathbf{c}_{ijl}\rangle,  \\
       {\lambda}_{ijl,4} =  &~  2\langle \mathbf{b}_{ijl}, \mathbf{c}_{ijl}\rangle \sqrt{r^*} , & 
           {\lambda}_{ijl,5} =  & ~ (\|\mathbf{b}_{ijl}\|_2^2 + \|\mathbf{c}_{ijl} \|_2^2)r^*,  &  & 
\end{align*}
where $\mathbf{a}_{ii'}=A_i-A_{i'}$, $\mathbf{b}_{ii'}= B_i-B_{i'}$, $\mathbf{c}_{ii'}= C_i-C_{i'}$,
$\mathbf{a}_{ijl}=A_i-M_{l}^{(j-1)}(A)$,
$\mathbf{b}_{ijl}=B_i-M_{l}^{(j-1)}(B)$, and 
$\mathbf{c}_{ijl}=C_i-M_{l}^{(j-1)}(C)$. 
\end{proposition}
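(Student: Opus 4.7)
The plan is to substitute the definition of $x^*(\psi)$ into the two squared-distance expressions and expand, using $A,B,C$ as building blocks. Writing $s^2 = \|P_{\mathcal{E}}X\|_F^2+\|P_1X\|_F^2$ and recalling that $C = P_2X/s$, the decomposition \eqref{eq:xdecomp2} gives
\begin{align*}
x^*(\psi) = s\left(\sqrt{\tfrac{\psi}{\psi+r^*}}\,A + \sqrt{\tfrac{r^*}{\psi+r^*}}\,B + C\right).
\end{align*}
Subtracting the $i$th and $i'$th rows, the differences assemble directly into $\mathbf{a}_{ii'},\mathbf{b}_{ii'},\mathbf{c}_{ii'}$, so
\begin{align*}
[x^*(\psi)]_i - [x^*(\psi)]_{i'} = s\!\left(\sqrt{\tfrac{\psi}{\psi+r^*}}\,\mathbf{a}_{ii'} + \sqrt{\tfrac{r^*}{\psi+r^*}}\,\mathbf{b}_{ii'} + \mathbf{c}_{ii'}\right).
\end{align*}

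Next I would square this using $\|u+v+w\|_2^2 = \|u\|_2^2+\|v\|_2^2+\|w\|_2^2+2\langle u,v\rangle+2\langle u,w\rangle+2\langle v,w\rangle$, then multiply the inequality in \eqref{eq:form1_unknown} on both sides by the positive scalar $(\psi+r^*)/s^2$. The three ``squared'' coefficients produce $\psi\|\mathbf{a}_{ii'}\|_2^2$, $r^*\|\mathbf{b}_{ii'}\|_2^2$ and $(\psi+r^*)\|\mathbf{c}_{ii'}\|_2^2$; combining the first and third gives the $\lambda_{ii',1}\psi$ and $\lambda_{ii',5}$ terms (the latter absorbing also $r^*\|\mathbf{b}_{ii'}\|_2^2$). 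The three cross terms produce, after clearing $1/(\psi+r^*)$, exactly $2\sqrt{r^*}\langle\mathbf{a}_{ii'},\mathbf{b}_{ii'}\rangle\sqrt{\psi}$, $2\langle\mathbf{a}_{ii'},\mathbf{c}_{ii'}\rangle\sqrt{\psi}\sqrt{\psi+r^*}$ and $2\sqrt{r^*}\langle\mathbf{b}_{ii'},\mathbf{c}_{ii'}\rangle\sqrt{\psi+r^*}$, matching the $\lambda_{ii',2},\lambda_{ii',3},\lambda_{ii',4}$ terms respectively. This establishes the first equivalence in \eqref{eq:ineqf}.

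For the second equivalence, the key observation is that $M_l^{(j-1)}$ is \emph{linear} in its matrix argument, since by definition it is the weighted row-sum $\sum_s w_{l,s}^{(j-1)}(\cdot)_s$ with weights that depend only on $X$ (hence not on $\psi$). Consequently
\begin{align*}
M_l^{(j-1)}(x^*(\psi)) = s\!\left(\sqrt{\tfrac{\psi}{\psi+r^*}}\,M_l^{(j-1)}(A) + \sqrt{\tfrac{r^*}{\psi+r^*}}\,M_l^{(j-1)}(B) + M_l^{(j-1)}(C)\right),
\end{align*}
so $[x^*(\psi)]_i - M_l^{(j-1)}(x^*(\psi))$ takes the same shape as before but with $\mathbf{a}_{ii'},\mathbf{b}_{ii'},\mathbf{c}_{ii'}$ replaced by $\mathbf{a}_{ijl},\mathbf{b}_{ijl},\mathbf{c}_{ijl}$. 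The identical expansion then yields the second equivalence with the coefficients $\lambda_{ijl,\cdot}$ exactly as stated.

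I do not anticipate a genuine obstacle here—the argument is pure algebra. The only care is (i) to justify multiplying by $(\psi+r^*)/s^2$ without flipping the inequality, which is fine since both $s^2>0$ and $\psi+r^*>0$, and (ii) to check the linearity of $M_l^{(j-1)}$, which follows from its definition via the fixed weights $w_{l,s}^{(j-1)}$ that depend only on the cluster assignments from the $(j-1)$th step of the algorithm run on $X$.
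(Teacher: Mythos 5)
Your proposal is correct and follows essentially the same route as the paper's proof: substitute the decomposition of $x^*(\psi)$ scaled by $\sqrt{\|P_{\mathcal{E}}X\|_F^2+\|P_1X\|_F^2}$, expand the squared norms with cross terms, multiply through by the positive factor $(\psi+r^*)$ over that squared scale, and regroup to obtain the stated $\lambda$ coefficients, with the $M_l^{(j-1)}$ case handled via its linearity as a fixed weighted row average. The paper carries out the same algebra (writing $u_\psi=\sqrt{\psi/(\psi+r^*)}$ and expanding the weighted sum for $M_l^{(j-1)}$ explicitly rather than invoking linearity by name), so there is no substantive difference.
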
 
It follows by Theorem \ref{thm:unknown_var} and Proposition \ref{prop:unknown_cts} that $p^*$ can be computed exactly. 
\begin{remark}\label{rem:unknown_comp}
Note that for any $i,i',i^{''}\in [n],$ $l,l'\in[K],$ and $j\in [J],$ $f_{ii'},f_{ii''},f_{ijl},$ and $f_{ijl'}$ are continuous functions, so $f_{ii'}-f_{ii''}$ and $f_{ijl}-f_{ijl'}$ are also continuous. Thus, Proposition  \ref{prop:unknown_cts} shows that the set of solutions to each of \eqref{eq:form1_unknown} and \eqref{eq:form2_unknown} is equivalent to the sub-level set of a continuous function at 0, which can be found by solving for the roots of the function and checking the signs of its values on each interval partitioned by the roots; further details on the implementation can be found in Appendix \ref{sec:imple_5.2}. 
\end{remark}

\subsubsection{Data-dependent choice of $\mathcal{V}$}

We next consider the setting where $\mathcal{V}$ is chosen in a data-dependent way and develop a test that is analogous to that of Section \ref{sec:group_dep}. In this setting, we define the selective p-value as 
$p_{J}^* = Q_{\mathcal{G}_J^*}^*(T^*)$, where  $Q^*_{\mathcal{G}_J^*}$ is the conditional CDF of $T^*$ given
\begin{align*}
\mathcal{C}(X),\hspace{1em}\mathcal{V}(X),\hspace{1em}\text{and}\hspace{1em} Z^*,
\end{align*}
where $Z^*$ is as defined in \eqref{eq:condsetg_unknown}. Theorem \ref{thm:unknownvar_datav} gives an explicit form of the distribution function $Q_{J}^*.$ 
\begin{theorem} \label{thm:unknownvar_datav}
Under the null hypothesis $H_{0,\mathcal{V}}^*$ in \eqref{eq:null_strong},
\begin{align*}
Q_{J}^*=F_{F_{d, d^*}}(\cdot;\mathcal{S}^*_J),
\end{align*}
where $F_{F_{d, d^*}}(\cdot;\mathcal{S}^*_J)$ 
represents the CDF of the $F_{d, d^*}$ distribution truncated to the set $$\mathcal{S}^*_J\coloneqq \mathcal{S}^*\cap \mathcal{S}_{\mathcal{V}}^*,$$ where  
$$  \mathcal{S}_{\mathcal{V}}^*:=   \{\psi \geq 0: \mathcal{C}(X)=\mathcal{C}(x^*(\psi))\}.$$
Furthermore, $\mathbb{P}_{H^*_{0,\mathcal{V}}}(p^*_J\leq \alpha \mid \mathcal{C}(X),\mathcal{V}(X))=\alpha.$
\end{theorem}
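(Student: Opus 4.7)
The plan is to combine the two mechanisms used earlier in the paper: the truncated-$F$ conditional distribution developed in the proof of Theorem \ref{thm:unknown_var} (pre-specified $\mathcal{V}$, unknown variance), and the extra conditioning on the selection event $\{\mathcal{V}(X)=\mathcal{V}(X^{\text{obs}})\}$ introduced in Theorem \ref{thm:group_dep} (data-dependent $\mathcal{V}$, known variance). The essential pivot is the reconstruction identity $X = x^*(T^*)$, which makes $X$ a deterministic function of $T^*$ once $Z^*$ is fixed, so any selection event phrased in terms of $X$ becomes a subset of $[0,\infty)$ in the variable $\psi$.

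First I would verify, by substituting $\psi = T^*$ into the definition of $x^*$ and comparing with the decomposition in \eqref{eq:xdecomp2}, that $X = x^*(T^*)$ whenever $Z^*$ is held fixed. This immediately gives the two set-theoretic translations
\begin{align*}
\{\mathcal{C}(X) = \mathcal{C}(X^{\text{obs}})\} \;\cap\; \{T^* = \psi\} &\;=\; \{\psi \in \mathcal{S}^*\} \cap \{T^*=\psi\}, \\
\{\mathcal{V}(X) = \mathcal{V}(X^{\text{obs}})\} \;\cap\; \{T^* = \psi\} &\;=\; \{\psi \in \mathcal{S}^*_{\mathcal{V}}\} \cap \{T^*=\psi\},
\end{align*}
on the event that $Z^*$ equals its observed value. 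Taking the intersection gives $\mathcal{S}^*_J = \mathcal{S}^* \cap \mathcal{S}^*_{\mathcal{V}}$ as the effective truncation set in the $\psi$-coordinate.

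Second, I would invoke the distributional content already established in the proof of Theorem \ref{thm:unknown_var}: under the strong null $H^*_{0,\mathcal{V}}$ in \eqref{eq:null_strong}, the images $P_{\mathcal{E}}X$ and $P_1 X$ are independent centered Gaussians on orthogonal subspaces of dimensions $d$ and $d^*$ respectively, while $Z^*$ records only their directions, their combined squared norm, and the orthogonal complement $P_2 X$. Standard Cochran/Basu arguments then show that the ratio $T^* = (\|P_{\mathcal{E}}X\|_F^2/d)/(\|P_1 X\|_F^2/d^*)$ follows an $F_{d,d^*}$ distribution and is independent of $Z^*$. Conditioning on the additional event $\{T^*\in\mathcal{S}^*_J\}$, which by the first step is equivalent to $\{\mathcal{C}(X)=\mathcal{C}(X^{\text{obs}}),\,\mathcal{V}(X)=\mathcal{V}(X^{\text{obs}})\}$, truncates this $F_{d,d^*}$ law to $\mathcal{S}^*_J$, yielding $Q^*_J = F_{F_{d,d^*}}(\,\cdot\,;\mathcal{S}^*_J)$.

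Third, the uniformity claim follows from the probability integral transform: since the conditional CDF of $T^*$ given $(\mathcal{C}(X),\mathcal{V}(X),Z^*)$ under $H^*_{0,\mathcal{V}}$ is the continuous function $Q^*_J$, the random variable $Q^*_J(T^*)$ is $\mathrm{Uniform}[0,1]$ conditionally on that sigma-algebra; averaging over $Z^*$ preserves uniformity, which gives $\mathbb{P}_{H^*_{0,\mathcal{V}}}(p^*_J\le \alpha \mid \mathcal{C}(X),\mathcal{V}(X)) = \alpha$. The main obstacle I anticipate is careful bookkeeping of the independence structure in the second step, because we now need $T^*$ to be independent of the entire conditioning tuple $Z^*$ jointly (not just of its individual components), and the selection event $\mathcal{S}^*_{\mathcal{V}}$ must be shown to be measurable with respect to $(Z^*, T^*)$ alone, so that it does not introduce new randomness beyond what is already captured by the reconstruction identity.
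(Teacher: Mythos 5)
Your plan follows essentially the same route as the paper's proof: the reconstruction identity $X=\tilde{x}^*(T^*,Z^*)$ turns both selection events into the truncation set $\mathcal{S}^*_J=\mathcal{S}^*\cap\mathcal{S}^*_{\mathcal{V}}$ in the $\psi$-coordinate, the pre-specified-case facts $T^*\sim F_{d,d^*}$ and $T^*\indep Z^*$ under the null give the truncated-$F$ conditional law, and the probability integral transform plus averaging over $Z^*$ gives the conditional uniformity. The independence bookkeeping you flag as the main obstacle is exactly what the paper's Lemmas 14--15 carry out (norm--direction independence of the projected Gaussians and independence of the ratio $T^*$ from the sum $\|P_{\mathcal{E}}X\|_F^2+\|P_1X\|_F^2$), so your proposal is correct and matches the paper's argument.
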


The characterization of the set $\mathcal{S}_{ \mathcal{V}}^*$ depends on the specific way in which $\mathcal{V}$ is chosen. We again consider Settings \ref{ex:1} through \ref{ex:3} of Section \ref{sec:group_dep}, with $x_\sigma$ replaced by $x^*.$ Then, the set $\mathcal{S}_{\mathcal{V}}^*$ in Settings \ref{ex:1} and \ref{ex:2} is defined by inequalities of the form 
\begin{align}\label{V_dep_unknown_1}
\|[x^*(\psi)]^\top \mathbf{v}\|^2_2<\|[x^*(\psi)]^\top \mathbf{v}'\|^2_2, 
\end{align}
where $\mathbf{v},\mathbf{v}'\in \mathbb{R}^n,$
and $\mathcal{S}_{\mathcal{V}}^*$ in Setting \ref{ex:3} is defined by inequalities of the form 
\begin{align}\label{V_dep_unknown_2}
\|[x^*(\psi)]^\top \mathbf{v}\|^2_2<t,
\end{align}
where $\mathbf{v}\in \mathbb{R}^n$ and $t>0.$ Proposition \ref{V_dep_unknown} expresses these inequalities explicitly with respect to $\psi.$

\begin{proposition}\label{V_dep_unknown}
Let $A,$ $B,$ and $C$ be defined as in Proposition \ref{prop:unknown_cts}. For any two vectors $\mathbf{v},\mathbf{v}'\in \mathbb{R}^n$ and for any $t>0,$ the inequalities in \eqref{V_dep_unknown_1} and \eqref{V_dep_unknown_2} are equivalent to 
\begin{align}\label{eq:prop8}
    h_\mathbf{v}(\psi)\leqslant h_{\mathbf{v}'}(\psi)\hspace{1em}\text{ and } \hspace{1em}h_\mathbf{v}(\psi)\leqslant h(\psi), 
\end{align}
respectively, 
where $h_\mathbf{v},h:[0,\infty)\rightarrow \mathbb{R},$
\begin{align*}
    h_\mathbf{v}(\psi)&=\lambda_{\mathbf{v},1}\psi + \lambda_{\mathbf{v},2}\sqrt{\psi}+\lambda_{\mathbf{v},3}\sqrt{\psi}\sqrt{\psi+r^*}+\lambda_{\mathbf{v},4}\sqrt{\psi+r^*}+\lambda_{\mathbf{v},5},\\
    h(\psi)&=\gamma_1\psi + \gamma_2,
\end{align*}
with coefficients
\begin{align*}
    \lambda_{\mathbf{v},1} = & ~\|\mathbf{a}_{\mathbf{v}}\|_2^2+  \|\mathbf{c}_{\mathbf{v}}\|_2^2, & 
 \lambda_{\mathbf{v},2} =  & ~2 \langle \mathbf{a}_{\mathbf{v}},  \mathbf{b}_{\mathbf{v}} \rangle \sqrt{r^*} , &
  \lambda_{\mathbf{v},3} =  &~  2 \langle  \mathbf{a}_{\mathbf{v}}, \mathbf{c}_{\mathbf{v}} \rangle, \\
 \lambda_{\mathbf{v},4} = & ~ 2\langle \mathbf{b}_{\mathbf{v}}, \mathbf{c}_{\mathbf{v}}\rangle \sqrt{r^*}, & 
 \lambda_{\mathbf{v},5}=   & ~  (\|\mathbf{b}_{\mathbf{v}}\|_2^2+\|\mathbf{c}_{\mathbf{v}}\|_2^2)r^*,  &  \\
    {\gamma}_{1} =  &  ~t/(\|P_{\mathcal{E}}X\|_F^2+\|P_{1}X\|_F^2), & 
     {\gamma}_{2} =  & ~ r^*t/(\|P_{\mathcal{E}}X\|_F^2+\|P_{1}X\|_F^2),
\end{align*}
where $\mathbf{a}_{\mathbf{v}}=A^\top \mathbf{v}$, $\mathbf{b}_{\mathbf{v}}= B^\top \mathbf{v}$, and $\mathbf{c}_{\mathbf{v}}=C^\top \mathbf{v}.$
\end{proposition}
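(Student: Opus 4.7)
The plan is to prove both equivalences by direct computation, starting from the explicit form of $x^*(\psi)$ and expressing $\|[x^*(\psi)]^\top \mathbf{v}\|_2^2$ in the coordinates $A$, $B$, $C$ defined in Proposition \ref{prop:unknown_cts}. First, I would substitute $A = P_\mathcal{E}X/\|P_\mathcal{E}X\|_F$, $B = P_1 X/\|P_1X\|_F$, and $C = P_2X/\sqrt{\|P_\mathcal{E}X\|_F^2+\|P_1X\|_F^2}$ into the definition of $x^*(\psi)$ and factor out $\sqrt{\|P_\mathcal{E}X\|_F^2+\|P_1X\|_F^2}$, so that
\begin{align*}
x^*(\psi)=\sqrt{\|P_\mathcal{E}X\|_F^2+\|P_1X\|_F^2}\left(\sqrt{\tfrac{\psi}{\psi+r^*}}\,A+\sqrt{\tfrac{r^*}{\psi+r^*}}\,B+C\right).
\end{align*}
Taking the transpose and right-multiplying by $\mathbf{v}$ then gives an expression in $\mathbf{a}_\mathbf{v}=A^\top\mathbf{v}$, $\mathbf{b}_\mathbf{v}=B^\top\mathbf{v}$, $\mathbf{c}_\mathbf{v}=C^\top\mathbf{v}$ with the same scalar coefficients.

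Second, I would expand $\|[x^*(\psi)]^\top\mathbf{v}\|_2^2$ as a squared norm of a three-term sum, yielding six contributions: three squared-norm terms with scalar weights $\psi/(\psi+r^*)$, $r^*/(\psi+r^*)$, and $1$, and three cross terms with weights involving $\sqrt{\psi r^*}/(\psi+r^*)$, $\sqrt{\psi/(\psi+r^*)}$, and $\sqrt{r^*/(\psi+r^*)}$. To clear the awkward denominators, I would multiply through by $(\psi+r^*)/(\|P_\mathcal{E}X\|_F^2+\|P_1X\|_F^2)$, which collapses the expression into the five-term form
\begin{align*}
\psi(\|\mathbf{a}_\mathbf{v}\|_2^2+\|\mathbf{c}_\mathbf{v}\|_2^2)+2\sqrt{r^*}\sqrt{\psi}\langle\mathbf{a}_\mathbf{v},\mathbf{b}_\mathbf{v}\rangle+2\sqrt{\psi}\sqrt{\psi+r^*}\langle\mathbf{a}_\mathbf{v},\mathbf{c}_\mathbf{v}\rangle+2\sqrt{r^*}\sqrt{\psi+r^*}\langle\mathbf{b}_\mathbf{v},\mathbf{c}_\mathbf{v}\rangle+r^*(\|\mathbf{b}_\mathbf{v}\|_2^2+\|\mathbf{c}_\mathbf{v}\|_2^2),
\end{align*}
which matches $h_\mathbf{v}(\psi)$ coefficient by coefficient.

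Finally, I would translate the two inequalities \eqref{V_dep_unknown_1} and \eqref{V_dep_unknown_2} into the form \eqref{eq:prop8}. For \eqref{V_dep_unknown_1}, since $\psi\geq 0$, $r^*>0$, and $\|P_\mathcal{E}X\|_F^2+\|P_1X\|_F^2>0$ (noting that the test statistic is only defined when this denominator is positive), the rescaling factor $(\psi+r^*)/(\|P_\mathcal{E}X\|_F^2+\|P_1X\|_F^2)$ is strictly positive, so the inequality is equivalent to $h_\mathbf{v}(\psi)\leq h_{\mathbf{v}'}(\psi)$. For \eqref{V_dep_unknown_2}, the same rescaling sends $t$ to $t(\psi+r^*)/(\|P_\mathcal{E}X\|_F^2+\|P_1X\|_F^2)=\gamma_1\psi+\gamma_2$, giving precisely $h_\mathbf{v}(\psi)\leq h(\psi)$.

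The argument is essentially a bookkeeping calculation, so there is no conceptual hurdle; the main obstacle is simply tracking the six cross terms from the squared-norm expansion and verifying that clearing the $\psi+r^*$ denominator produces the exact coefficients $\lambda_{\mathbf{v},1},\ldots,\lambda_{\mathbf{v},5}$ and $\gamma_1,\gamma_2$ as stated. Care is needed to note that $\sqrt{\psi/(\psi+r^*)}\cdot(\psi+r^*)=\sqrt{\psi}\sqrt{\psi+r^*}$ and $\sqrt{\psi r^*}/(\psi+r^*)\cdot(\psi+r^*)=\sqrt{r^*}\sqrt{\psi}$, which is what produces the mixed radical structure of $h_\mathbf{v}$.
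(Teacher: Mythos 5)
Your proposal is correct and follows essentially the same route as the paper's proof: write $x^*(\psi)$ as $\sqrt{\|P_{\mathcal{E}}X\|_F^2+\|P_1X\|_F^2}\bigl(\sqrt{\psi/(\psi+r^*)}A+\sqrt{r^*/(\psi+r^*)}B+C\bigr)$, expand $\|[x^*(\psi)]^\top\mathbf{v}\|_2^2$, and multiply by the positive factor $(\psi+r^*)/(\|P_{\mathcal{E}}X\|_F^2+\|P_1X\|_F^2)$ to obtain $h_{\mathbf{v}}(\psi)$, with the threshold $t$ transformed into $\gamma_1\psi+\gamma_2$. The coefficient bookkeeping you describe (in particular the merging of the $\|\mathbf{c}_{\mathbf{v}}\|_2^2$ contribution into both $\lambda_{\mathbf{v},1}$ and $\lambda_{\mathbf{v},5}$) matches the paper's computation exactly.
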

The solution sets to the inequalities in \eqref{eq:prop8} can be computed by a method analogous to that discussed in Remark \ref{rem:unknown_comp}. It then follows by Theorem \ref{thm:unknownvar_datav} and Proposition \ref{V_dep_unknown} that $p_{J}^*$ can be computed exactly. 

\section{Simulations}\label{sec:simul}
We now present a simulation study of the proposed tests, exploring their Type I error control and empirical powers. 
Sections \ref{sec:simul_known} and \ref{sec:simul_unknown} provide empirical results for the settings where the noise level $\sigma$ is known and unknown, respectively. Sections \ref{sec:simul_known_pre} and \ref{sec:simul_unknown_pre} consider the case where $\mathcal{V}$ is pre-specified, and Sections 
\ref{sec:simul_known_dep} and \ref{sec:simul_unknown_dep} address the case where the choice of $\mathcal{V}$ is data-dependent. Specifically, Sections \ref{sec:simul_known_pre}, \ref{sec:simul_unknown_pre}, \ref{sec:simul_known_dep}, and \ref{sec:simul_unknown_dep} demonstrate the numerical performance of the  proposed tests based on the p-values $p_\sigma,$ $p_{\sigma,J},$ $p^*,$ and $p^*_J,$ respectively. 
In relevant contexts, we make comparisons with the baseline testing procedure of Section \ref{sec:bonfcorr} as well as the tests based on the plug-in estimators $\hat{\sigma}_{\textrm{sample}}$ and $\hat{\sigma}_{\textrm{med}}.$ For the simplicity of writing, we henceforth denote the tests based on $p_{\hat{\sigma}},$  $p_{\hat{\sigma},J},$ $p^*,$ $p^*_J,$ and $p_{\hat{\sigma},\mathrm{Bon}}$ as  $\phi_{\hat{\sigma}},$  $\phi_{\hat{\sigma},J},$ $\phi^*,$ $\phi^*_J,$ and $\phi_{\hat{\sigma},\mathrm{Bon}},$ respectively, where $\hat{\sigma}\in\{\sigma, \hat{\sigma}_{\mathrm{sample}},\hat{\sigma}_{\mathrm{med}}\}.$

Before presenting the empirical results, we first discuss the simulation settings that are common to all experiments in this section. For each test,  
we sample the associated p-value as follows. We draw each data  from the distribution in \eqref{data_gen} with $n=120$ and $\sigma=1,$ and run $K$-means clustering to divide the observations into $K$ clusters, where the values for $\mu,$ $q,$ 
and $K$ are specified in the respective sections. 
We then compute the selective p-value. We discuss below the ways in which the Type I error control and the empirical power of a test are illustrated in each section. 

\paragraph{Type I error} For experiments under the null hypothesis, we set $\mu$ in \eqref{data_gen} as $0_{n\times q}.$ To explore the Type I error control of a test, we sample the associated p-value 1500 times. We then present a QQ plot that compares the empirical quantiles of the 1500 p-values against the theoretical quantiles of the uniform distribution supported on $[0,1].$

\paragraph{Empirical power} For experiments under the alternative hypothesis, we consider the setting where there are $K_0\in\mathbb{N}$ different values of $\mu_i$s, and we henceforth refer to the sets of indices of the observations partitioned by their means as true clusters. Additionally, we set $K=K_0,$ i.e., the number of clusters produced by $K$-means clustering is equal to the number of true clusters. We consider two different specifications of $\mu$ in \eqref{data_gen} for studying the empirical power of a test. For any $\delta\geq0$ and each $k\in \{0\}\cup [K-1],$ define
\begin{itemize}
    \item (Horizontal)  $\mu_H^{(k)}=(k\cdot \delta, \underbrace{0,...,0}_{q-1})^\top$ and
    \item ($K$-gon)
    $\mu_K^{(k)}=(\delta \cdot \mathrm{cos}(2\pi k/K),\delta\cdot \mathrm{sin}(2\pi k/K),\underbrace{0,...,0}_{q-2})^\top.$ 
\end{itemize}
Note that if $q=2,$ $\mu_H^{(k)}$ for $k\in \{0\}\cup [K-1]$ are arranged horizontally, and $\mu_K^{(k)}$ for $k\in \{0\}\cup [K-1]$ form the $K$ vertices of a polygon, hence the labels. We set each of the $K$ clusters to contain an equal number of observations; specifically, there are $n/K$ $\mu_i$s that are set to $\mu_H^{(k)}$ for each $k\in \{0\}\cup [K-1]$ for the Horizontal case, and similarly for the $K$-gon case---we only consider values of $K$ that divide $n.$ Visualizations of data generated from the two specifications of $\mu$ along with the clustering outcomes for the case where $\delta=6,$ $q=2,$ and $K=3$ are shown in Figure \ref{figure_4}---in visualizations of clustering outcomes in this figure and in the rest of this paper, true clusters are differentiated by shape and  (estimated) clusters by color. 

\begin{figure}[ht]
    \centering
    {{\includegraphics[width=0.6\textwidth]{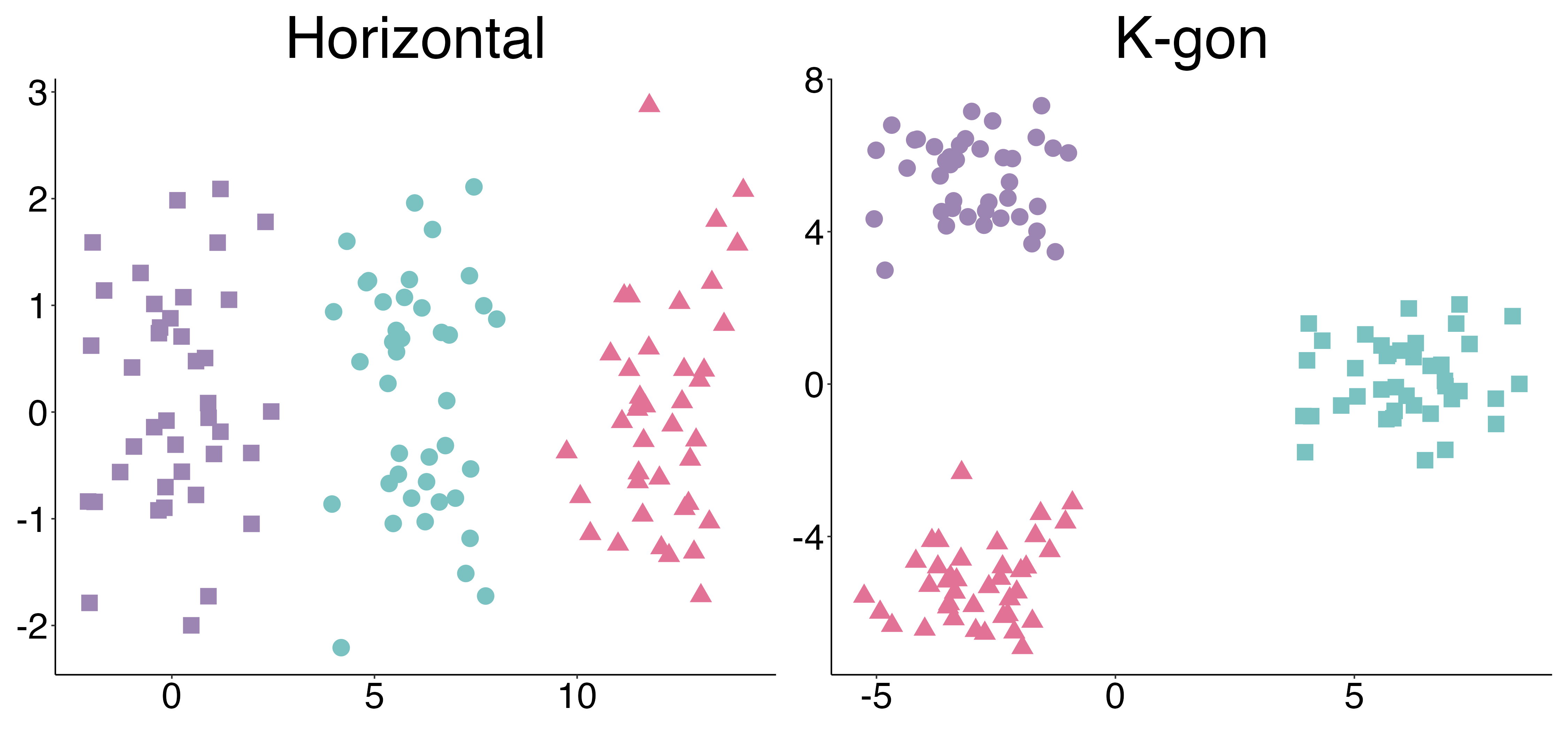}}}
    \caption{Visualizations of simulated data consistent with the alternative hypothesis for $\delta=6.$}\label{figure_4}
\end{figure} 

For each signal strength $\delta\in \{0,0.5,1,...,5.5,6\}$ and $\mu$ set according to either of the two specifications above, we 
sample 1500 p-values of the test of interest. Then, for each value of $\delta,$ we compute the empirical power of the test as the proportion of null hypotheses that are rejected (at the significance level of 0.05) among those that are false, and present a plot of empirical powers against $\delta.$ 

\subsection{Testing for $H_{0,\mathcal{V}}$ for known $\sigma$}\label{sec:simul_known}
We start by presenting the Type I error control and empirical powers of the proposed tests for the null hypothesis in \eqref{global_null}, where $\sigma$ is assumed to be known. The performance of the tests $\phi_{\sigma}$ and $\phi_{\sigma,J}$ are illustrated in Sections \ref{sec:simul_known_pre} and \ref{sec:simul_known_dep}, respectively. Unless specified otherwise, we set $q=2$ throughout the experiments in this section. 

\subsubsection{Pre-specified $\mathcal{V}$} \label{sec:simul_known_pre}
For the setting where $\mathcal{V}$ is pre-specified, we test for the global null hypothesis $H_{0,\mathcal{V}}:\bar{\mu}_1=\cdots\bar{\mu}_K,$ where we consider three different choices of  $\mathcal{V}$:
\begin{itemize}
    \item $\mathcal{V}_1=\{(k,k'):k<k'\in[K]\},$
    \item $\mathcal{V}_2=\{(k,k+1):k\in[K-1]\},$ and
    \item $\mathcal{V}_3=\{(1,k):k\in[K]\backslash\{1\}\}.$
\end{itemize}
As the corresponding spaces $\mathrm{span}\{\mathbf{v}_{k,k'}:(k,k')\in \mathcal{V}_m\}$ for $m\in[3]$ are identical, the proposed test $\phi_{\sigma}$ remains invariant to the different choices of $\mathcal{V}$, while the baseline testing procedure $\phi_{\sigma, \mathrm{Bon}}$ does not; see Remark \ref{rem:invariance} for further discussion. 

\paragraph{Type I error} Figure \ref{figure_3} presents QQ plots of 
samples of $p_{\sigma}$ generated under the null hypothesis. The plots show that $p_{\sigma}$ is uniformly distributed, which is consistent with Theorem \ref{thm:main_thm}. Analogous plots for $p_{\sigma,\mathrm{Bon}}$ for the setting where $q=2$ and $\mathcal{V}=\mathcal{V}_1$ was presented in Figure \ref{figure_1}.

\begin{figure}[ht]
    \centering
    {{\includegraphics[width=0.8\textwidth]{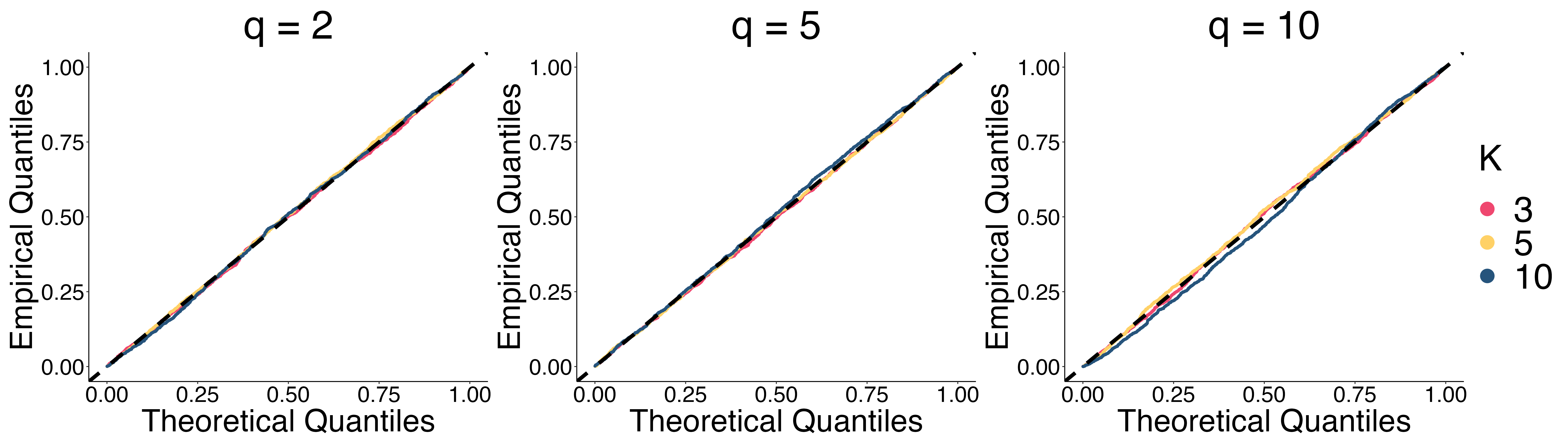}}}
    \caption{QQ plots of 
samples of $p_{\sigma}$ generated under the null hypothesis in \eqref{global_null} with a pre-specified $\mathcal{V}.$}\label{figure_3}
\end{figure}

\paragraph{Empirical power} 
Figure \ref{figure_5} presents the empirical powers of $\phi_{\sigma}$ and $\phi_{\sigma,\mathrm{Bon}},$ where the performance of the latter is compared for the three different choices of $\mathcal{V}.$ The figure illustrates that the relative performances of the tests vary depending on the signal strength: the proposed test $\phi_\sigma$ achieves a higher empirical power than $\phi_{\sigma,\mathrm{Bon}}$ in the presence of weak signals, while the opposite is true in the presence of strong signals. 

We speculate that the intuition for the higher empirical power of $\phi_\sigma$ in the presence of weak signals lies in the construction of the test statistic that combines the signals across all pairs of clusters of interest by considering the span of all of the vectors  $v_{k,k'}$s for $(k,k')$ in $\mathcal{V}.$ On the other hand, the presence of at least one strong signal is conducive to $\phi_{\sigma,\mathrm{Bon}},$ whose test statistic is associated with only the strongest signal, potentially accounting for its superior performance for large values of $\delta.$

\begin{figure}[ht]
    \centering
    {{\includegraphics[width=0.9\textwidth]{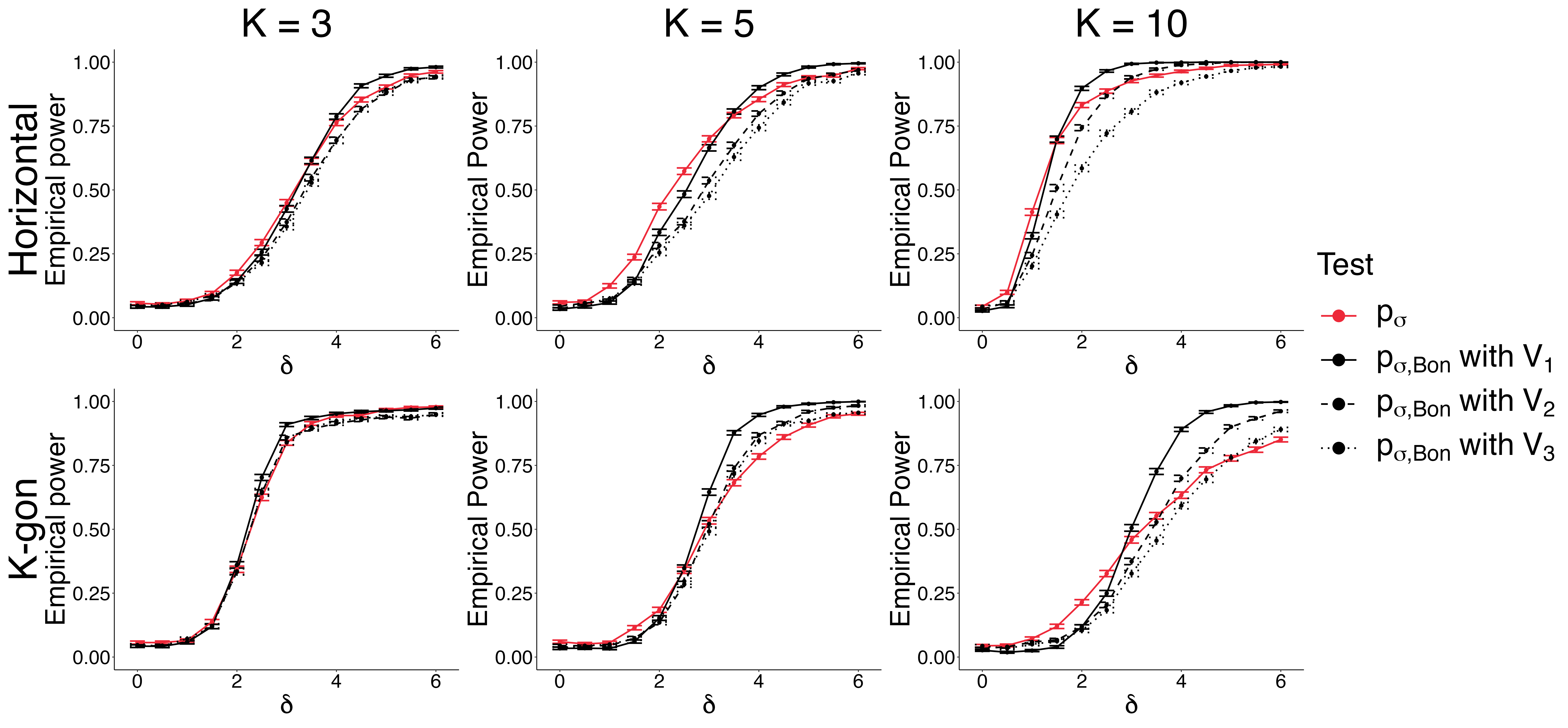}}}
    \caption{Plots of empirical powers of the proposed test based on $p_{\sigma}$ and the baseline testing procedure based on $p_{\sigma,\mathrm{Bon}}$ for the null hypothesis in \eqref{global_null} with pre-specified $\mathcal{V}$s.}\label{figure_5}
\end{figure}

\subsubsection{Data-dependent choice of $\mathcal{V}$} \label{sec:simul_known_dep}

We next consider the setting where $\mathcal{V}$ is chosen in a data-dependent way. 

\paragraph{Type I error} The first and second columns of Figure \ref{figure_6} present QQ plots of empirical p-values obtained from the tests $\phi_{\sigma,J}$ and $\phi_\sigma,$ respectively, under the null hypothesis when $K=20$ clusters are produced by $K$-means clustering. Consistent with Theorem \ref{thm:group_dep}, the figure shows that $p_{\sigma,J}$ is uniformly distributed under both Settings \ref{ex:1} and \ref{ex:2}. On the other hand, $\phi_\sigma$ fails to control the Type I error under Setting \ref{ex:1}, where the test becomes more confident as $g,$ the number of pairs of clusters chosen, decreases. The test controls the Type I error under Setting \ref{ex:2} but becomes more conservative as $g$ decreases. As discussed in Remark \ref{remark_dep}, this observation aligns with the intuition that the null hypothesis is less likely to happen under Setting \ref{ex:1} and more likely to happen under Setting \ref{ex:2}. Note that the QQ plot of a sample of $p_\sigma$ for the case where $g=1$ was also presented in Figure \ref{figure_2}. 

\begin{figure}[ht]
    \centering
    {{\includegraphics[width=0.6\textwidth]{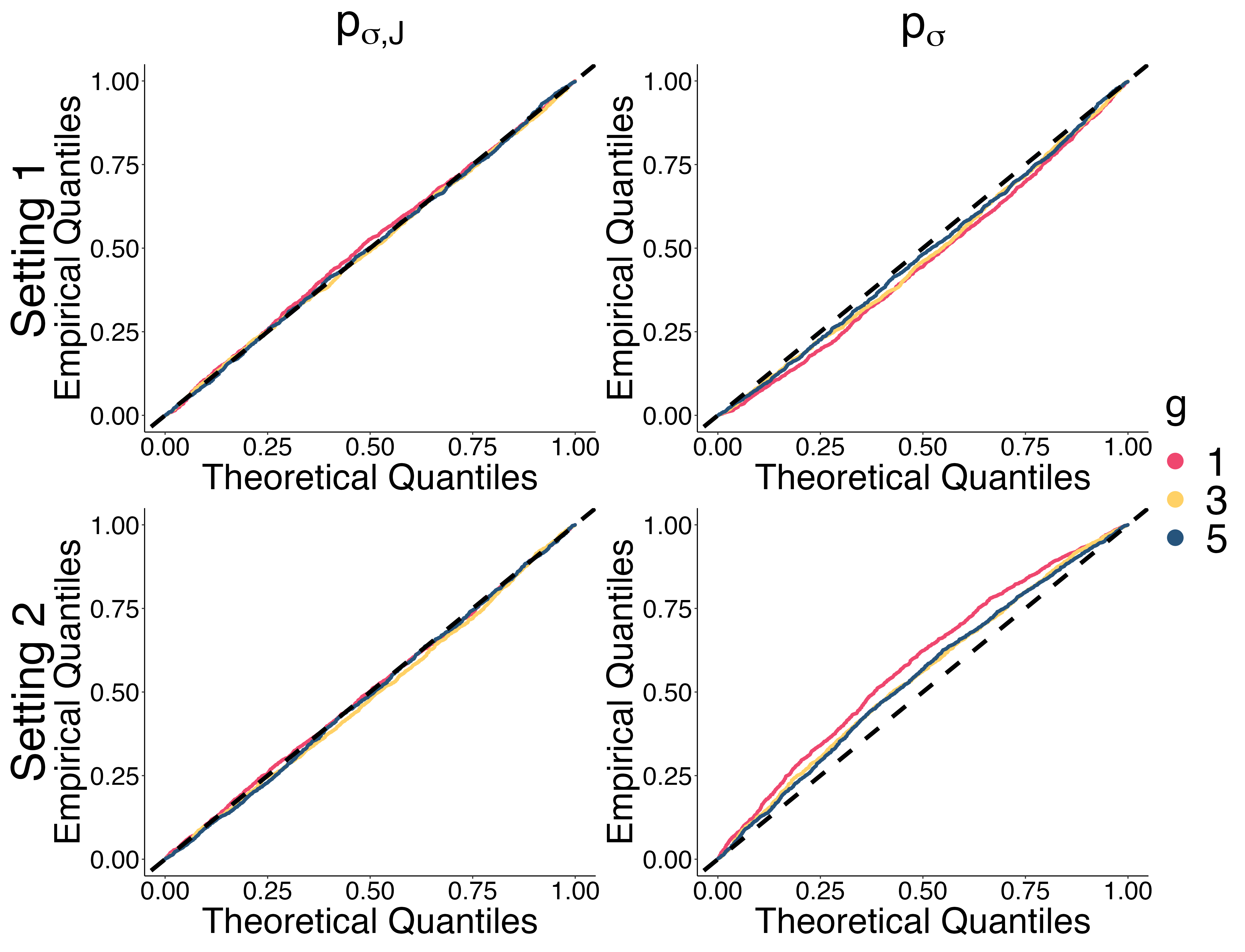}}}
    \caption{QQ plots of a sample of $p_{\sigma,J}$ (left) and a sample of $p_{\sigma}$ (right) generated under the null hypothesis in \eqref{global_null} with data-dependent choices of $\mathcal{V},$ for $K=20.$}\label{figure_6}
\end{figure}

Figure \ref{figure_7} presents analogous results for different values of $K$ when $g=3$ pairs of clusters are chosen to be tested. The figure shows that as $K$ increases, $p_{\sigma}$ becomes more confident under Setting \ref{ex:1} and more conservative under Setting \ref{ex:2}. Both Figures \ref{figure_6} and \ref{figure_7} illustrate that the effects of not accounting for the data-dependence in the choice of $\mathcal{V}$ are not significant, especially when $K$ is small or $g$ is large. 

\begin{figure}[ht]
    \centering
    {{\includegraphics[width=0.6\textwidth]{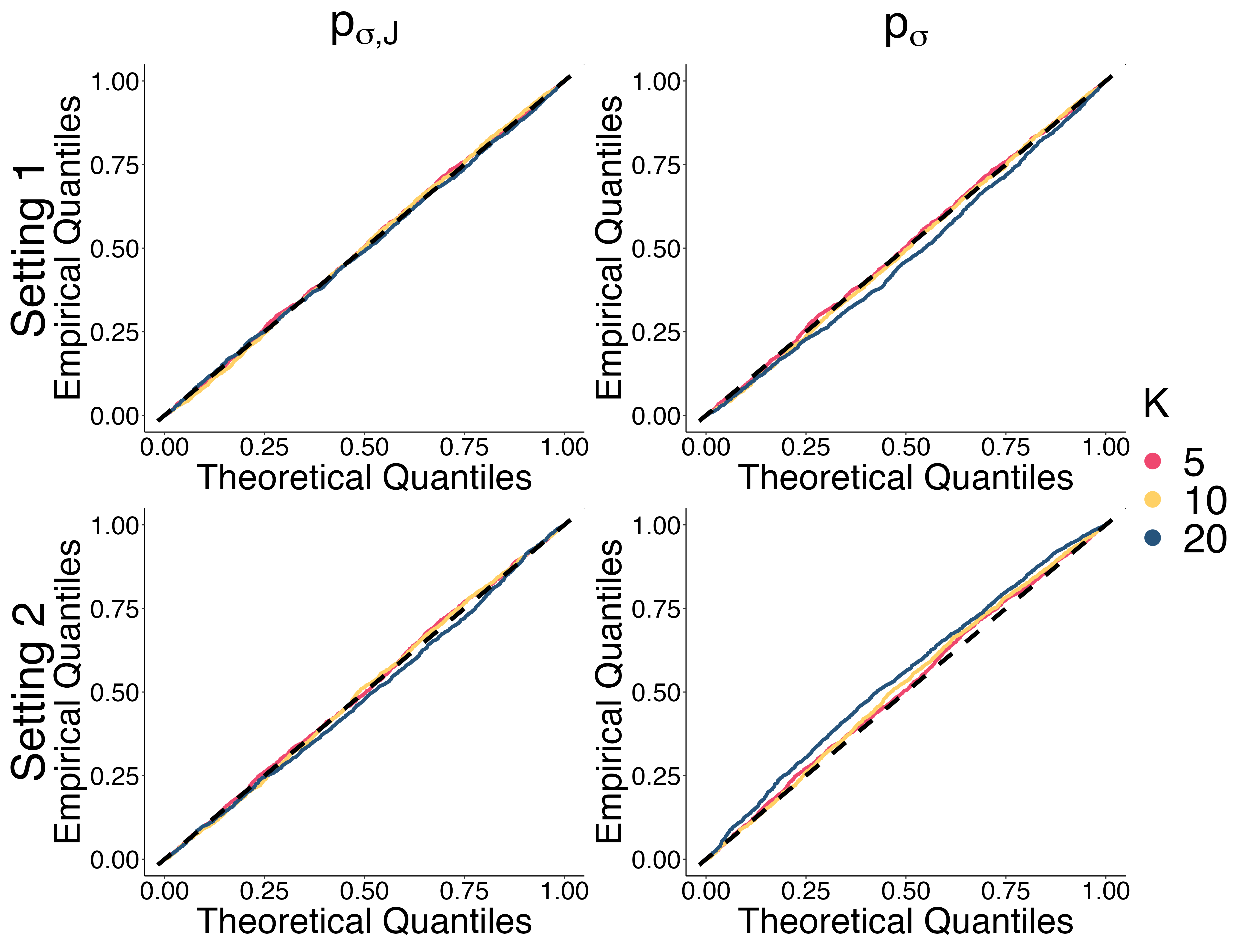}}}
    \caption{QQ plots of a sample of $p_{\sigma,J}$ (left) and a sample of $p_{\sigma}$ (right) generated under the null hypothesis in \eqref{global_null} with data-dependent choices of $\mathcal{V},$ for $g=3.$}\label{figure_7}
\end{figure}
 
\paragraph{Empirical power} Figure \ref{figure_8} presents empirical powers of $\phi_{\sigma,J}$ and $\phi_{\sigma}$ under Settings \ref{ex:1} and \ref{ex:2} with $K=20$ and $g=3.$ As suggested by the Type I error control of $\phi_{\sigma},$ the empirical power of $\phi_{\sigma, J}$ tends to be lower than that of $\phi_\sigma$ under Setting \ref{ex:1}, and the opposite is true under Setting \ref{ex:2}. 

\begin{figure}[ht]
    \centering
{{\includegraphics[width=0.6\textwidth]{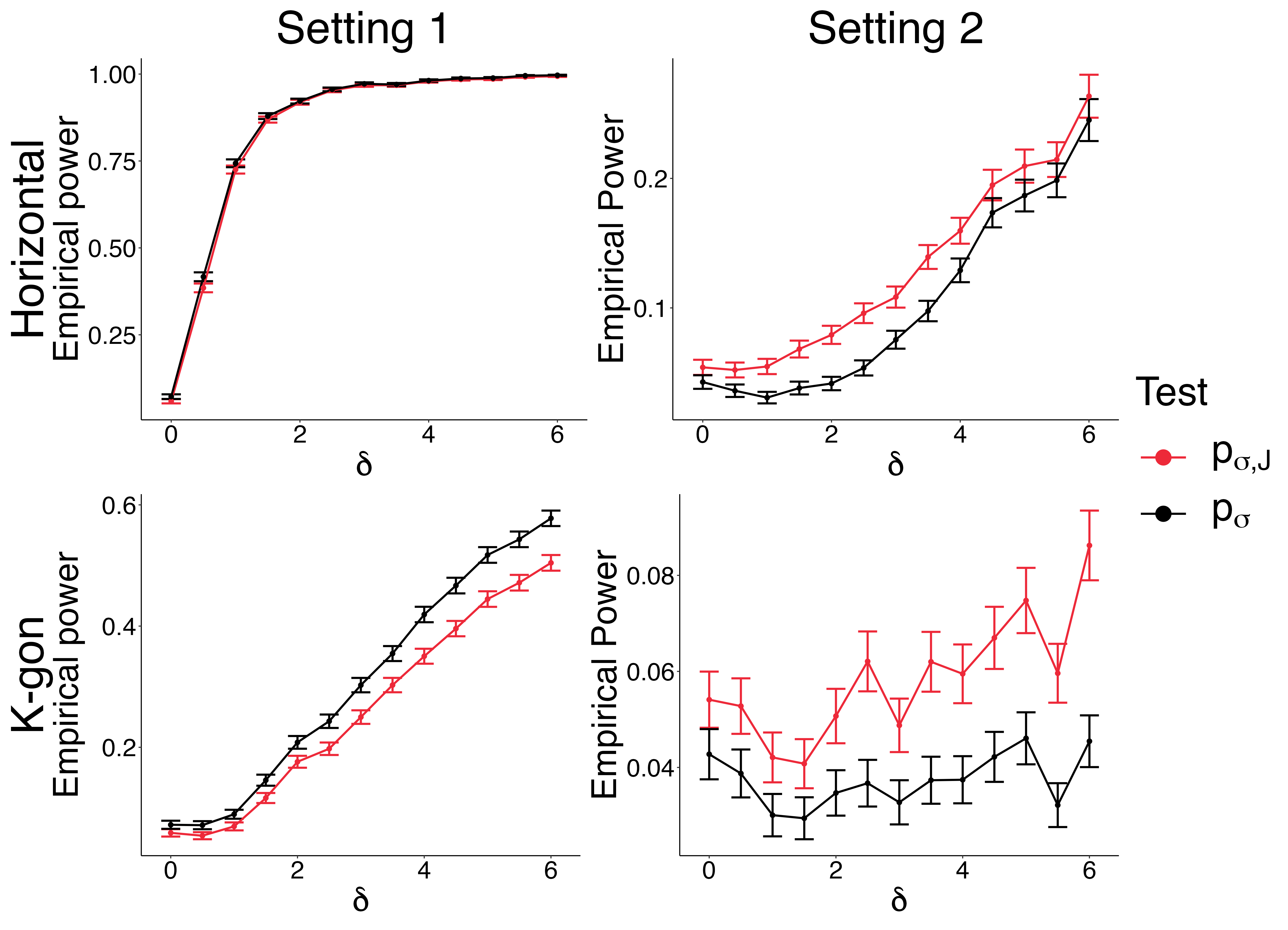}}}
    \caption{Plots of empirical powers of the tests based on $p_{\sigma,J}$ and $p_{\sigma}$ for the null hypothesis in \eqref{global_null} with data-dependent choices of $\mathcal{V},$ for $K=20$ and $g=3.$}\label{figure_8}
\end{figure}

\subsection{Testing for $H_{0,\mathcal{V}}^*$ for unknown $\sigma$}\label{sec:simul_unknown}

We next consider the setting where $\sigma$ is unknown and test for the null hypothesis in \eqref{eq:null_strong}. Unless specified otherwise, we set $q=20$ throughout the experiments in this section. The choice of $q=20$ that is relatively large is due to a computational limitation of $\phi^*,$ which we discuss below.

\paragraph{Computational limitation of $\phi^*$} Since the selective p-value $p^*$ is the right-tail probability of a truncated $F$ distribution, its computation involves taking ratios of probabilities associated with the $F$ distribution, which may be very small in some cases. As a result, computing the probabilities using a built-in function in \verb|R| and then taking ratios can lead to numerical inaccuracies. To address this computational issue, we take an alternative approach for computing the p-value. Specifically, following the approach of \citet[Appendix A.3]{yun2023selective}, we first use the procedure of \citet{li2002approximation} to approximate the truncation set $\mathcal{S}^*,$ which is associated with the $F_{n_1,n_2}$ distribution (where $n_1,n_2>0$) with a set associated with the $\chi^2_{n_1}$ distribution, which we denote as $\mathcal{S}^*_{\chi^2_{n_1}},$ so that $p_J^*\approx 1-F_{\chi_{n_1}^2}(T^*;\mathcal{S}^*_{\chi^2_{n_1}}),$ where $F_{\chi_{n_1}^2}(\cdot;\mathcal{S}^*_{\chi^2_{n_1}})$ represents the CDF of the $\chi_{n_1}^2$ distribution truncated to the set $\mathcal{S}^*_{\chi^2_{n_1}}.$ We then compute the approximation of $p_J^*$ using the function \verb|TChisqRatioApprox| of the package \verb|KmeansInference| of \citet{chen2023selective}.

We speculate that the approximation of the $F$ distribution with the $\chi^2$ distribution is less accurate for values that are far in the right tail of the $F$ distribution. We have observed that the realizations of the set $\mathcal{S}^*$ tend to decrease with $q$ and increase with $n$ and the signal strength $\delta.$ Accordingly, we have noticed that the accuracy of the approximation, as measured by the empirical Type I error control of $\phi^*,$
tends to increase with $q$ and decrease with $n$ and $\delta,$ where the realizations of $p^*$ tend to be underestimated for small values of $q.$ Thus, to ensure that the results presented in this section are accurate both under the null hypothesis and the alternative hypothesis, we set $q=20$ unless specified otherwise. 

\subsubsection{Pre-specified $\mathcal{V}$}\label{sec:simul_unknown_pre}

We test for the global null hypothesis $H_{\mathcal{V}}^*:\mu_i=\mu_{i'}$ for all $i,i'\in [n],$ where we set $\mathcal{V}=\mathcal{V}_1.$
 
\paragraph{Type I error} Figure \ref{figure_9} presents QQ plots of samples of p-values obtained from the test $\phi^*$ under the null hypothesis, which are consistent with Theorem \ref{thm:unknown_var}. Figure \ref{figure_10} presents analogous plots for $\phi_{\hat{\sigma}}$ for $\hat{\sigma}\in \{\hat{\sigma}_{\textrm{sample}},\hat{\sigma}_{\textrm{med}}\},$ illustrating that each test controls the Type I error empirically.

\begin{figure}[ht]
    \centering
    {{\includegraphics[width=0.8\textwidth]{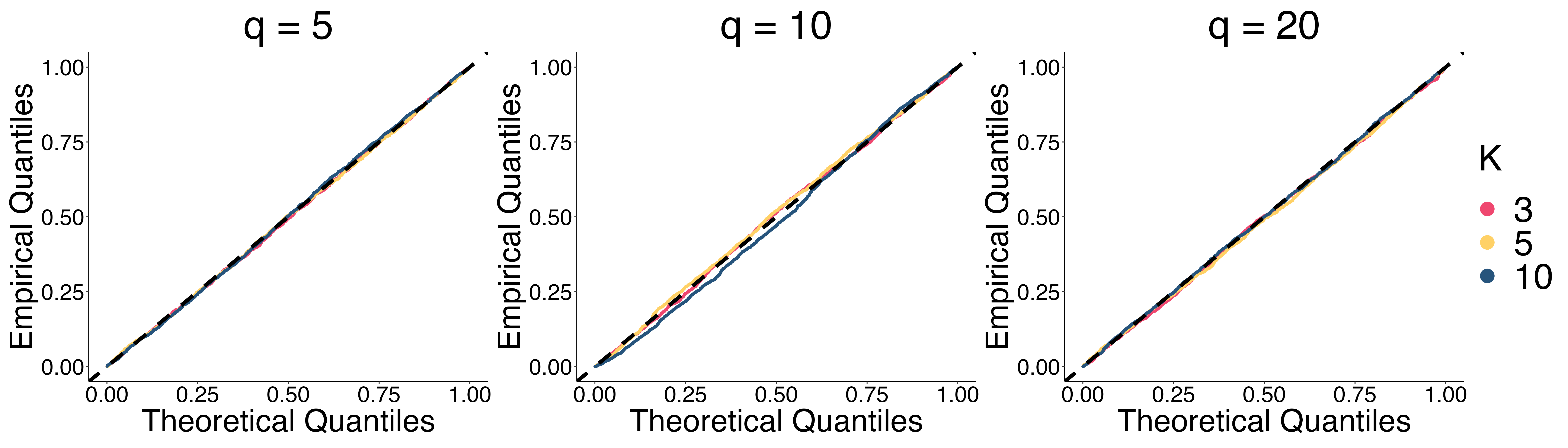}}}
    \caption{QQ plots of samples of $p^*$ generated under the null hypothesis in \eqref{eq:null_strong} with a pre-specified $\mathcal{V}.$}\label{figure_9}
\end{figure}

\begin{figure}[ht]
    \centering
    {{\includegraphics[width=0.6\textwidth]{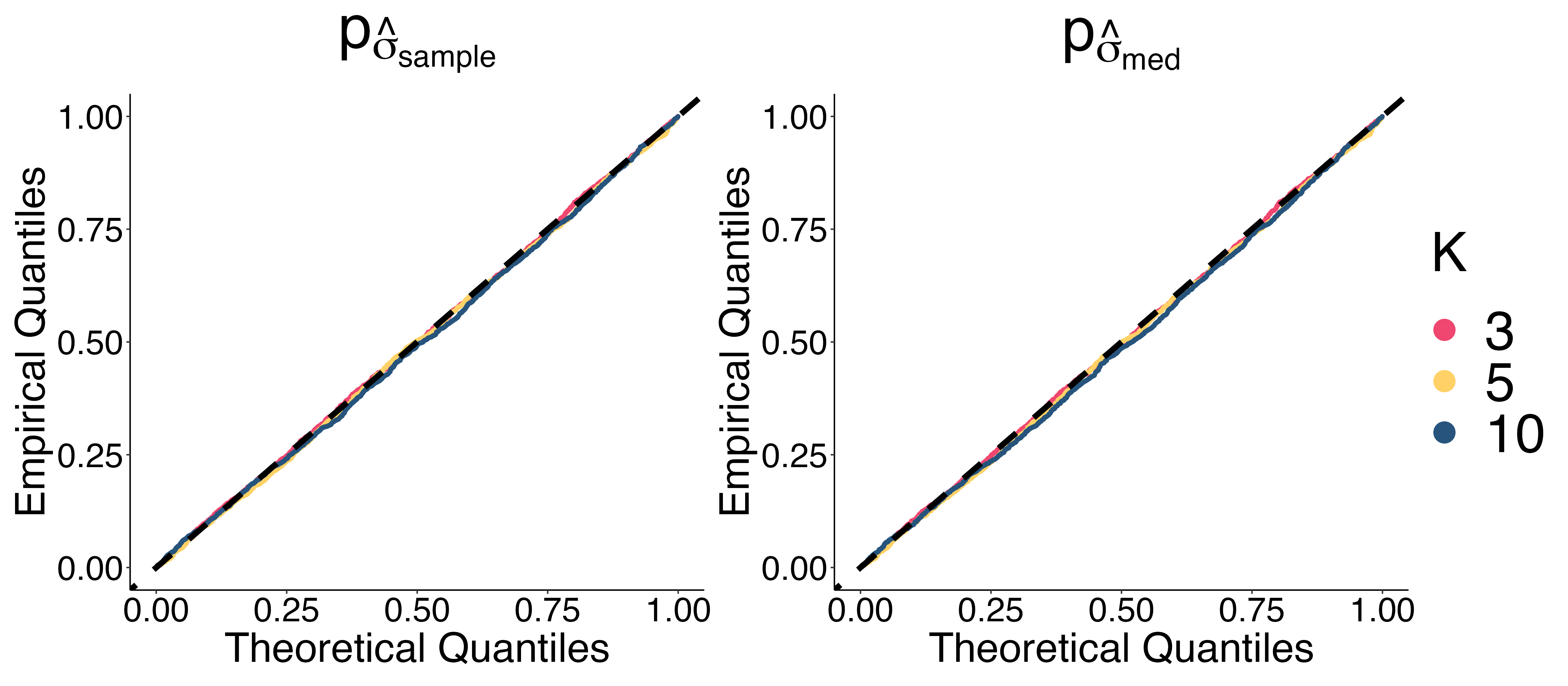}}}
    \caption{QQ plots of a sample of $p_{\hat{\sigma}}$ from each of $\hat{\sigma}\in\{\hat{\sigma}_{\mathrm{sample}},\hat{\sigma}_{\mathrm{med}}\}$ generated under the null hypothesis in \eqref{eq:null_strong} with a pre-specified $\mathcal{V}.$}\label{figure_10}
\end{figure}

\paragraph{Empirical power} Figure \ref{figure_11} presents empirical powers of the tests $\phi^*,$ $\phi_{\hat{\sigma}},$ and $\phi_{\hat{\sigma}, \mathrm{Bon}}$ for $\hat{\sigma}\in\{\hat{\sigma}_{\mathrm{sample}},\hat{\sigma}_{\mathrm{Bon}}\}.$ The figure illustrates that the relative performances of $\phi^*$ and $\phi_{\hat{\sigma}, \mathrm{Bon}}$ for either choice of $\hat{\sigma}\in\{\hat{\sigma}_{\mathrm{sample}},\hat{\sigma}_{\mathrm{Bon}}\}$ is analogous to that illustrated in Figure \ref{figure_5}, where $\phi^*$ achieves a higher empirical power in the presence of weak signals. The figure shows that the empirical powers of $\phi^*$ and $\phi_{\hat{\sigma}}$ for either choice of $\hat{\sigma}$ do not differ much in the presence of weak signals. For strong signal strengths, $\phi^*$ tends to perform marginally better than $\phi_{\hat{\sigma}_\mathrm{sample}},$ but is outperformed by $\phi_{\hat{\sigma}_\mathrm{med}}$ in most cases. 

In this figure, as well as in Figures \ref{figure_11} and \ref{figure_13}, the empirical power of $\phi^*$ may not be trusted for large values of $\delta$ for the Horizontal case as we have observed that the realizations of $S^*$ tend to be large even when $q=20.$ 

\begin{figure}[ht]
    \centering
    {{\includegraphics[width=0.9\textwidth]{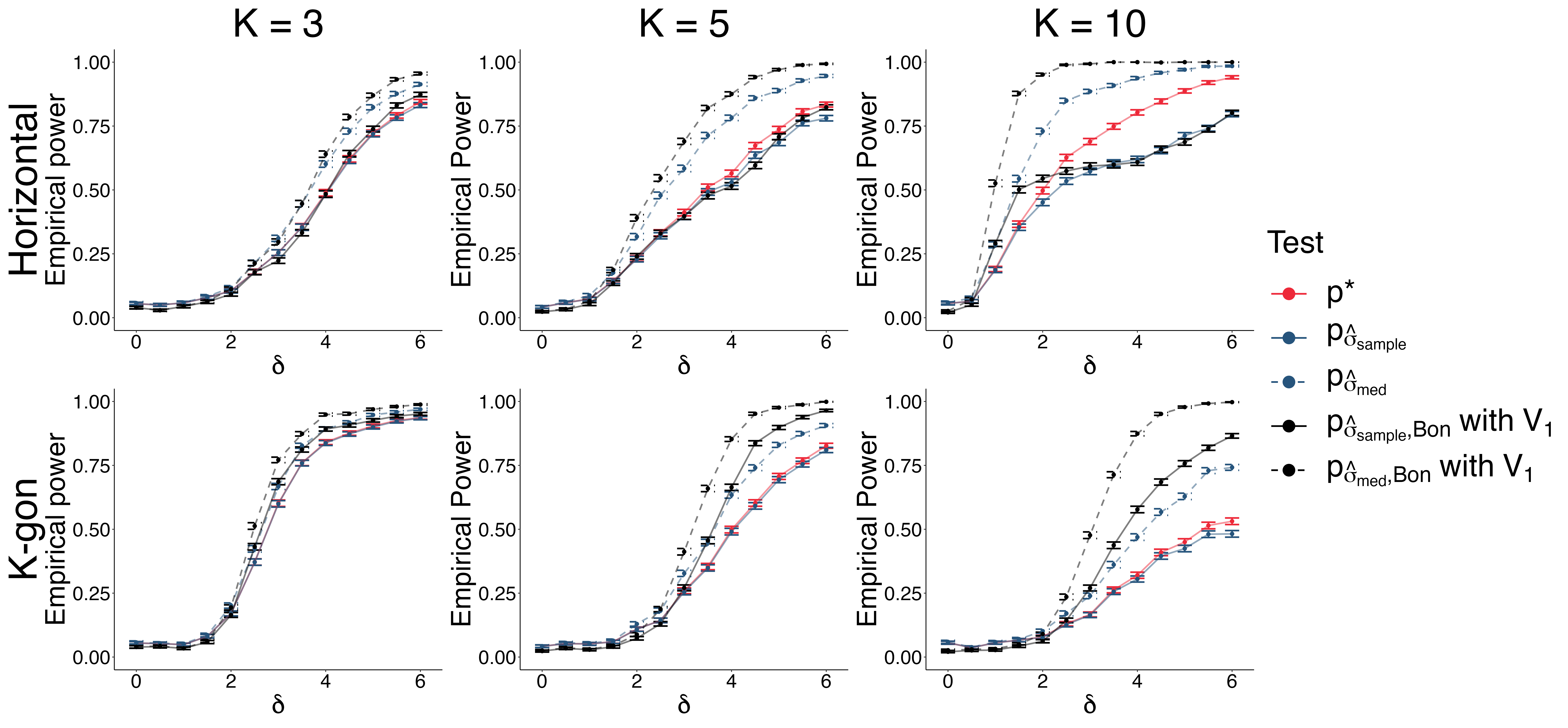}}}
    \caption{Plots of empirical powers of various tests for the null hypothesis in \eqref{eq:null_strong} with a pre-specified $\mathcal{V}.$}\label{figure_11}
\end{figure}

Figure \ref{figure_12} presents the empirical powers of $\phi^*$ and $\phi_{\sigma},$ where the latter assumes $\sigma$ is known, illustrating the loss of power due to not knowing $\sigma.$

\begin{figure}[ht]
    \centering
    {{\includegraphics[width=0.8\textwidth]{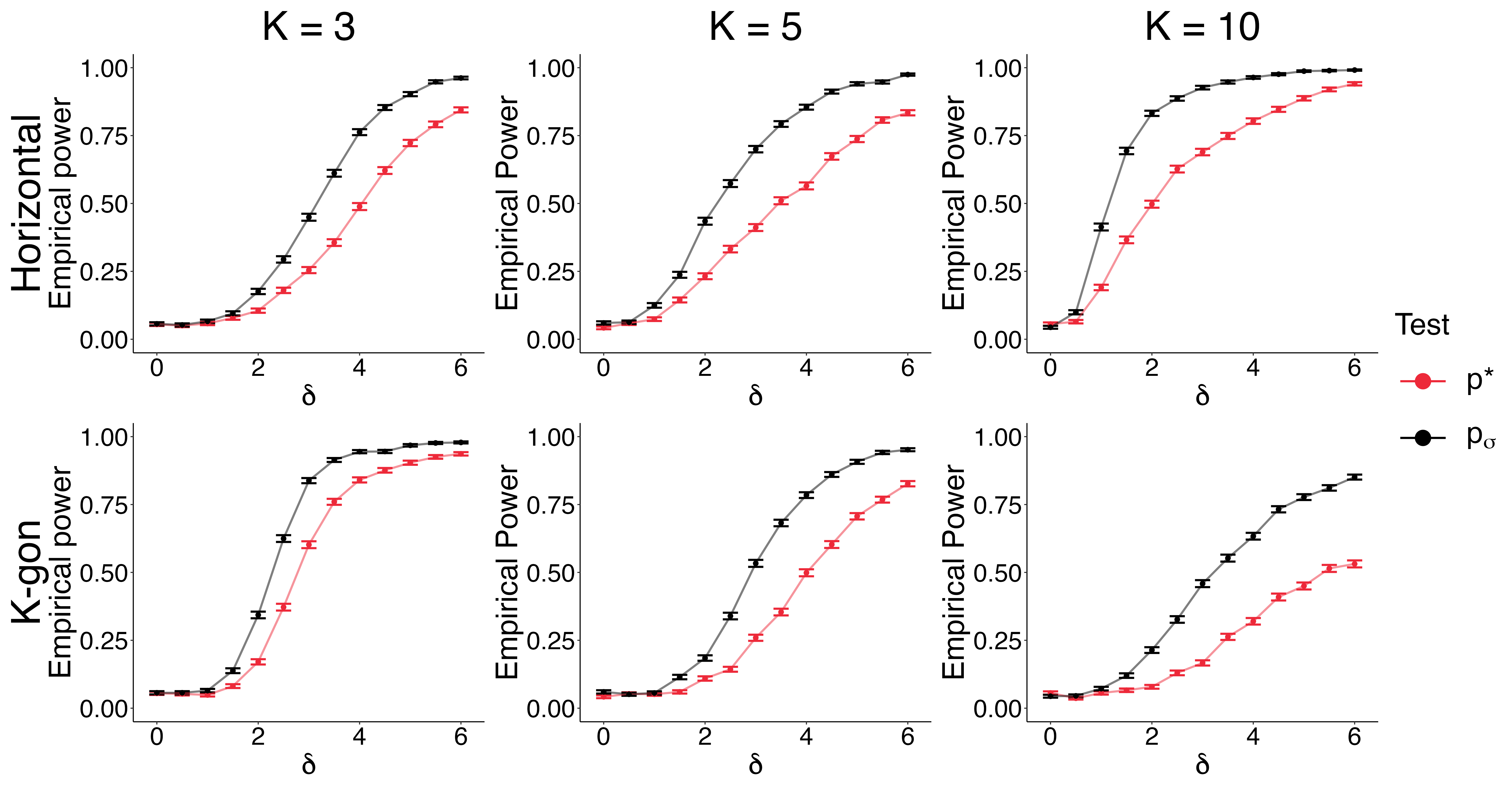}}}
    \caption{Plots of empirical powers of the proposed tests based on $p^*$ and $p_{\sigma}$ for the null hypothesis in \eqref{eq:null_strong} with a pre-specified $\mathcal{V}.$ }\label{figure_12}
\end{figure}

\subsubsection{Data-dependent choice of $\mathcal{V}$}\label{sec:simul_unknown_dep}
We next let $\mathcal{V}$ be chosen in a data-dependent way. We specifically consider the case where it is chosen as in Setting \ref{ex:1} with $K=20$ and $g=3.$ The leftmost plot of Figure \ref{figure_13} illustrates that $\phi^*_J$ controls the Type I error, as implied by Theorem \ref{thm:group_dep}. The middle and the rightmost plots compare the empirical powers of $\phi^*_J$ and $\phi_{\sigma},$ the latter of which assumes  $\sigma$ is known, to illustrate the loss of power due to not knowing $\sigma.$

\begin{figure}[ht]
    \centering
    {{\includegraphics[width=0.8\textwidth]{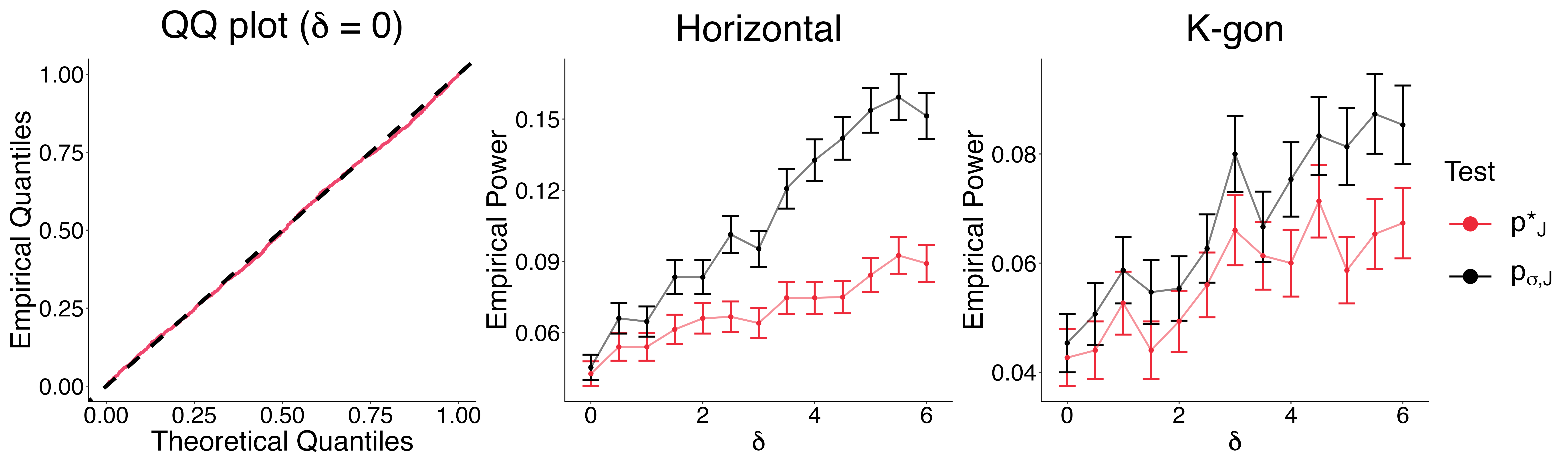}}}
    \caption{QQ plot of a sample of $p^*_J$ generated under the null hypothesis in \eqref{eq:null_strong} with $\mathcal{V}$ chosen according to Setting \ref{ex:2}, and 
    plots of empirical powers of the proposed tests based on $p_J^*$ and $p_{\sigma,J}$ for the same null hypothesis.}\label{figure_13}
\end{figure}

\section{Real data application}\label{sec:data}

We next apply the proposed methods to the penguins data of \citet{horst2020palmerpenguins}, which is also  studied in \citet{gao2022selective}, \citet{chen2023selective}, and \citet{yun2023selective}. The data consists of measurements of different body parts of three species of penguins (Adelie, Gentoo, and Chinstrap) of the male and female sexes. For our purposes, we specifically consider the variables bill depth (mm) and bill length (mm) of female penguins. We test for the null hypothesis in \eqref{eq:null_strong}: Sections \ref{sec:real_pre} and \ref{sec:real_dep} present the cases where $\mathcal{V}$ is pre-specified and chosen in a data-dependent way, respectively.

\subsection{Testing for $H_{0,\mathcal{V}}$ with pre-specified $\mathcal{V}$}\label{sec:real_pre}
For the case where $\mathcal{V}$ is pre-specified, we test for the equality of means of all observations, i.e., we consider the global null hypothesis $H^*_{\mathcal{V}}:\mu_i=\mu_{i'}$ for all $i,i'\in[n],$ where we set $\mathcal{V}=\mathcal{V}_1.$ We apply the tests $\phi_{\hat{\sigma}}$ and $\phi_{\hat{\sigma}, \mathrm{Bon}}$ for $\hat{\sigma}\in\{\hat{\sigma}_{\mathrm{sample}},\hat{\sigma}_{\mathrm{Bon}}\}$ for two different subsets of the penguins data, a subset consisting of observations from the Adelie species only and a subset consisting of observations from both the Adelie and Gentoo species, which we assume to be consistent with the null and the alternative hypotheses, respectively. We run $K$-means clustering on the standardized data with $K=4$ and present an average of 100 realizations for each p-value to account for the randomness in the initialization step of the algorithm. Note that we omit the test $\phi^*$ due to the relatively small number of features that could cause the associated p-value to be underestimated, as discussed in Section \ref{sec:simul_unknown}. 

Figure \ref{figure_14} presents visualizations of the outcomes of an instance of $K$-means clustering run on the two subsets of the data, and Table \ref{table_1} presents an average of 100 p-values associated with each of the tests considered. The table illustrates that the p-values of the proposed tests tend to be lower than those of the baseline testing procedure, both under the null and alternative hypotheses.

\begin{figure}[ht]
    \centering
    {{\includegraphics[width=0.7\textwidth]{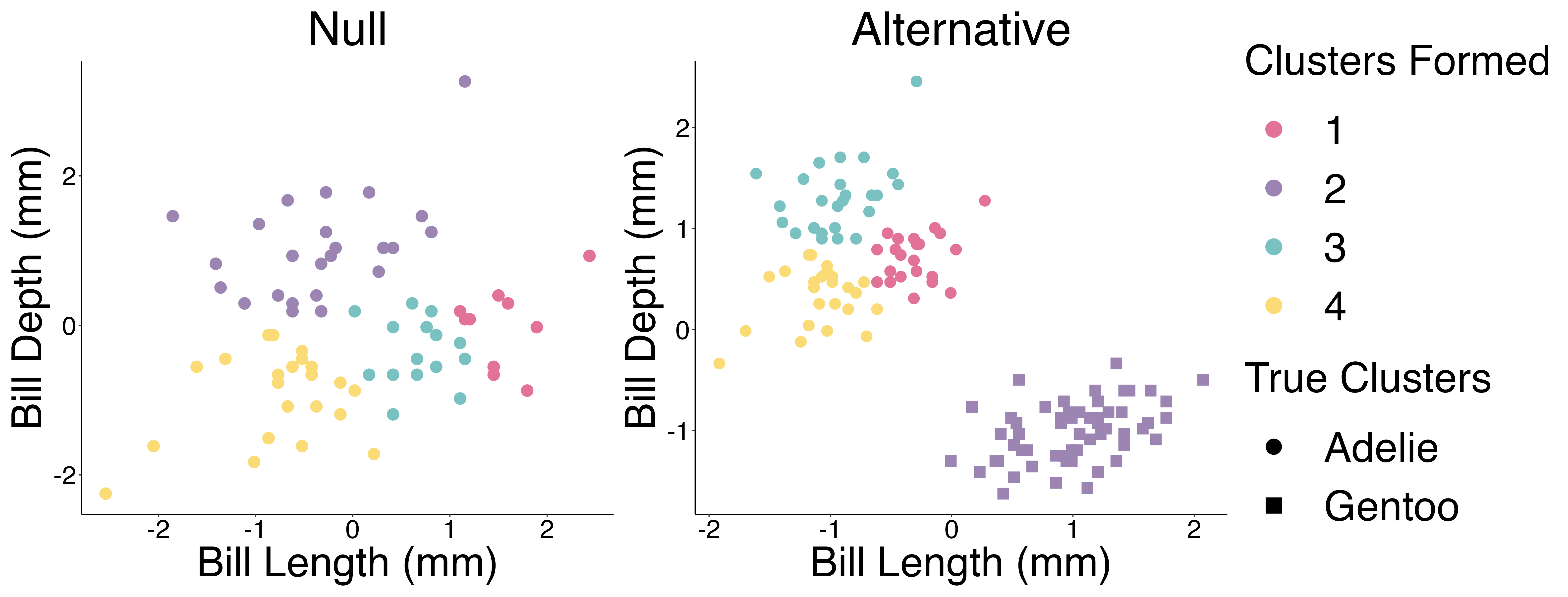}}}
    \caption{Visualizations of subsets of the standardized penguins data that are consistent with the null hypothesis (left) and the alternative hypothesis (right), along with outcomes of an instance of $K$-means clustering run with $K=4.$}\label{figure_14}
\end{figure}

\begin{table}[ht]
\centering
\begin{tabular}{rrrrr}
  \hline
 & $p_{\hat{\sigma}_{\mathrm{sample}}}$ & $p_{\hat{\sigma}_{\mathrm{med}}}$ & $p_{\hat{\sigma}_{\mathrm{sample}, \mathrm{Bon}}}$ & $p_{\hat{\sigma}_{\mathrm{med}, \mathrm{Bon}}}$ \\ 
  \hline\hline
Null & 0.52 & 0.51 & 0.65 & 0.63 \\ 
\hline
  Alternative & 0.26 & 0.33 & 0.47 & 0.56 \\ 
   \hline
\end{tabular}
\caption{Average of 100 p-values associated with various tests for the null hypothesis in \eqref{eq:null_strong} with a pre-specified $\mathcal{V}.$}
\end{table}\label{table_1}

\subsection{Testing for $H_{0,\mathcal{V}}$ with data-dependent choice of $\mathcal{V}$}\label{sec:real_dep}
We next choose $\mathcal{V}$ in a data-dependent way after running $K$-means clustering on the subset of the data that consists of observations from the Adelie and Gentoo species, which coincides with the subset visualized in the right-hand plot of Figure \ref{figure_14}. We apply the tests $\phi_{\hat{\sigma}}$ and $\phi_{\hat{\sigma}, J}$ for $\hat{\sigma}\in\{\hat{\sigma}_{\mathrm{sample}},\hat{\sigma}_{\mathrm{Bon}}\}$---again, we omit the test $\phi^*_J.$ We run $K$-means clustering on the standardized data with $K=10$ and choose $g=2$ pairs of clusters according to each of Setting \ref{ex:1} and Setting \ref{ex:2}. The two pairs that are chosen according to Setting \ref{ex:1} are $(2,3)$ and $(3,8),$ and those chosen according to Setting \ref{ex:2} are $(1,7)$ and $(6,7)$---note that such choices align with the visualizations of the clustering outcomes illustrated in Figure \ref{figure_15}. Table \ref{table_2} presents the p-values of the tests and shows that the tests that do not account for the data-dependence in the choice of $\mathcal{V}$ results in the same p-values as those that do. This observation is consistent with the simulation results of Section \ref{sec:simul_known_dep}, where we have observed that the effect of not accounting for this additional selection event is noticeable only for relatively large values of $K.$

\begin{figure}[ht]
    \centering
    {{\includegraphics[width=0.5\textwidth]{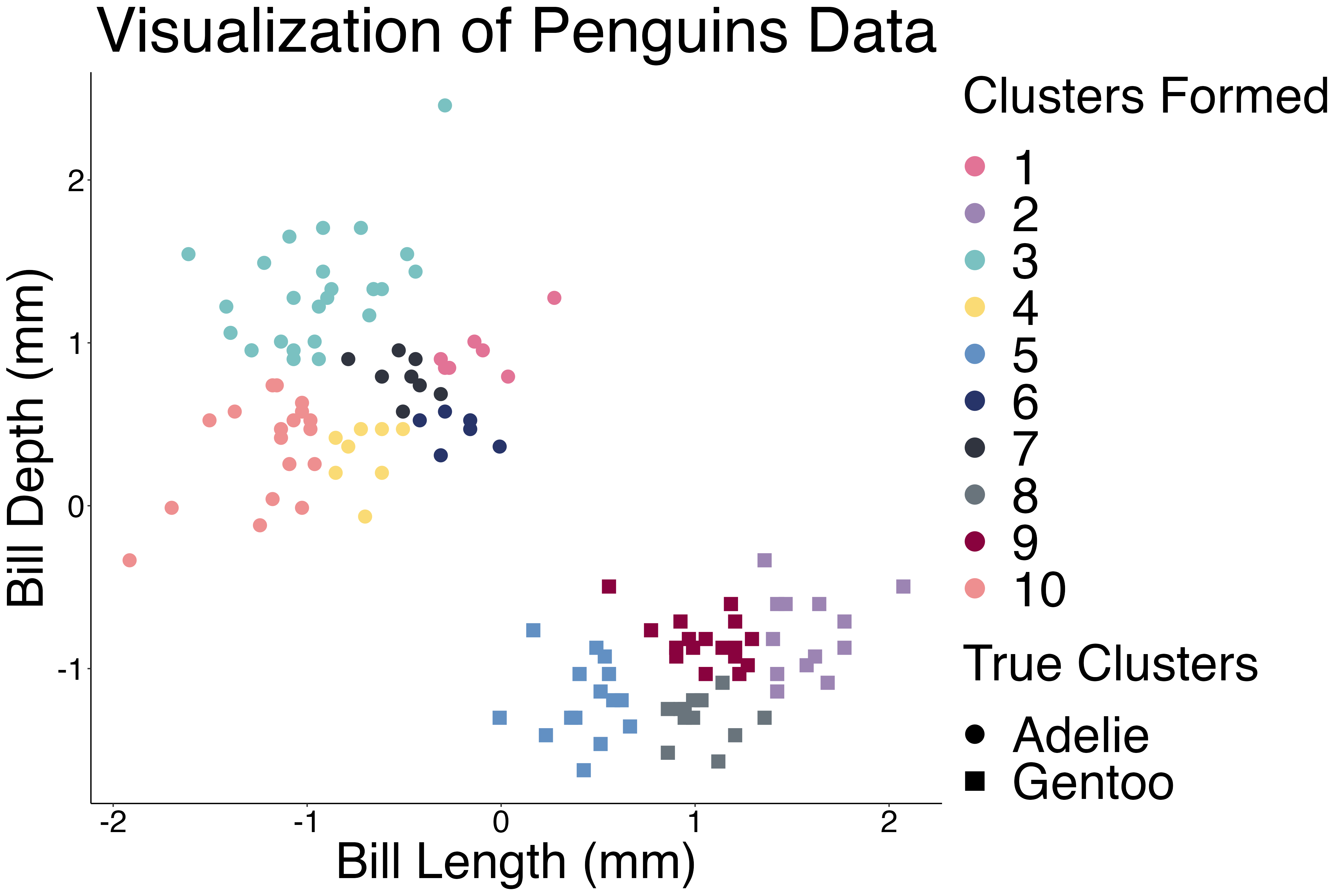}}}
    \caption{Visualizations of a subset of the standardized penguins data with $K$-means clustering run with $K=10.$}\label{figure_15}
\end{figure}

\begin{table}[ht]
\centering
\begin{tabular}{rrrrr}
  \hline
 & $p_{\hat{\sigma}_{\mathrm{sample}}}$ & $p_{\hat{\sigma}_{\mathrm{med}}}$ & $p_{\hat{\sigma}_{\mathrm{sample}}, J}$ & $p_{\hat{\sigma}_{\mathrm{med}},J}$ \\ 
  \hline\hline 
Setting 1 & 0.17 & 0.2 & 0.17 & 0.2 \\ \hline
  Setting 2 & 0.17 & 0.17 & 0.17 & 0.17 \\ 
   \hline
\end{tabular}
\caption{P-values associated with various tests for the null hypothesis in \eqref{eq:null_strong} with data-dependent choices of $\mathcal{V}.$}
\end{table}\label{table_2}

\section{Discussion}\label{sec:disc}
In this work, we have developed tests for multiple pairs of clusters for both known and unknown variance settings, extending the work of \cite{chen2023selective} and \cite{yun2023selective}. We have shown that the proposed tests control the Type I error, and we have derived expressions for the associated p-values that can be computed exactly. We have also presented numerical illustrations of the empirical powers of the proposed tests, which we have compared with that of the baseline testing procedure that combines \citet{chen2023selective}'s test with the Bonferroni correction---empirical results illustrate that the proposed tests tend to have a higher empirical power in the presence of weak signals. We have also briefly discussed a computational limitation of $\phi^*,$ which we speculate lies in the computation of probabilities associated with a truncated distribution. 


As a possible direction for future work, it would be of practical importance to develop tests for the null hypothesis $H_{0,\mathcal{V}}$ for data generated from a more flexible model that captures the complexity of real data. Another direction of computational importance would be to explore ways of computing more accurate selective p-values associated with a truncated $F$ distribution, especially for truncation sets that lie far in the tail of the distribution with non-negligible probability. In this work, we have approximated the $F$ distribution with the $\chi^2$ distribution, which is then approximated using the normal distribution. It could be of interest to study ways of approximating the $F$ distribution directly with the normal distribution. Improvements in the accuracy of such computations would not only address the computational limitation of the proposed test based on $p^*$ but also provide a valuable tool in the literature of selective inference.

\section{Acknowledgements}
Yinqiu He was partially supported by Wisconsin Alumni Research Foundation. We thank the Center for High Throughput Computing (\cite{https://doi.org/10.21231/gnt1-hw21}) for the computational resources that were used for producing the empirical results presented in Section \ref{sec:simul}.

\bibliographystyle{plainnat}
\bibliography{bib}

\appendix

\section{Appendix}
Appendix \ref{app:proofs} contains proofs of the theorems and propositions
presented in this paper. Appendix \ref{app:impl} provides details on the implementations of the proposed tests, which align with the codes available at \url{https://github.com/yjyun97/cluster_inf_multiple}.

\subsection{Proofs}\label{app:proofs}
The proofs of Theorems \ref{thm:main_thm} and \ref{thm:group_dep} can be found in Appendix \ref{pf:main_thm}, and those of Theorems \ref{thm:unknown_var} and \ref{thm:unknownvar_datav} can be found in Appendix \ref{pf:unknown_var}. Additionally, the proofs of Propositions  \ref{prop:quadratic}, \ref{prop:group}, \ref{prop:unknown_cts}, and \ref{V_dep_unknown} can be found in Appendices \ref{pf:quadratic},  \ref{pf:group}, \ref{pf:unknown_cts}, and \ref{pf:V_dep_unknown}, respectively. 

We first introduce additional notations that are used throughout the proofs. We write $Y_i\overset{\indep}{\sim}F$ for $i\in [m]$ and $m\in \mathbb{N}$ to denote that $Y_i$s follow the distribution $F$ independently, and we write $Y\overset{H_0}{\sim}F$ to denote that $Y$ follows the distribution $F$ under the null hypothesis $H_0.$ Likewise, $\overset{H_0}{=}$ and $\overset{H_0}{\propto}$ denote that the equality and the proportionality, respectively, hold under the null hypothesis $H_0.$

Before presenting the proofs, we state and prove Lemmas \ref{lem:mean_null} and \ref{lem:prop_mult_normal}, which are used in the proofs of Theorems \ref{thm:main_thm}, \ref{thm:group_dep}, \ref{thm:unknown_var}, and \ref{thm:unknownvar_datav}.

\begin{lemma}\label{lem:mean_null}For any vector $\mathbf{e}\in \mathcal{E},$ $ \mu^\top \mathbf{e}=\mathbf{0}_q$ under $H_{0,\mathcal{V}}.$
\end{lemma}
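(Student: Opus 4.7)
The plan is to reduce the statement to the defining generators of $\mathcal{E}$ and then invoke the null hypothesis directly. By definition $\mathcal{E}=\mathrm{span}\{\mathbf{v}_{k,k'}:(k,k')\in \mathcal{V}\}$, so any $\mathbf{e}\in\mathcal{E}$ admits a representation $\mathbf{e}=\sum_{(k,k')\in\mathcal{V}}\alpha_{k,k'}\mathbf{v}_{k,k'}$ for some scalars $\alpha_{k,k'}\in\mathbb{R}$. Since $\mathbf{e}\mapsto \mu^\top \mathbf{e}$ is linear in $\mathbf{e}$, it suffices to show that $\mu^\top \mathbf{v}_{k,k'}=\mathbf{0}_q$ for every $(k,k')\in \mathcal{V}$.

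Next, I would unpack the definition $\mathbf{v}_{k,k'}=\tfrac{1}{|\mathcal{C}_k|}\mathbf{1}_{\mathcal{C}_k}-\tfrac{1}{|\mathcal{C}_{k'}|}\mathbf{1}_{\mathcal{C}_{k'}}$. Computing the matrix-vector product, $\mu^\top \mathbf{1}_{\mathcal{C}_k}/|\mathcal{C}_k|$ is precisely the column vector $\bar\mu_{\mathcal{C}_k}=\sum_{i\in \mathcal{C}_k}\mu_i/|\mathcal{C}_k|\in \mathbb{R}^q$ (viewing $\mu\in\mathbb{R}^{n\times q}$ as having $\mu_i$ as its $i$th row). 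Hence
\begin{equation*}
\mu^\top \mathbf{v}_{k,k'}=\bar\mu_{\mathcal{C}_k}-\bar\mu_{\mathcal{C}_{k'}}.
\end{equation*}

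Finally, the null hypothesis $H_{0,\mathcal{V}}$ in \eqref{global_null} asserts exactly that $\bar\mu_{\mathcal{C}_k}=\bar\mu_{\mathcal{C}_{k'}}$ for all $(k,k')\in\mathcal{V}$, so $\mu^\top \mathbf{v}_{k,k'}=\mathbf{0}_q$ for each generator, and by linearity $\mu^\top \mathbf{e}=\sum_{(k,k')\in\mathcal{V}}\alpha_{k,k'}\cdot \mathbf{0}_q = \mathbf{0}_q$, as claimed. There is no real obstacle here; the lemma is essentially a direct translation of the null hypothesis into the language of the projection subspace $\mathcal{E}$, and the only thing to be careful about is keeping the transpose orientations consistent so that $\mu^\top \mathbf{v}_{k,k'}$ is correctly identified with the difference of cluster mean vectors in $\mathbb{R}^q$.
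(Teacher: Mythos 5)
Your proof is correct and follows the same route as the paper's: expand $\mathbf{e}$ in the generators $\mathbf{v}_{k,k'}$, use linearity of $\mathbf{e}\mapsto\mu^\top\mathbf{e}$, and observe that $\mu^\top\mathbf{v}_{k,k'}=\bar\mu_{\mathcal{C}_k}-\bar\mu_{\mathcal{C}_{k'}}=\mathbf{0}_q$ under $H_{0,\mathcal{V}}$. The only difference is that you spell out the identification $\mu^\top\mathbf{1}_{\mathcal{C}_k}/|\mathcal{C}_k|=\bar\mu_{\mathcal{C}_k}$, which the paper leaves implicit.
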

\begin{proof}
Since $\mathbf{e}\in \mathcal{E},$ $\mathbf{e}=\sum_{(k,k')\in\mathcal{V}}\lambda_{k,k'}\mathbf{v}_{k,k'}$ for some $\{\lambda_{k,k'}\}_{(k,k')\in\mathcal{V}}\subset \mathbb{R}.$ Then, $\mu^\top \mathbf{e}=\sum_{(k,k')\in\mathcal{V}}\lambda_{k,k'}\mu^\top \mathbf{v}_{k,k'}\overset{\mathcal{H}_{0,\mathcal{V}}}{=}0_q$ since $\mu^\top \mathbf{v}_{k,k'}=0_q$ for all $(k,k')\subset \mathcal{V}$ under $H_{0,\mathcal{V}}.$
\end{proof}

\begin{lemma}\label{lem:prop_mult_normal}
Let $X\in\mathbb{R}^{n\times q}$ be a matrix whose rows are distributed independently as in \eqref{data_gen}. Let $\Pi =\sum_{i=1}^r\mathbf{e}_i\mathbf{e}_i^\top$ for some $r\leq n,$ where $\mathbf{e}_i\in\mathbb{R}^n$ for $i\in [r]$ are orthonormal. Suppose $\mu$ in \eqref{data_gen} satisfies the condition that for each $i\in [r],$ $\mu^\top \mathbf{e}_i=0_q.$ Then, 
\[\|\Pi X\|_F\indep \frac{\Pi X}{\|\Pi X\|_F}.\]
\end{lemma}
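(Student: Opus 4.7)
The plan is to reduce the Frobenius-norm/direction decomposition of $\Pi X$ to the familiar fact that a spherically symmetric Gaussian has independent radius and angular parts. First, I would assemble the orthonormal vectors into the $n \times r$ matrix $E = [\mathbf{e}_1 \cdots \mathbf{e}_r]$, so that $\Pi = EE^\top$ and $\Pi X = E A$, where $A \coloneqq E^\top X \in \mathbb{R}^{r \times q}$. Since $E^\top E = I_r$, we get $\|\Pi X\|_F = \|EA\|_F = \|A\|_F$ and $\Pi X/\|\Pi X\|_F = E \cdot (A/\|A\|_F)$, so $\Pi X/\|\Pi X\|_F$ is a measurable function of $A/\|A\|_F$.

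Next, I would show that under the hypothesis $\mu^\top \mathbf{e}_i = 0_q$, the matrix $A$ has i.i.d.\ $\mathcal{N}(0, \sigma^2)$ entries. The columns $X^j$ of $X$ are independent with $X^j \sim \mathcal{N}(\mu^j, \sigma^2 I_n)$ for $j\in[q]$, so each column $A^j = E^\top X^j$ is Gaussian with mean $E^\top \mu^j$ and covariance $\sigma^2 E^\top E = \sigma^2 I_r$. The assumption $\mu^\top \mathbf{e}_i = 0_q$ says that every entry of $E^\top \mu^j$ vanishes, so $A^j \sim \mathcal{N}(0, \sigma^2 I_r)$, and different columns $A^j, A^{j'}$ are independent because the columns $X^j, X^{j'}$ are. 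Therefore $\mathrm{vec}(A) \sim \mathcal{N}(0, \sigma^2 I_{rq})$.

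Finally, I would invoke the standard fact that for a spherically symmetric Gaussian $Y \sim \mathcal{N}(0, \sigma^2 I_{rq})$, the radius $\|Y\|_2$ is independent of the direction $Y/\|Y\|_2$ (the direction is uniform on the unit sphere and the radius follows a chi distribution). Applying this to $Y = \mathrm{vec}(A)$ yields $\|A\|_F \indep A/\|A\|_F$. Combining with the identifications $\|\Pi X\|_F = \|A\|_F$ and $\Pi X/\|\Pi X\|_F = E(A/\|A\|_F)$ gives the claim.

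There is no real obstacle here; this is essentially a bookkeeping argument. The only step that deserves care is verifying that the entries of $A$ are jointly independent (independence across both rows and columns), which requires using both orthonormality of the $\mathbf{e}_i$'s and independence of the rows $X_i$ of $X$ in the model \eqref{data_gen}. Once $\mathrm{vec}(A)$ is identified as an isotropic zero-mean Gaussian, the independence of its norm and direction is a textbook fact that can be cited or derived in one line by noting the invariance of the joint density of $\mathrm{vec}(A)$ under orthogonal transformations.
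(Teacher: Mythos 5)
Your proof is correct and follows essentially the same route as the paper: the paper stacks the rows $\mathbf{e}_i^\top X$ into a vector $\mathbf{w}\sim\mathcal{N}(0,\sigma^2 I_{rq})$, uses the norm--direction independence of an isotropic Gaussian, and transfers it back through a linear map, which is exactly your argument with $A=E^\top X$ in matrix rather than vectorized form. The only cosmetic difference is your cleaner bookkeeping via $\Pi X=EA$ and $\|\Pi X\|_F=\|A\|_F$; no gap to report.
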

\begin{proof}
Define $\mathbf{w}=[\mathbf{e}_1^\top X\cdots \mathbf{e}_{r}^\top X]^\top,$ and define $\tilde{I}_i\in\mathbb{R}^{rq\times q}$ for $i\in [r]$ to be matrices where $\tilde{I}_{i:i+q-1,:}=I_q,$ and the remaining entries are 0; here, $A_{i:i',:}$ for $i\leq i'\in\mathbb{N}$ denotes the entries of the matrix $A$ corresponding to the rows $i$ through $i'$ and all columns. Next, let $f:\mathbb{R}^{r q}\rightarrow \mathbb{R}^{n\times q},f(u)=\sum_{i=1}^{r}\mathbf{e}_iu^\top \tilde{I}_i.$ Note that 
\begin{align}\label{main_thm:first}
    f(\mathbf{w} )=\sum_{i=1}^{r}\mathbf{e}_i\mathbf{e}_i^\top X=\Pi X
\end{align}
and 
\begin{align}\label{main_thm:second}
\|f(\mathbf{w} )\|_F=\sqrt{\|\sum_{i=1}^{r}\mathbf{e}_i\mathbf{e}_i^\top X\|_F^2}=\sqrt{\sum_{i=1}^{r}\|\mathbf{e}_i\mathbf{e}_i^\top X\|^2_F}=\sqrt{\sum_{i=1}^{r}\|\mathbf{e}_i^\top X\|^2_2}=\|\mathbf{w} \|_2.
\end{align}
By the assumption that $\mu^\top \mathbf{e}_i=\mathbf{0}_q$ for all $i\in [r]$ and the distributional assumptions on $X,$ we have $\mathbf{w} \sim \mathcal{N}\left(0,\sigma^2I_{r q}\right).$ Thus, it follows that
\begin{align*}
    \|\mathbf{w} \|_2\indep \frac{\mathbf{w} }{\|\mathbf{w} \|_2}.
\end{align*}
Then, 
\[\|\mathbf{w} \|_2\indep f\left(\frac{\mathbf{w} }{\|\mathbf{w} \|_2}\right)=\frac{1}{\|\mathbf{w} \|_2}f\left(\mathbf{w} \right),\]
where the equality follows by linearity of $f.$ It then follows from \eqref{main_thm:second} that 
\[\|f(\mathbf{w} )\|_F\indep \frac{1}{\|f(\mathbf{w} )\|_F}f\left(\mathbf{w} \right).\] Finally, by \eqref{main_thm:first}, we have \[\norm{\Pi X}_F\indep\frac{\Pi X}{\norm{\Pi X}_F}.\]
\end{proof}

We now present the proofs of Theorem \ref{thm:main_thm}, Theorem \ref{thm:group_dep}, Theorem \ref{thm:unknown_var}, and Theorem \ref{thm:unknownvar_datav}.

\subsubsection{Proofs of Theorems \ref{thm:main_thm} and \ref{thm:group_dep}}\label{pf:main_thm}
The proofs closely follow \citet[Appendix A.1]{yun2023selective} and parts of \citet[Appendix A.1]{gao2022selective}.

We start by presenting the proof of Theorem \ref{thm:group_dep}. We first note that $p_{\sigma,J},$ conditioned on $Z,$ $\mathcal{C}(X),$ and $\mathcal{V}(X),$ is uniformly distributed under $H_{0,\mathcal{V}}$ by the property that $F(Y)$ is uniformly distributed for a random variable $Y$ with CDF $F.$ Then, for any $\alpha\in (0,1),$
\begin{align*}
\mathbb{P}_{H_{0,\mathcal{V}}}\left(p_{\sigma,J}\leq \alpha\mid \mathcal{C}(X),\mathcal{V}(X)\right)&=\mathbb{E}\left[\mathbb{P}_{H_{0,\mathcal{V}}}\left(p_{\sigma,J}\leq \alpha\mid \mathcal{C}(X), \mathcal{V}(X),Z\right)\right]=\mathbb{E}[\alpha]=\alpha,
\end{align*}
as desired. 

We next characterize the distribution function $Q_{\sigma, J}.$ For any  realization $\left(Z^0, \mathcal{C}^0, \mathcal{V}^0\right)$ of $\left(Z, \mathcal{C}(X), \mathcal{V}(X)\right),$ define 
\begin{align*}
    h(t, Z^0, \mathcal{C}^0, \mathcal{V}^0) = f_{T_\sigma \mid Z = Z^0,\;\mathcal{C}(X)=\mathcal{C}^0,\; \mathcal{V}(X)=\mathcal{V}^0}(t)
\end{align*}
for any $t>0.$
Then, $h(\cdot, Z, \mathcal{C}(X), \mathcal{V}(X))$ is the conditional PDF (probability density function) of $T_\sigma$ given $Z,$ $\mathrm{C}(X),$ and $\mathcal{V}(X).$ We aim to show 
\begin{align}\label{main_thm_goal}
    h(t, Z, \mathcal{C}(X), \mathcal{V}(X))\overset{H_{0,\mathcal{V}}}{\propto}
    f_{\chi_{d}}
    (t)\mathbbm{1}_{t\in S_{\sigma,J}},
\end{align}
where $f_{\chi_d}$ is the PDF of the $\chi_{d}$ distribution. To show \eqref{main_thm_goal}, we first state and prove Lemmas \ref{main_thm_goal1} and \ref{main_thm_goal2}.

\begin{lemma}\label{main_thm_goal1}
Suppose the cluster assignments and the set $\mathcal{V}$ are pre-specified, i.e., $\mathcal{C}_k$ for $k\in [K]$ and the set $\mathcal{V}$ are determined independently of the data. Then, $$T_{\sigma}^2 \overset{H_{0,\mathcal{V}}}{\sim} \chi^2_{d}.$$
\end{lemma}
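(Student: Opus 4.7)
The plan is to compute the distribution of $T_\sigma^2=\|P_\mathcal{E}X\|_F^2/\sigma^2$ directly by diagonalizing the projection $P_\mathcal{E}$. Let $r=\dim(\mathcal{E})$ and pick an orthonormal basis $\{\mathbf{e}_1,\ldots,\mathbf{e}_r\}$ of $\mathcal{E}$, so that
\begin{align*}
P_\mathcal{E}=\sum_{i=1}^r \mathbf{e}_i\mathbf{e}_i^\top \quad\text{and}\quad \|P_\mathcal{E}X\|_F^2=\sum_{i=1}^r \|\mathbf{e}_i^\top X\|_2^2,
\end{align*}
using orthonormality to eliminate the cross terms. This reduces the problem to understanding the joint distribution of the $q$-dimensional row vectors $\mathbf{e}_i^\top X$ for $i\in[r]$.

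Next, I would invoke the data-generating distribution in \eqref{data_gen}. Since each $\mathbf{e}_i^\top X=\sum_{s=1}^n (\mathbf{e}_i)_s X_s$ is a linear combination of independent Gaussian vectors, $\mathbf{e}_i^\top X\sim \mathcal{N}(\mu^\top \mathbf{e}_i,\, \sigma^2\|\mathbf{e}_i\|_2^2 I_q)=\mathcal{N}(\mu^\top \mathbf{e}_i,\,\sigma^2 I_q)$ because $\|\mathbf{e}_i\|_2=1$. Lemma \ref{lem:mean_null} then gives $\mu^\top \mathbf{e}_i=\mathbf{0}_q$ under $H_{0,\mathcal{V}}$, since $\mathbf{e}_i\in\mathcal{E}$, so $\mathbf{e}_i^\top X/\sigma\overset{H_{0,\mathcal{V}}}{\sim}\mathcal{N}(\mathbf{0}_q,I_q)$. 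Independence across $i$ follows from the orthonormality $\mathbf{e}_i^\top \mathbf{e}_j=0$ for $i\neq j$, which yields zero cross-covariance between the jointly Gaussian vectors $\mathbf{e}_i^\top X$ and $\mathbf{e}_j^\top X$.

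Combining these observations, $T_\sigma^2=\sigma^{-2}\sum_{i=1}^r \|\mathbf{e}_i^\top X\|_2^2$ is a sum of $r\cdot q=d$ squared i.i.d.\ standard normals under $H_{0,\mathcal{V}}$, and therefore $T_\sigma^2\overset{H_{0,\mathcal{V}}}{\sim}\chi_d^2$. The derivation is essentially a standard Gaussian projection argument, so I expect no substantive obstacle; the only care required is to verify that $\|\mathbf{e}_i\|_2=1$ is what removes the dependence of the null distribution on $\mathcal{E}$ beyond its dimension, and to cite Lemma \ref{lem:mean_null} for the vanishing of the mean rather than re-deriving it. The hypothesis that the clusters and $\mathcal{V}$ are pre-specified is used only to justify treating $P_\mathcal{E}$ as a fixed projection when applying the linear-Gaussian reasoning above.
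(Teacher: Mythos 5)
Your proposal is correct and follows essentially the same route as the paper: expand $P_{\mathcal{E}}$ in an orthonormal basis of $\mathcal{E}$, use Lemma \ref{lem:mean_null} to kill the mean under $H_{0,\mathcal{V}}$, and use orthonormality plus the Gaussian model in \eqref{data_gen} to obtain $d=q\cdot\mathrm{dim}(\mathcal{E})$ independent standard normal components. The only difference is organizational (you argue row-wise over the $\mathbf{e}_i^\top X$ vectors, while the paper sums over columns $X^j$ first), which does not change the substance of the argument.
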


\begin{proof}
    Let $\{\mathbf{e}_l:l\in [\mathrm{dim}(\mathcal{E})]\}$ be an orthonormal basis of $\mathcal{E}.$ Then, $P_{\mathcal{E}}=\sum_{l=1}^{\mathrm{dim}(\mathcal{E})}\mathbf{e}_l\mathbf{e}_l^\top,$ so we have 
\begin{align*}
    T_{\sigma}^2 =\frac{1}{\sigma^2}\norm{P_{\mathcal{E}}X}^2_F=\frac{1}{\sigma^2}\summ{l=1}{\mathrm{dim}(\mathcal{E})}\|\mathbf{e}_l^\top X\|_2^2=\frac{1}{\sigma^2}\summ{l=1}{\mathrm{dim}(\mathcal{E})}\summ{j=1}{q}(\mathbf{e}_l^\top X^j)^2,
\end{align*}
where the second equality follows from the orthogonality of the vectors $\mathbf{e}_l$ for $l\in  [\mathrm{dim}(\mathcal{E})].$ Note that for each $l\in [\mathrm{dim}(\mathcal{E})]$ and $j\in [q],$ $\mathbb{E}[\mathbf{e}_l^\top X^j]=\mathbf{e}_l^\top\mu^j\overset{H_{0,\mathcal{V}}}{=}0$ by Lemma \ref{lem:mean_null}, and $\mathrm{Var}[\mathbf{e}_l^\top X^j]=\sigma^2.$ By the orthogonality across $\mathbf{e}_l$ for $l\in [\mathrm{dim}(\mathcal{E})]$  and the distributional assumptions on $X^j,$ we have 
$\frac{1}{\sigma^2}\sum_{l=1}^{\mathrm{dim}(\mathcal{E})}(\mathbf{e}_l^\top X^j)^2\overset{H_{0,\mathcal{V}}}{\sim}\chi^2_{\mathrm{dim}(\mathcal{E})}.$ Finally, independence across $X^j$ for $j\in[q]$ gives $\frac{1}{\sigma^2}\summ{j=1}{q}\summ{l=1}{\mathrm{dim}(\mathcal{E})}(\mathbf{e}_l^\top X^j)^2\overset{H_{0,\mathcal{V}}}{\sim}\chi^2_{d}.$
\end{proof}

\begin{lemma}\label{main_thm_goal2}
Suppose the cluster assignments and the set $\mathcal{V}$ are pre-specified. Then,
$$T_{\sigma}^2\indep Z$$ under $H_{0,\mathcal{V}}.$ 
\end{lemma}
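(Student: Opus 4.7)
The plan is to decompose $Z$ into its two components and establish independence of $T_\sigma^2$ from each, then combine. Since $T_\sigma^2 = \|P_{\mathcal{E}}X\|_F^2/\sigma^2$ depends on $X$ only through $\|P_{\mathcal{E}}X\|_F$, it suffices to show that $\|P_{\mathcal{E}}X\|_F$ is jointly independent of $(P_{\mathcal{E}}X/\|P_{\mathcal{E}}X\|_F,\, P_{\mathcal{E}}^\perp X)$ under $H_{0,\mathcal{V}}$.

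First I would handle independence between $\|P_{\mathcal{E}}X\|_F$ and the first component $P_{\mathcal{E}}X/\|P_{\mathcal{E}}X\|_F$. Pick any orthonormal basis $\{\mathbf{e}_l : l\in[\dim(\mathcal{E})]\}$ of $\mathcal{E}$, so that $P_{\mathcal{E}} = \sum_{l=1}^{\dim(\mathcal{E})}\mathbf{e}_l\mathbf{e}_l^\top$. By Lemma \ref{lem:mean_null}, $\mu^\top \mathbf{e}_l = 0_q$ for every $l$ under $H_{0,\mathcal{V}}$, so the hypothesis of Lemma \ref{lem:prop_mult_normal} holds with $\Pi = P_{\mathcal{E}}$. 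Applying that lemma directly yields
\begin{equation*}
\|P_{\mathcal{E}}X\|_F \indep \frac{P_{\mathcal{E}}X}{\|P_{\mathcal{E}}X\|_F}.
\end{equation*}

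Next I would argue that $P_{\mathcal{E}}X$ is independent of $P_{\mathcal{E}}^\perp X$. Since $X$ is Gaussian, $(P_{\mathcal{E}}X,\, P_{\mathcal{E}}^\perp X)$ is jointly Gaussian (each row is a linear transformation of a Gaussian vector). A direct covariance computation using $P_{\mathcal{E}} P_{\mathcal{E}}^\perp = 0$ shows that the cross-covariances vanish entry-wise, so joint Gaussianity upgrades this to independence of $P_{\mathcal{E}}X$ and $P_{\mathcal{E}}^\perp X$. In particular, since $\|P_{\mathcal{E}}X\|_F$ and $P_{\mathcal{E}}X/\|P_{\mathcal{E}}X\|_F$ are both measurable functions of $P_{\mathcal{E}}X$, the pair $(\|P_{\mathcal{E}}X\|_F,\, P_{\mathcal{E}}X/\|P_{\mathcal{E}}X\|_F)$ is independent of $P_{\mathcal{E}}^\perp X$.

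Finally I would combine the two independence statements. Conditioning on $P_{\mathcal{E}}^\perp X$ does not affect the joint distribution of $(\|P_{\mathcal{E}}X\|_F,\, P_{\mathcal{E}}X/\|P_{\mathcal{E}}X\|_F)$, which in turn factors into independent coordinates by the first step. It follows that $\|P_{\mathcal{E}}X\|_F$ is independent of $Z = (P_{\mathcal{E}}X/\|P_{\mathcal{E}}X\|_F,\, P_{\mathcal{E}}^\perp X)$, hence $T_\sigma^2 \indep Z$ under $H_{0,\mathcal{V}}$. The main obstacle is really just the bookkeeping to promote the two pairwise independences to joint independence; once the null-hypothesis mean condition is converted into the hypothesis of Lemma \ref{lem:prop_mult_normal} and the Gaussian orthogonality argument is in place, the conclusion follows routinely.
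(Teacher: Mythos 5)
Your proof is correct, and it relies on exactly the same ingredients as the paper's: an orthonormal basis of $\mathcal{E}$ together with Lemma \ref{lem:mean_null} to verify the hypothesis of Lemma \ref{lem:prop_mult_normal} (giving $\|P_{\mathcal{E}}X\|_F \indep P_{\mathcal{E}}X/\|P_{\mathcal{E}}X\|_F$ under $H_{0,\mathcal{V}}$), and Gaussianity plus $P_{\mathcal{E}}P_{\mathcal{E}}^\perp=0$, applied column by column, to get $P_{\mathcal{E}}X \indep P_{\mathcal{E}}^\perp X$. The only real difference is the finishing step: the paper argues through conditional distributions, first dropping $P_{\mathcal{E}}^\perp X$ from the conditioning and then showing that the conditional law of $T_\sigma^2$ given $P_{\mathcal{E}}X/\|P_{\mathcal{E}}X\|_F$ coincides with its unconditional $\chi^2_d$ law, so it leans on Lemma \ref{main_thm_goal1}; you instead upgrade the two statements $\|P_{\mathcal{E}}X\|_F \indep P_{\mathcal{E}}X/\|P_{\mathcal{E}}X\|_F$ and $\bigl(\|P_{\mathcal{E}}X\|_F,\, P_{\mathcal{E}}X/\|P_{\mathcal{E}}X\|_F\bigr) \indep P_{\mathcal{E}}^\perp X$ directly to $\|P_{\mathcal{E}}X\|_F \indep Z$, which is a valid factorization argument (the three quantities are in fact mutually independent). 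Your route is marginally more economical, since it never uses the distribution of $T_\sigma^2$ at all, while the paper's conditional-law phrasing mirrors the style used in its other independence lemmas; both are sound.
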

\begin{proof}
Recall 
$T_{\sigma}^2 \overset{H_{0,\mathcal{V}}}{\sim} \chi^2_{d}$ by Lemma \ref{main_thm_goal1}. We prove  $T_\sigma^2$ and $Z$ are independent under $H_{0,\mathcal{V}}$ by showing that 
\begin{align}\label{main_thm_goal2_goal.1}
T^2_\sigma \mid Z \overset{H_{0,\mathcal{V}}}{\sim} \chi^2_{d}.
\end{align}
Note that for each $j\in [q],$ $X^j\sim \mathcal{N}(\mu^j,\sigma^2I_n),$ so $P_{\mathcal{E}}X^j$ and 
$P_{\mathcal{E}}^\perp X^j$ are independent. By independence across $X^j$ for $j\in[q],$ we further have independence across $\left(P_{\mathcal{E}}X^j,\ \ P_{\mathcal{E}}^\perp X^j\right)$ for $j\in[q].$ As a result, $P_{\mathcal{E}}X$ and $P_{\mathcal{E}}^\perp X$ are independent, which allows \eqref{main_thm_goal2_goal.1} to be equivalently written as 
\begin{align}\label{main_thm_goal2_goal.1.2}
T_\sigma^2 \mid \frac{P_{\mathcal{E}}X}{\|P_{\mathcal{E}}X\|_F}\overset{H_{0,\mathcal{V}}}{\sim} \chi^2_{d}
\end{align}
since for random quantities $Y_1,Y_2,$ and $Y_3,$ we have that $Y_3\indep (Y_1,Y_2)$ implies $ Y_3\indep Y_1\mid Y_2.$ We next show $T_\sigma^2\indep\frac{P_{\mathcal{E}}X}{\norm{P_{\mathcal{E}}X}_F}.$ Let $\{\mathbf{e}_l:l\in [\mathrm{dim}(\mathcal{E})]\}$ be an orthonormal basis of $\mathcal{E}$ and write $P_{\mathcal{E}}=\sum_{l=1}^{\mathrm{dim}(\mathcal{E})} \mathbf{e}_l\mathbf{e}_l^\top,$ where $\mu^\top \mathbf{e}_l\overset{H_{0,\mathcal{V}}}{=}0_q$ for all $i\in [\mathrm{dim}(\mathcal{E})]$ by Lemma \ref{lem:mean_null}. Then, Lemma \ref{lem:prop_mult_normal} gives the desired result. It follows that \eqref{main_thm_goal2_goal.1.2}
can be equivalently written as $T^2_\sigma \overset{H_{0,\mathcal{V}}}{\sim}\chi^2_{d},$ which holds by Lemma \ref{main_thm_goal1}.
\end{proof}
We now show \eqref{main_thm_goal}. In the remainder of this section, we write $H_{0,\mathcal{V}(X)}(\mathcal{C}(X))$ to make explicit the dependence of $H_{0,\mathcal{V}}$ on $\mathcal{C}(X)$ and $\mathcal{V}(X).$ Fix any realization $\left(Z^0, \mathcal{C}^0, \mathcal{V}^0\right)$ of $\left(Z, \mathcal{C}(X), \mathcal{V}(X)\right),$ and define $\tilde{\mathcal{S}}_{\sigma, J}(\mathcal{C}^0, \mathcal{V}^0)=\tilde{\mathcal{S}_\sigma}(\mathcal{C}^0)\cap \tilde{S}_{\sigma, \mathcal{V}}(\mathcal{V}^0),$ where
\begin{align*}
\tilde{\mathcal{S}}_\sigma(\mathcal{C}^0) = \{(\psi,z) \in [0,\infty)\times (\mathbb{R}^{n\times q}\times \mathbb{R}^{n\times q}): \mathcal{C}(\tilde{x}_\sigma (\psi, z))=\mathcal{C}^0\}
\end{align*}
and
\begin{align*}
    \tilde{\mathcal{S}}_{\sigma, \mathcal{V}}(\mathcal{V}^0)=\{(\psi, z) \in [0,\infty)\times (\mathbb{R}^{n\times q}\times \mathbb{R}^{n\times q}): \mathcal{V}(\tilde{x}_\sigma(\psi,z))=\mathcal{V}^0\},
\end{align*}
where $\tilde{x}_\sigma:[0,\infty)\times (\mathbb{R}^{n\times q}\times \mathbb{R}^{n\times q})\rightarrow \mathbb{R}^{n\times q},$
\begin{align*}
	\tilde{x}_\sigma(\psi, (z_1, z_2))=\psi \cdot \sigma z_1+ z_2. 
\end{align*}  
Let $f_{T_\sigma}$ be the PDF of $T_\sigma$ in the setting where the cluster assignments and the set $\mathcal{V}$ are pre-specified; similarly define $f_{T_\sigma, Z}$ as the joint PDF of $T_\sigma$ and $Z,$ and $f_Z$ as the PDF of $Z.$ Note that  
\begin{align}\label{eq:thm3_note}
    \mathcal{C}(X)=\mathcal{C}^0\text{ and }\mathcal{V}(X) = \mathcal{V}^0\iff (T_\sigma, Z)\in \tilde{\mathcal{S}}_{\sigma, J}(\mathcal{C}^0, \mathcal{V}^0).
\end{align}
Then, 
\begin{align}
    h(t, Z^0, \mathcal{C}^0, \mathcal{V}^0) &= f_{T_\sigma \mid Z = Z^0,\;\mathcal{C}(X)=\mathcal{C}^0,\; \mathcal{V}(X)=\mathcal{V}^0}(t)\label{eq:thm3.1}\\
    &=f_{T_\sigma \mid Z = Z^0,\; (T_\sigma, Z)\in \tilde{\mathcal{S}}_{\sigma, J}(\mathcal{C}^0, \mathcal{V}^0)}(t)\label{eq:thm3.2}\\
    &\propto f_{T_\sigma, Z\mid (T_\sigma, Z)\in \tilde{\mathcal{S}}_{\sigma, J}(\mathcal{C}^0, \mathcal{V}^0)}(t, Z^0)\notag\\    &\overset{H_{0,\mathcal{V}^0}(\mathcal{C}^0)}{=} f_{T_\sigma}(t)f_Z(Z^0)\mathbbm{1}_{(t,Z^0)\in \tilde{\mathcal{S}}_{\sigma, J}(\mathcal{C}^0, \mathcal{V}^0)}\label{eq:thm3.3}\\
    &\propto f_{T_\sigma}(t)\mathbbm{1}_{(t,Z^0)\in \tilde{\mathcal{S}}_{\sigma, J}(\mathcal{C}^0, \mathcal{V}^0)}
\end{align}
where \eqref{eq:thm3.1} follows from the definition of the function $h,$ \eqref{eq:thm3.2} follows from the observation in \eqref{eq:thm3_note}, and \eqref{eq:thm3.3} holds by Lemma \ref{main_thm_goal2}. Thus, we have 
\begin{align*}
    h(t, Z, \mathcal{C}(X), \mathcal{V}(X))\overset{H_{0,\mathcal{V}(X)}(\mathcal{C}(X))}{\propto} f_{T_\sigma}(t) \mathbbm{1}_{(t,Z)\in \mathcal{\tilde{S}}_{\sigma,J}(\mathcal{C}(X), \mathcal{V}(X))}=f_{T_\sigma}(t) \mathbbm{1}_{t\in \mathcal{S}_{\sigma,J}},
\end{align*}
where the equality follows from the observation that 
\begin{align*}
    (t,Z)\in \mathcal{\tilde{S}}_{\sigma,J}(\mathcal{C}(X), \mathcal{V}(X))\iff t\in \mathcal{S}_{\sigma,J}.
\end{align*}
By Lemma \ref{main_thm_goal1},  $f_{T_\sigma}$ is the PDF of the $\chi_{d}$ distribution, 
and thus we have shown \eqref{main_thm_goal}.

To prove Theorem \ref{thm:main_thm}, recall that $\mathcal{V}$ is pre-specified in the setting of Theorem \ref{thm:main_thm}. Then, $Q_{\sigma,J}=Q_\sigma,$ and Theorem \ref{thm:main_thm} immediately follows from Theorem \ref{thm:group_dep}.

\subsubsection{Proofs of Theorems \ref{thm:unknown_var} and \ref{thm:unknownvar_datav}}\label{pf:unknown_var}

The proofs closely follow \citet[Appendix A.1]{yun2023selective} and parts of \citet[Appendix A.1]{gao2022selective}. 

We first present the proof of Theorem  \ref{thm:unknownvar_datav}. We first note that $p^*_J,$ conditioned on $Z^*,$ $\mathcal{C}(X),$ and $\mathcal{V}(X),$ is uniformly distributed under $H_{0,\mathcal{V}}^*$ by the property that $F(Y)$ is uniformly distributed for a random variable $Y$ with CDF $F.$ Then, for any $\alpha\in(0,1),$
\begin{align*}
    \mathbb{P}_{H_{0,\mathcal{V}}^*}(p^*_J\leq \alpha \mid \mathcal{C}(X), \mathcal{V}(X))=\mathbb{E}\left[\mathbb{P}_{H_{0,\mathcal{V}}^*}(p_J^*\leq \alpha \mid \mathcal{C}(X), \mathcal{V}(X), Z^*)\right]=\mathbb{E}[\alpha]=\alpha,
\end{align*}
as desired.

We next characterize the distribution function $Q_J^*.$ For any  realization $\left(Z^0, \mathcal{C}^0, \mathcal{V}^0\right)$ of $\left(Z^*, \mathcal{C}(X), \mathcal{V}(X)\right),$ define 
\begin{align*}
    h^*(t, Z^0, \mathcal{C}^0, \mathcal{V}^0) = f_{T^* \mid Z^* = Z^0,\;\mathcal{C}(X)=\mathcal{C}^0,\; \mathcal{V}(X)=\mathcal{V}^0}(t)
\end{align*}
for any $t>0.$
Then, $h^*(\cdot, Z^*, \mathcal{C}(X), \mathcal{V}(X))$ is the conditional PDF of $T^*$ given $Z^*,$ $\mathrm{C}(X),$ and $\mathcal{V}(X).$ We aim to show
\begin{align}\label{thm_unknown_var_goal}
    h^*(t, Z^*, \mathcal{C}(X), \mathcal{V}(X))\overset{H^*_{0,\mathcal{V}}}{\propto}
    f_{F_{d, d^*}}
    (t)\mathbbm{1}_{t\in S^*_{J}},
\end{align}
where $f_{F_{d, d^*}}$ is the PDF of the $F_{d, d^*}$ distribution. To show \eqref{thm_unknown_var_goal}, we first state and prove Lemmas \ref{lem:thm_unknown_var_lem0} and \ref{lem:thm_unknown_var_lem0.5}, which are used in the proofs of Lemmas \ref{lem:thm_unknown_var_lem1} and \ref{lem:thm_unknown_var_lem2}.

\begin{lemma}\label{lem:thm_unknown_var_lem0}
Let $P_{\mathcal{E}}$ and $P_{1}$ be as defined in Section \ref{sec:ftest}. Then, $P_{\mathcal{E}}$ and $P_{1}$ are orthogonal. 
\end{lemma}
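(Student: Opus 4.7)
The plan is to establish orthogonality by showing that the range of $P_{\mathcal{E}}$ and the range of $P_1$ are orthogonal subspaces of $\mathbb{R}^n$; since both matrices are (orthogonal) projections, this is equivalent to $P_{\mathcal{E}} P_1 = 0$, which is the definition of orthogonality of projections used in the sequel.

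First, I would characterize the range of $P_1$. The matrix $I_{\mathcal{C}_k} - \mathbf{1}_{\mathcal{C}_k}\mathbf{1}_{\mathcal{C}_k}^\top/|\mathcal{C}_k|$ is the projection onto the set of vectors supported on $\mathcal{C}_k$ whose coordinates sum to zero. Because the clusters $\{\mathcal{C}_k\}_{k\in \mathcal{K}}$ are disjoint, the individual projections act on orthogonal subspaces of $\mathbb{R}^n$, so $P_1$ itself is a projection, with range equal to
\[
\mathcal{R}(P_1) = \Bigl\{ w \in \mathbb{R}^n : \operatorname{supp}(w) \subseteq \bigcup_{k\in\mathcal{K}} \mathcal{C}_k \text{ and } \sum_{i\in \mathcal{C}_k} w_i = 0 \text{ for each } k\in \mathcal{K}\Bigr\}.
\]
I would write out this step explicitly because it is the only nontrivial algebraic content of the argument.

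Next, I would show that every generator $\mathbf{v}_{k,k'}$ of $\mathcal{E}$ is orthogonal to $\mathcal{R}(P_1)$. Fix $(k,k')\in \mathcal{V}$; note that by definition of $\mathcal{K}$, both $k$ and $k'$ are in $\mathcal{K}$. The vector $\mathbf{v}_{k,k'} = \mathbf{1}_{\mathcal{C}_k}/|\mathcal{C}_k| - \mathbf{1}_{\mathcal{C}_{k'}}/|\mathcal{C}_{k'}|$ is supported on $\mathcal{C}_k \cup \mathcal{C}_{k'}$ and is constant on each of these two clusters. For any $w \in \mathcal{R}(P_1)$,
\[
\langle \mathbf{v}_{k,k'}, w\rangle = \frac{1}{|\mathcal{C}_k|}\sum_{i\in \mathcal{C}_k} w_i - \frac{1}{|\mathcal{C}_{k'}|}\sum_{i\in \mathcal{C}_{k'}} w_i = 0,
\]
by the within-cluster zero-sum property of $w$.

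Finally, since $\mathcal{E} = \mathcal{R}(P_{\mathcal{E}})$ is spanned by $\{\mathbf{v}_{k,k'} : (k,k')\in \mathcal{V}\}$, linearity of the inner product gives $\mathcal{R}(P_{\mathcal{E}}) \perp \mathcal{R}(P_1)$, and hence $P_{\mathcal{E}} P_1 = 0 = P_1 P_{\mathcal{E}}$. I do not expect any real obstacle; the only thing to be careful about is verifying that the indices in $\mathcal{V}$ are indeed contained in $\mathcal{K}$ (so that the corresponding clusters actually appear in the sum defining $P_1$), which follows immediately from the definition of $\mathcal{K}$ given just above the statement of the proposed test statistic $T^*$.
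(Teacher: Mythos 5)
Your proof is correct and follows essentially the same route as the paper: both arguments establish orthogonality of the column spaces of $P_{\mathcal{E}}$ and $P_1$ by exploiting that each $\mathbf{v}_{k,k'}$ with $(k,k')\in\mathcal{V}$ is a combination of the cluster indicators $\mathbf{1}_{\mathcal{C}_k}$, $k\in\mathcal{K}$, while the range of $P_1$ consists of within-cluster mean-zero vectors. The only cosmetic difference is that you verify the orthogonality by a direct inner-product computation against the generators, whereas the paper enlarges $\mathcal{E}$ to $\mathrm{span}\{\mathbf{1}_{\mathcal{C}_k}:k\in\mathcal{K}\}$ and invokes the orthogonality of the corresponding projection matrices.
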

\begin{proof}
We show that $\mathrm{Col}\left(P_{\mathcal{E}}\right)$ and $\mathrm{Col}\left(P_{1}\right)$ are orthogonal, where $\mathrm{Col}(A)$ for a matrix $A$ denotes its column space. Note 
\[\mathrm{Col}\left(P_{\mathcal{E}}\right)=\mathrm{span}\{\mathbf{v}_{k,k'}:(k,k')\in \mathcal{V}\}\]
and recall that for any $k,k'\in[K],$ $\mathbf{v}_{k,k'}=\frac{1}{|\mathcal{C}_k|}\mathbf{1}_{\mathcal{C}_k}-\frac{1}{|\mathcal{C}_{k'}|}\mathbf{1}_{\mathcal{C}_{k'}}.$ Therefore, we have 
\[\mathrm{Col}(P_{\mathcal{E}})\subset \mathrm{span}\{\mathbf{1}_{\mathcal{C}_k}:k\in \mathcal{K}\},\] 
where $\mathcal{K}$ is as defined in Section \ref{sec:ftest}. Thus, to show $\mathrm{Col}\left(P_{\mathcal{E}}\right)\perp\mathrm{Col}\left(P_{1}\right),$ it is enough to show \[\mathrm{span}\{\mathbf{1}_{\mathcal{C}_k}:k\in \mathcal{K}\}\perp\mathrm{Col}\left(P_{1}\right),\] which immediately follows from the orthogonality between 
\begin{align*}
    \sum_{k\in\mathcal{K}}\frac{\mathbf{1}_{\mathcal{C}_k}\mathbf{1}_{\mathcal{C}_k}^\top}{|\mathcal{C}_k|}\hspace{1em}\text{and}\hspace{1em}\sum_{k\in \mathcal{K}}I_{\mathcal{C}_k}-\sum_{k\in \mathcal{K}}\frac{\mathbf{1}_{\mathcal{C}_k}\mathbf{1}_{\mathcal{C}_k}^\top}{|\mathcal{C}_k|}.
\end{align*}
\end{proof}

\begin{lemma}\label{lem:thm_unknown_var_lem0.5}
For each $k\in \mathcal{K},$ $\left(I_{\mathcal{C}_k}-\frac{\mathbf{1}_{\mathcal{C}_k}\mathbf{1}_{\mathcal{C}_k}^\top}{|\mathcal{C}_k|}\right)\mu=0_{n\times q}$ under $H_{0,\mathcal{V}}^*.$
\end{lemma}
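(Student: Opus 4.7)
The plan is to unpack the matrix $I_{\mathcal{C}_k} - \frac{\mathbf{1}_{\mathcal{C}_k}\mathbf{1}_{\mathcal{C}_k}^\top}{|\mathcal{C}_k|}$ as the mean-centering operator restricted to rows indexed by $\mathcal{C}_k$, and then use the null to show that the rows of $\mu$ indexed by $\mathcal{C}_k$ are all identical, so that centering annihilates them.

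First I would unpack the definition of $\mathcal{K}$: by construction, $k \in \mathcal{K}$ means there exists $k' \in [K]$ such that $(k,k') \in \mathcal{V}$ or $(k',k) \in \mathcal{V}$. Under the null hypothesis $H^*_{0,\mathcal{V}}$ in \eqref{eq:null_strong}, this pair forces $\mu_i = \mu_{i'}$ for all $i, i' \in \mathcal{C}_k \cup \mathcal{C}_{k'}$. In particular, there exists a common value $\mu^{(k)} \in \mathbb{R}^q$ such that $\mu_i = \mu^{(k)}$ for every $i \in \mathcal{C}_k$.

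Next I would compute the product row by row. For any $i \in [n]$, the $i$th row of $I_{\mathcal{C}_k}\mu$ equals $\mu_i^\top$ if $i \in \mathcal{C}_k$ and $0_q^\top$ otherwise, while the $i$th row of $\frac{\mathbf{1}_{\mathcal{C}_k}\mathbf{1}_{\mathcal{C}_k}^\top}{|\mathcal{C}_k|}\mu$ equals $\left(\frac{1}{|\mathcal{C}_k|}\sum_{j \in \mathcal{C}_k}\mu_j\right)^{\!\top}$ if $i \in \mathcal{C}_k$ and $0_q^\top$ otherwise. Under $H^*_{0,\mathcal{V}}$, the first expression equals $(\mu^{(k)})^\top \mathbf{1}\{i \in \mathcal{C}_k\}$ and the second equals $\bigl(\tfrac{1}{|\mathcal{C}_k|}\cdot |\mathcal{C}_k| \cdot \mu^{(k)}\bigr)^{\!\top}\mathbf{1}\{i \in \mathcal{C}_k\} = (\mu^{(k)})^\top\mathbf{1}\{i \in \mathcal{C}_k\}$, so their difference is $0_q^\top$ for every $i$, proving the claim.

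I do not anticipate any real obstacle here; the statement is a direct consequence of the definitions of $\mathcal{K}$ and $H^*_{0,\mathcal{V}}$ combined with the routine observation that the restricted centering operator kills constant vectors. The only point worth being explicit about in the write-up is why membership in $\mathcal{K}$ alone (as opposed to requiring the stronger condition of being involved in multiple pairs) already suffices to make $\mu$ constant on $\mathcal{C}_k$.
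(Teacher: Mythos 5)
Your proposal is correct and follows essentially the same argument as the paper: use the fact that membership of $k$ in $\mathcal{K}$ guarantees a pairing $(k,k')\in\mathcal{V}$ so that $H^*_{0,\mathcal{V}}$ forces $\mu$ to be constant on $\mathcal{C}_k$, and then verify that $I_{\mathcal{C}_k}\mu$ and $\frac{\mathbf{1}_{\mathcal{C}_k}\mathbf{1}_{\mathcal{C}_k}^\top}{|\mathcal{C}_k|}\mu$ coincide. The only cosmetic difference is that you compute row by row while the paper writes the same computation in matrix form, and your extra remark spelling out why $k\in\mathcal{K}$ alone suffices is a helpful (though minor) elaboration of a step the paper states tersely.
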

\begin{proof}
Fix any $k\in\mathcal{K}.$ Under $H_{0,\mathcal{V}}^*,$ $\mu_i=\mu_0$ for some $\mu_0\in\mathbb{R}^q$ for all $i\in \mathcal{C}_k.$ We thus have $I_{\mathcal{C}_k}\mu\overset{H_{0,\mathcal{V}}^*}{=}\mathbf{1}_{\mathcal{C}_k}\mu_0^\top.$ Likewise, $\mathbf{1}_{\mathcal{C}_k}^\top \mu\overset{H_{0,\mathcal{V}}^*}{=}|\mathcal{C}_k|\mu_0^\top,$ giving $\frac{\mathbf{1}_{\mathcal{C}_k}\mathbf{1}_{\mathcal{C}_k}^\top}{|\mathcal{C}_k|}\mu \overset{H_{0,\mathcal{V}}^*}{=}\mathbf{1}_{\mathcal{C}_k}\mu_0^\top.$ Therefore, 
\begin{align*}
    \left(I_{\mathcal{C}_k}-\frac{\mathbf{1}_{\mathcal{C}_k}\mathbf{1}_{\mathcal{C}_k}^\top}{|\mathcal{C}_k|}\right)\mu=I_{\mathcal{C}_k}\mu-\frac{\mathbf{1}_{\mathcal{C}_k}\mathbf{1}_{\mathcal{C}_k}^\top}{|\mathcal{C}_k|}\mu\overset{H_{0,\mathcal{V}}^*}{=}\mathbf{1}_{\mathcal{C}_k}\mu_0^\top-\mathbf{1}_{\mathcal{C}_k}\mu_0^\top=0_{n\times q}.
\end{align*}
\end{proof}

\begin{lemma}\label{lem:thm_unknown_var_lem1}
Suppose the cluster assignments and the set $\mathcal{V}$ are pre-specified. Then, 
\[T^*\overset{H_{0,\mathcal{V}}^*}{\sim}F_{d, d^*}.\]
\end{lemma}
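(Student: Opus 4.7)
The plan is to unpack the definition of $T^*$ and show it is the ratio of two independent, appropriately scaled chi-squared random variables, which is exactly the definition of an $F$ distribution. Write
\begin{align*}
T^* = \frac{\|P_{\mathcal{E}}X\|_F^2/\sigma^2}{d} \bigg/ \frac{\|P_1 X\|_F^2/\sigma^2}{d^*},
\end{align*}
so it suffices to establish: (i) $\|P_{\mathcal{E}}X\|_F^2/\sigma^2 \overset{H^*_{0,\mathcal{V}}}{\sim} \chi^2_d$; (ii) $\|P_1 X\|_F^2/\sigma^2 \overset{H^*_{0,\mathcal{V}}}{\sim} \chi^2_{d^*}$; and (iii) the numerator and denominator are independent under $H^*_{0,\mathcal{V}}$.

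For (i), note that $H^*_{0,\mathcal{V}}$ is a strengthening of $H_{0,\mathcal{V}}$, so Lemma \ref{main_thm_goal1} applies verbatim and yields $\|P_{\mathcal{E}}X\|_F^2/\sigma^2 \sim \chi^2_d$. For (ii), I would mimic the proof of Lemma \ref{main_thm_goal1}: pick an orthonormal basis $\{\mathbf{e}_l\}_{l=1}^{r_1}$ of $\mathrm{Col}(P_1)$, where $r_1 = \mathrm{rank}(P_1) = \sum_{k\in\mathcal{K}}(|\mathcal{C}_k|-1)$ (since $I_{\mathcal{C}_k} - \mathbf{1}_{\mathcal{C}_k}\mathbf{1}_{\mathcal{C}_k}^\top/|\mathcal{C}_k|$ is a rank-$(|\mathcal{C}_k|-1)$ projection and the summands have disjoint column supports). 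Then
\begin{align*}
\frac{\|P_1 X\|_F^2}{\sigma^2} = \frac{1}{\sigma^2}\sum_{l=1}^{r_1}\sum_{j=1}^{q}(\mathbf{e}_l^\top X^j)^2,
\end{align*}
and the key input is that $\mu^\top \mathbf{e}_l = 0_q$ under $H^*_{0,\mathcal{V}}$ for every $l$, which follows from Lemma \ref{lem:thm_unknown_var_lem0.5} (each basis vector lies in $\mathrm{Col}(P_1) = \bigoplus_{k\in\mathcal{K}}\mathrm{Col}(I_{\mathcal{C}_k}-\mathbf{1}_{\mathcal{C}_k}\mathbf{1}_{\mathcal{C}_k}^\top/|\mathcal{C}_k|)$). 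Combined with $\mathrm{Var}(\mathbf{e}_l^\top X^j)=\sigma^2$, orthonormality across $l$, and independence across $j$, this gives a sum of $q r_1 = d^*$ independent standard normals squared, i.e.\ $\chi^2_{d^*}$.

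For (iii), independence of $\|P_{\mathcal{E}}X\|_F^2$ and $\|P_1 X\|_F^2$ comes from the orthogonality of $P_{\mathcal{E}}$ and $P_1$ established in Lemma \ref{lem:thm_unknown_var_lem0}: concatenating orthonormal bases of $\mathrm{Col}(P_{\mathcal{E}})$ and $\mathrm{Col}(P_1)$ produces jointly orthonormal vectors, and the corresponding linear functionals $\mathbf{e}_l^\top X^j$ across both blocks and all $j\in[q]$ form a collection of independent $\mathcal{N}(0,\sigma^2)$ variables under $H^*_{0,\mathcal{V}}$; the numerator depends only on the first block and the denominator only on the second, so they are independent. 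Assembling (i)--(iii), $T^*$ is a ratio of independent $\chi^2_d/d$ and $\chi^2_{d^*}/d^*$, which is $F_{d,d^*}$ by definition.

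The proof is essentially a bookkeeping exercise; the only subtlety is checking the rank count of $P_1$ and confirming that Lemma \ref{lem:thm_unknown_var_lem0.5} gives $\mu^\top \mathbf{e}_l = 0_q$ for every basis vector of $\mathrm{Col}(P_1)$, not just for the columns of the individual centering matrices. This is where I would be most careful, but it reduces to linearity of the inner product with $\mu$ together with the fact that each summand $(I_{\mathcal{C}_k} - \mathbf{1}_{\mathcal{C}_k}\mathbf{1}_{\mathcal{C}_k}^\top/|\mathcal{C}_k|)$ annihilates $\mu$ under $H^*_{0,\mathcal{V}}$.
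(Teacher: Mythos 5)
Your proof is correct and follows essentially the same route as the paper's: chi-squared laws for the numerator and denominator (via Lemma \ref{lem:mean_null} for $P_{\mathcal{E}}$ and Lemma \ref{lem:thm_unknown_var_lem0.5} for $P_1$), independence from the orthogonality of $P_{\mathcal{E}}$ and $P_1$ established in Lemma \ref{lem:thm_unknown_var_lem0}, and then the definition of the $F$ distribution. The only cosmetic difference is that the paper obtains the $\chi^2_{d^*}$ law of $\|P_1X\|_F^2/\sigma^2$ via per-cluster idempotent quadratic forms (noncentral chi-squared with vanishing noncentrality), whereas you expand in an orthonormal basis of $\mathrm{Col}(P_1)$—an argument the paper itself uses in the proof of Lemma \ref{lem:thm_unknown_var_lem2}—so both are equally valid bookkeeping.
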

\begin{proof} We first show $\frac{1}{\sigma^2}\|P_{\mathcal{E}}X\|^2_F\overset{H_{0,\mathcal{V}}^*}{\sim}\chi^2_{d}.$ Let $\{\mathbf{e}_l:l\in \mathrm{dim}(\mathcal{E})\}$ be an orthonormal basis of $\mathcal{E}$ and write $P_{\mathcal{E}}= \sum_{l=1}^{\mathrm{dim}(\mathcal{E})}\mathbf{e}_l\mathbf{e}_l^\top.$ Then, 
\begin{align*}
\|P_{\mathcal{E}}X\|^2_F=\|\sum_{l=1}^{\mathrm{dim}(\mathcal{E})}\mathbf{e}_l\mathbf{e}_l^\top X\|^2_F=\sum_{l=1}^{\mathrm{dim}(\mathcal{E})}\|\mathbf{e}_l^\top X\|^2_2=\sum_{l=1}^{\mathrm{dim}(\mathcal{E})}\sum_{j=1}^q(\mathbf{e}_l^\top X^j)^2.
\end{align*}
Fix any $l\in \mathrm{dim}(\mathcal{E})$ and $j\in [q].$ We have $\mathbb{E}[\mathbf{e}_l^\top X^j]\overset{H_{0,\mathcal{V}}^*}{=}0$ and $\mathrm{Var}[\mathbf{e}_l^\top X^j]=\sigma^2,$ where the former follows by Lemma \ref{lem:mean_null} since $H_{0,\mathcal{V}}^*$ implies $H_{0,\mathcal{V}}.$ Then, independence across $\mathbf{e}_l$ for $l\in [\mathrm{dim}(\mathcal{E})]$ and the distributional assumptions on $X^j$ imply 
$\frac{1}{\sigma^2}\sum_{l=1}^{\mathrm{dim}(\mathcal{E})}(\mathbf{e}_l^\top X^j)^2\overset{H_{0,\mathcal{V}}^*}{\sim}\chi^2_{\mathrm{dim}(\mathcal{E})}.$ Finally, independence across $X^j$ for $j\in [q]$ gives $\frac{1}{\sigma^2}\sum_{j=1}^q\sum_{l=1}^{\mathrm{dim}(\mathcal{E})}(\mathbf{e}_l^\top X^j)^2\overset{H_{0,\mathcal{V}}^*}{\sim}\chi^2_{d}.$

We next show $\frac{1}{\sigma^2}\|P_{1}X\|^2_F\overset{H_{0,\mathcal{V}}^*}{\sim}\chi^2_{d^*}.$  
\begin{align*}
    \|P_{1}X\|^2_F=\sum_{k\in \mathcal{K}}\|\left(I_{\mathcal{C}_k}-\frac{\mathbf{1}_{\mathcal{C}_k}\mathbf{1}_{\mathcal{C}_k}^\top}{|\mathcal{C}_k|}\right)X\|^2_F=\sum_{k\in \mathcal{K}}\sum_{j=1}^q\|\left(I_{\mathcal{C}_k}-\frac{\mathbf{1}_{\mathcal{C}_k}\mathbf{1}_{\mathcal{C}_k}^\top}{|\mathcal{C}_k|}\right)X^j\|^2_2,
\end{align*}
where the first equality holds by the orthogonality across $I_{\mathcal{C}_k}$ for $k\in \mathcal{K}$ and the fact that $\mathrm{Col}\left(I_{\mathcal{C}_k}-\frac{\mathbf{1}_{\mathcal{C}_k}\mathbf{1}_{\mathcal{C}_k}^\top}{|\mathcal{C}_k|}\right)\subset \mathrm{Col}\left(I_{\mathcal{C}_k}\right)$ for each $k\in \mathcal{K}.$ Fix any $k\in \mathcal{K}$ and $j\in[q],$ and note that 
\begin{align*}
\frac{1}{\sigma^2}\|\left(I_{\mathcal{C}_k}-\frac{\mathbf{1}_{\mathcal{C}_k}\mathbf{1}_{\mathcal{C}_k}^\top}{|\mathcal{C}_k|}\right)X^j\|^2_2=\frac{X^j}{\sigma}^\top \left(I_{\mathcal{C}_k}-\frac{\mathbf{1}_{\mathcal{C}_k}\mathbf{1}_{\mathcal{C}_k}^\top}{|\mathcal{C}_k|}\right)\frac{X^j}{\sigma},
\end{align*}
where $\frac{X^j}{\sigma}\sim \mathcal{N}(\mu^j,I_n),$ and $I_{\mathcal{C}_k}-\frac{\mathbf{1}_{\mathcal{C}_k}\mathbf{1}_{\mathcal{C}_k}^\top}{|\mathcal{C}_k|}$ is idempotent and has rank $|\mathcal{C}_k|-1.$ It follows that 
\begin{align*}
    \frac{1}{\sigma^2}\|\left(I_{\mathcal{C}_k}-\frac{\mathbf{1}_{\mathcal{C}_k}\mathbf{1}_{\mathcal{C}_k}^\top}{|\mathcal{C}_k|}\right)X^j\|^2_2\sim \chi_{|\mathcal{C}_k|-1}^2\left(\frac{1}{\sigma^2}{\mu^j}^\top \left(I_{\mathcal{C}_k}-\frac{\mathbf{1}_{\mathcal{C}_k}\mathbf{1}_{\mathcal{C}_k}^\top}{|\mathcal{C}_k|}\right)\mu^j\right)\overset{H_{0,\mathcal{V}}^*}{=}\chi_{|\mathcal{C}_k|-1}^2
\end{align*}
since $\left(I_{\mathcal{C}_k}-\frac{\mathbf{1}_{\mathcal{C}_k}\mathbf{1}_{\mathcal{C}_k}^\top}{|\mathcal{C}_k|}\right)\mu^j\overset{H_{0,\mathcal{V}}^*}{=}0_n$ by Lemma \ref{lem:thm_unknown_var_lem0.5}.
Note that 
$\left(I_{\mathcal{C}_k}-\frac{\mathbf{1}_{\mathcal{C}_k}\mathbf{1}_{\mathcal{C}_k}^\top}{|\mathcal{C}_k|}\right)X^j$ for $k\in\mathcal{K}$ are independent by orthogonality across $\left(I_{\mathcal{C}_k}-\frac{\mathbf{1}_{\mathcal{C}_k}\mathbf{1}_{\mathcal{C}_k}^\top}{|\mathcal{C}_k|}\right)$ for $k\in\mathcal{K}$ and the distributional assumptions on $X^j.$ It follows that 
\begin{align*}
\frac{1}{\sigma^2}\sum_{k=1}^K\|\left(I_{\mathcal{C}_k}-\frac{\mathbf{1}_{\mathcal{C}_k}\mathbf{1}_{\mathcal{C}_k}^\top}{|\mathcal{C}_k|}\right)X^j\|^2_2\overset{H_{0,\mathcal{V}}^*}{\sim} \chi^2_{\sum_{k\in \mathcal{K}}(|\mathcal{C}_k|-1)}.
\end{align*}
Then, independence across $X^j$ for $j\in [q]$ gives
\begin{align*}
    \frac{1}{\sigma^2}\sum_{j=1}^q\sum_{k=1}^K\|\left(I_{\mathcal{C}_k}-\frac{\mathbf{1}_{\mathcal{C}_k}\mathbf{1}_{\mathcal{C}_k}^\top}{|\mathcal{C}_k|}\right)X^j\|^2_2\overset{H_{0,\mathcal{V}}^*}{\sim}\chi^2_{d^*}.
\end{align*}

Finally, we show that $\|P_{\mathcal{E}}X\|^2_F$ and $\|P_{1}X\|^2_F$ are independent. Note that
$P_{\mathcal{E}}$ and $P_{1}$ are orthogonal by Lemma  \ref{lem:thm_unknown_var_lem0}, so for each $j\in[q],$ the distributional assumptions on $X^j$ imply independence between $P_{\mathcal{E}}X^j$ and $P_{1}X^j.$ Then, independence across $X^j$ for $j\in[q]$ implies independence across $\left(P_{\mathcal{E}}X^j,P_{1}X^j\right)$ for $j\in[q],$ which then implies that $\|P_{\mathcal{E}}X\|^2_F$ and $\|P_{1}X\|^2_F$ are independent. Thus, by definition of the $F$ distribution, we have 
\begin{align*}
    \frac{\frac{1}{\sigma^2}\|P_{\mathcal{E}}X\|^2_F/d}{\frac{1}{\sigma^2}\|P_{1}X\|^2_F/d^*}\overset{H_{0,\mathcal{V}}^*}{\sim}F_{d,d^*}.
\end{align*}
\end{proof}

\begin{lemma}\label{lem:thm_unknown_var_lem2}
Suppose the cluster assignments and the set $\mathcal{V}$ are pre-specified. Then, \[T^*\indep Z^*\] under $H_{0,\mathcal{V}}^*.$
\end{lemma}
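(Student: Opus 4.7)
The plan is to show $T^* \indep Z^*$ under $H_{0,\mathcal{V}}^*$ by breaking $Z^*$ into orthogonal pieces and combining the Gaussian independence tools already used in Lemma \ref{main_thm_goal2}. The key observation is that $T^*$ is a function of $(\|P_{\mathcal{E}} X\|_F^2, \|P_1 X\|_F^2)$ alone, so the task reduces to isolating these two norms from each of the four components of $Z^*$. First, since $P_{\mathcal{E}}$ and $P_1$ are orthogonal by Lemma \ref{lem:thm_unknown_var_lem0}, and $P_2 = I_n - P_{\mathcal{E}} - P_1$ is then the projection onto the complement of $\mathrm{Col}(P_{\mathcal{E}}) \oplus \mathrm{Col}(P_1)$, the three projections are pairwise orthogonal. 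Applied columnwise to the Gaussian rows of $X$, this yields that the three matrices $P_{\mathcal{E}} X$, $P_1 X$, and $P_2 X$ are jointly independent, so $P_2 X$ is independent of every function of $(P_{\mathcal{E}} X, P_1 X)$. This handles the $P_2 X$ component of $Z^*$ and reduces the problem to showing that $T^*$ is independent of the three remaining components.

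Next, I would apply Lemma \ref{lem:prop_mult_normal} twice. Writing $P_{\mathcal{E}} = \sum_{l=1}^{\mathrm{dim}(\mathcal{E})} \mathbf{e}_l \mathbf{e}_l^\top$ in an orthonormal basis of $\mathcal{E}$, the mean condition $\mu^\top \mathbf{e}_l = 0_q$ required by the lemma holds under $H_{0,\mathcal{V}}^*$ by Lemma \ref{lem:mean_null}, giving $\|P_{\mathcal{E}} X\|_F \indep P_{\mathcal{E}} X / \|P_{\mathcal{E}} X\|_F$. For $P_1$ expanded in an orthonormal basis of its column space, the analogous mean condition is supplied by Lemma \ref{lem:thm_unknown_var_lem0.5}, yielding $\|P_1 X\|_F \indep P_1 X / \|P_1 X\|_F$. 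Pairing these two within-block independencies with the between-block independence $P_{\mathcal{E}} X \indep P_1 X$ produces
\[\big(\|P_{\mathcal{E}} X\|_F,\ \|P_1 X\|_F\big) \indep \big(P_{\mathcal{E}} X/\|P_{\mathcal{E}} X\|_F,\ P_1 X/\|P_1 X\|_F\big),\]
so $T^*$, being a function of the left-hand block, is independent of the two normalized projections in $Z^*$.

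Finally, the classical chi-squared--Beta decomposition supplies the last piece. Writing $U = \|P_{\mathcal{E}} X\|_F^2$ and $V = \|P_1 X\|_F^2$, which are independent scaled $\chi^2_d$ and $\chi^2_{d^*}$ variates by the proof of Lemma \ref{lem:thm_unknown_var_lem1}, the ratio $U/(U+V)$ is independent of the sum $U+V$. Since $T^* = (d^*/d)\cdot U/V$ is a function of $U/(U+V)$ alone, this gives $T^* \indep (\|P_{\mathcal{E}} X\|_F^2 + \|P_1 X\|_F^2)$, the remaining component of $Z^*$. The main bookkeeping obstacle is pasting these three pieces into a genuine joint independence, since pairwise independencies do not automatically combine into joint ones. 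The cleanest route is to verify directly that the joint density of $T^*$ and the four components of $Z^*$ factors as the marginal of $T^*$ times the joint marginal of $Z^*$, using (i) $P_2 X$ being independent of everything else, (ii) the norm-versus-direction independence from the previous paragraph, and (iii) $T^* \indep (U+V)$ from this step.
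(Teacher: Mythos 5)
Your proposal is correct and follows essentially the same route as the paper: the same decomposition into $P_{\mathcal{E}}X$, $P_1X$, $P_2X$ with their mutual independence, the same two applications of Lemma \ref{lem:prop_mult_normal} (with the mean conditions from Lemmas \ref{lem:mean_null} and \ref{lem:thm_unknown_var_lem0.5}) to separate norms from directions, and the same ratio-versus-sum independence of the two chi-squared quantities. The only difference is presentational---the paper assembles these blockwise independencies by showing $T^*\mid Z^*\overset{H_{0,\mathcal{V}}^*}{\sim}F_{d,d^*}$, peeling off conditioning variables via $Y_3\indep(Y_1,Y_2)\Rightarrow Y_3\indep Y_1\mid Y_2$, whereas you propose a direct factorization of the joint law, and the bookkeeping concern you raise is resolved the same way in either phrasing since the independencies involved are blockwise rather than merely pairwise.
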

\begin{proof}
By Lemma \ref{lem:thm_unknown_var_lem1}, we have 
\[T^*\overset{H_{0,\mathcal{V}}^*}{\sim}F_{d, d^*}.\] 
We prove $T^*$ and $Z^*$ are independent under $H_{0,\mathcal{V}}^*$ by showing that
\begin{align}\label{thm_unknown_var_lem2_goal}
    T^*\mid Z^*\overset{H_{0,\mathcal{V}}^*}{\sim}F_{d, d^*}.
\end{align}
To show \eqref{thm_unknown_var_lem2_goal}, we repeatedly use the fact that for random quantities $Y_1,Y_2,$ and $Y_3,$ 
\begin{align}\label{thm_unknown_var_lem2_fact}
    Y_3\indep (Y_1,Y_2)\implies Y_3\indep Y_1\:|\:Y_2.
\end{align}
For each $j\in [q],$ the distributional assumptions on $X^j$ and the orthogonality across $P_{\mathcal{E}},$ $P_{1},$ and $P_{2}$ give independence across $P_{\mathcal{E}}X^j,$ $P_{1}X^j,$ and $P_{2}X^j.$ Furthermore, the independence across $X^j$ for $j\in[q]$ implies independence across $(P_{\mathcal{E}}X^j,P_{1}X^j,P_{2}X^j)$ for $j\in[q].$ As a result, $P_{\mathcal{E}}X,$ $P_{1}X,$ and $P_{2}X$ are independent, so \eqref{thm_unknown_var_lem2_goal} can be equivalently written as 
\begin{align}\label{thm_unknown_var_lem2_goal.2}
T^*\ \  |\ \ \left(\frac{P_{\mathcal{E}}X}{\|P_{\mathcal{E}}X\|_F},\ \  \frac{P_{1}X}{\|P_{1}X\|_F},\ \  \|P_{\mathcal{E}}X\|_F^2+\|P_{1}X\|_F^2\right)\overset{H_{0,\mathcal{V}}^*}{\sim}F_{d, d^*},
\end{align}
where we use the fact in \eqref{thm_unknown_var_lem2_fact}. Next, note that independence between $P_{\mathcal{E}}X$ and $P_{1}X$ gives

\begin{align}\label{lem_thm_unknown_var_lem2_second.1}
\left(\|P_{\mathcal{E}}X\|_F^2,\ \  \frac{P_{\mathcal{E}}X}{\|P_{\mathcal{E}}X\|_F}\right)\indep \left(\|P_{1}X\|_F^2,\ \  \frac{P_{1}X}{\|P_{1}X\|_F}\right).
\end{align}
Further note that there exist orthonormal bases $\{\mathbf{e}_l\}_{l=1}^{\mathrm{dim}(\mathcal{E})}$ and $\{\mathbf{e}'_l\}_{l=1}^{\sum_{k\in \mathcal{K}}|\mathcal{C}_k|-|\mathcal{K}|}$ 
of the column spaces of $P_{\mathcal{E}}$ and $P_{1},$ respectively, so we can write $P_{\mathcal{E}}=\sum_{l=1}^{\mathrm{dim}(\mathcal{E})}\mathbf{e}_l\mathbf{e}_l^\top$ and $P_{1}=\sum_{l=1}^{\sum_{k\in \mathcal{K}}|\mathcal{C}_k|-|\mathcal{K}|}\mathbf{e}'_l{\mathbf{e}'_l}^\top.$ For each $l\in [\mathrm{dim}(\mathcal{E})],$ $\mu^\top \mathbf{e}_l\overset{H_{0,\mathcal{V}}^*}{=}\mathbf{0}_q$ by Lemma \ref{lem:mean_null} since $H_{0,\mathcal{V}}^*$ implies $H_{0,\mathcal{V}}.$ Furthermore, $\mu^\top {\mathbf{e}'_l}\overset{H_{0,\mathcal{V}}^*}{=}\mathbf{0}_q$ for each $l\in [m-K]$ by Lemma \ref{lem:thm_unknown_var_lem0.5} since $\mathbf{e}'_l=\sum_{k\in\mathcal{K}}\left(I_{\mathcal{C}_k}-\frac{\mathbf{1}_{\mathcal{C}_k}\mathbf{1}_{\mathcal{C}_k}^\top}{|\mathcal{C}_k|}\right)a$ for some $a\in \mathbb{R}^n.$ Thus, by Lemma \ref{lem:prop_mult_normal}, we have 
\begin{align}\label{lem_thm_unknown_var_lem2_second.2}
    \|P_{\mathcal{E}}X\|_F^2\indep \frac{P_{\mathcal{E}}X}{\|P_{\mathcal{E}}X\|_F}\hspace{1em}\text{and}\hspace{1em}\|P_{1}X\|_F^2\indep \frac{P_{1}X}{\|P_{1}X\|_F}
\end{align}
under $H_{0,\mathcal{V}}^*.$ Then, \eqref{lem_thm_unknown_var_lem2_second.1} and \eqref{lem_thm_unknown_var_lem2_second.2} imply
\begin{align*}
\left(\|P_{\mathcal{E}}X\|_F^2,\ \ \|P_{1}X\|_F^2 \right)\indep \left(
\frac{P_{\mathcal{E}}X}{\|P_{\mathcal{E}}X\|_F},\ \  \frac{P_{1}X}{\|P_{1}X\|_F}\right),
\end{align*}
allowing \eqref{thm_unknown_var_lem2_goal.2} to be equivalently written as 
\begin{align}\label{thm_unknown_var_lem2_goal.3}
T^*\ \ |\ \  \|P_{\mathcal{E}}X\|_F^2+\|P_{1}X\|_F^2\overset{H_{0,\mathcal{V}}^*}{\sim}F_{d, d^*}.
\end{align}
Finally, $T^*\indep\|P_{\mathcal{E}}X\|_F^2+\|P_{1}X\|_F^2 $ under $H_{0,\mathcal{V}}^*$ since $T^*=\frac{\frac{1}{\sigma^2}\|P_{\mathcal{E}}X\|^2_F/d}{\frac{1}{\sigma^2}\|P_{1}X\|^2_F/d^*},$ where $\frac{1}{\sigma^2}\|P_{\mathcal{E}}X\|^2_F\overset{H_{0,\mathcal{V}}^*}{\sim }\chi^2_{d},$ $\frac{1}{\sigma^2}\|P_{1}X\|^2_F\overset{H_{0,\mathcal{V}}^*}{\sim }\chi^2_{d^*},$ and $\|P_{\mathcal{E}}X\|^2_F\indep \frac{1}{\sigma^2}\|P_{1}X\|^2_F$ under $H_{0,\mathcal{V}}^*.$ Therefore, 
\eqref{thm_unknown_var_lem2_goal.3} is equivalent to 
\begin{align}\label{thm_unknown_var_lem2_goal.4}
T^*\overset{H_{0,\mathcal{V}}^*}{\sim}F_{d, d^*},
\end{align}
which holds by Lemma \ref{lem:thm_unknown_var_lem1}.
\end{proof}

We now show \eqref{thm_unknown_var_goal}. In the remainder of this section, we write $H^*_{0,\mathcal{V}(X)}(\mathcal{C}(X))$ to make explicit the dependence of $H^*_{0,\mathcal{V}}$ on $\mathcal{C}(X)$ and $\mathcal{V}(X).$ Fix any realization $\left(Z^0, \mathcal{C}^0, \mathcal{V}^0\right)$ of $\left(Z^*, \mathcal{C}(X), \mathcal{V}(X)\right),$ and define $\tilde{\mathcal{S}}^*_{J}(\mathcal{C}^0, \mathcal{V}^0)=\tilde{\mathcal{S}^*}(\mathcal{C}^0)\cap \tilde{S}_{\mathcal{V}}^*(\mathcal{V}^0),$ where
\begin{align*}
\tilde{\mathcal{S}}^*(\mathcal{C}^0) = \{(\psi,z) \in [0,\infty)\times (\mathbb{R}^{n\times q}\times \mathbb{R}^{n\times q}\times (0,\infty)\times \mathbb{R}^{n\times q}): \mathcal{C}(\tilde{x}^*(\psi,z))=\mathcal{C}^0\}
\end{align*}
and
\begin{align*}
    \tilde{\mathcal{S}}^*_{\mathcal{V}}(\mathcal{V}^0)=\{(\psi,z) \in [0,\infty)\times (\mathbb{R}^{n\times q}\times \mathbb{R}^{n\times q}\times (0,\infty)\times \mathbb{R}^{n\times q}):\mathcal{V}(\tilde{x}^*(\psi,z))=\mathcal{V}^0\},
\end{align*}
where $\tilde{x}^*:[0,\infty)\times (\mathbb{R}^{n\times q}\times \mathbb{R}^{n\times q}\times (0,\infty)\times \mathbb{R}^{n\times q})\rightarrow \mathbb{R}^{n\times q},$
\[\tilde{x}^*(\psi, (z_1,z_2,z_3,z_4))=\sqrt{z_3}\left(\sqrt{\frac{\psi}{\psi+r^*}}z_1+\sqrt{\frac{r^*}{\psi+r^*}}z_2\right)+z_4.\] 
Let $f_{T^*}$ be the PDF of $T^*$ in the setting where the cluster assignments and the set $\mathcal{V}$ are pre-specified; similarly define $f^*_{T^*, Z^*}$ as the joint PDF of $T^*$ and $Z^*,$ and $f_{Z^*}$ as the PDF of $Z^*.$ Note that  
\begin{align}\label{eq:thm7_note}
    \mathcal{C}(X)=\mathcal{C}^0\text{ and }\mathcal{V}(X) = \mathcal{V}^0\iff (T^*, Z^*)\in \tilde{\mathcal{S}}^*_{J}(\mathcal{C}^0, \mathcal{V}^0).
\end{align}
Then, 
\begin{align}
    h^*(t, Z^0, \mathcal{C}^0, \mathcal{V}^0) &= f_{T^* \mid Z^* = Z^0,\;\mathcal{C}(X)=\mathcal{C}^0,\; \mathcal{V}(X)=\mathcal{V}^0}(t)\label{eq:thm7.1}\\
    &=f_{T^*\mid Z^* = Z^0,\; (T^*, Z^*)\in \tilde{\mathcal{S}}^*_{J}(\mathcal{C}^0, \mathcal{V}^0)}(t)\label{eq:thm7.2}\\
    &\propto f_{T^*, Z^*\mid (T^*, Z^*)\in \tilde{\mathcal{S}}^*_{J}(\mathcal{C}^0, \mathcal{V}^0)}(t, Z^0)\notag\\
    &\overset{H^*_{0,\mathcal{V}^0}(\mathcal{C}^0)}{=}  f_{T^*}(t)f_{Z^*}(Z^0)\mathbbm{1}_{(t,Z^0)\in \tilde{\mathcal{S}}^*_{J}(\mathcal{C}^0, \mathcal{V}^0)}\label{eq:thm7.3}\\
    &\propto f_{T^*}(t)\mathbbm{1}_{(t,Z^0)\in \tilde{\mathcal{S}}^*_{J}(\mathcal{C}^0, \mathcal{V}^0)}
\end{align}
where \eqref{eq:thm7.1} follows from the definition of the function $h^*,$ \eqref{eq:thm7.2} follows from the observation in \eqref{eq:thm7_note}, and \eqref{eq:thm7.3} holds by Lemma \ref{lem:thm_unknown_var_lem2}. Thus, we have 
\begin{align*}
    h^*(t, Z^*, \mathcal{C}(X), \mathcal{V}(X))\overset{H^*_{0,\mathcal{V}(X)}(\mathcal{C}(X))}{\propto} f_{T^*}(t) \mathbbm{1}_{(t,Z^*)\in \mathcal{\tilde{S}}^*_{J}(\mathcal{C}(X), \mathcal{V}(X))}=f_{T^*}(t) \mathbbm{1}_{t\in \mathcal{S}^*_{J}},
\end{align*}
where the equality follows from the observation that 
\begin{align*}
    (t,Z^*)\in \mathcal{\tilde{S}}^*_{J}(\mathcal{C}(X), \mathcal{V}(X))\iff t\in \mathcal{S}^*_{J}.
\end{align*}
By Lemma \ref{lem:thm_unknown_var_lem1},  we have that $f_{T^*}$ is the PDF of the $F_{d, d^*}$ distribution, and thus \eqref{thm_unknown_var_goal} holds.

To prove Theorem \ref{thm:unknown_var}, recall that $\mathcal{V}$ is pre-specified in the setting of Theorem \ref{thm:unknown_var}. Then, $Q_{J}^*=Q^*,$ and Theorem \ref{thm:unknown_var} immediately follows from Theorem \ref{thm:unknownvar_datav}.

\subsubsection{Proof of Proposition \ref{prop:quadratic}}\label{pf:quadratic}
Recall $x_\sigma(\psi)=\psi\cdot \frac{\sigma P_{\mathcal{E}}X}{\|P_{\mathcal{E}}X\|_F}+P_{\mathcal{E}}^\perp X=\psi D + E,$ and fix any $i,i'\in [n].$ 
\begin{align*}
\norm{[x_\sigma(\psi)]_i-[x_\sigma(\psi)]_{i'}}_2^2&=\|\psi D_i+E_i- (\psi D_{i'}+E_{i'})\|^2_2\\
&= \|(D_i-D_{i'})\psi + E_i-E_{i'}\|_2^2\\
&= \|\mathbf{d}_{ii'}\psi+\mathbf{e}_{ii'}\|_2^2\\
&= \|\mathbf{d}_{ii'}\|_2^2\psi^2 + 2\langle \mathbf{d}_{ii'}, \mathbf{e}_{ii'}\rangle\psi + \|\mathbf{e}_{ii'}\|^2_2\\
&=\lambda_{ii',1}\psi^2 + \lambda_{ii',2}\psi + \lambda_{ii',3}.
\end{align*}
Fix any $i\in [n], l\in[K],$ and $j\in [J].$
\begin{align*}
\|[x_\sigma(\psi)]_i-M_l^{(j-1)}(x_\sigma(\psi))\|_2^2&=\|{\psi D_i + E_i- \sum_{s=1}^nw_{l,s}^{(j-1)}\left(\psi D_{s} + E_{s}\right)}\|_2^2\\
&=\| (D_i-\sum_{s=1}^nw_{l,s}^{(j-1)}D_{s})\psi + E_i- \sum_{s=1}^nw_{l,s}^{(j-1)}E_{s}\|_2^2\\
&= \|(D_i-M_l^{(j-1)}(D))\psi+E_i-M_l^{(j-1)}(E)\|_2^2\\
&=\|\mathbf{d}_{ilj}\psi+\mathbf{e}_{ilj}\|_2^2\\
&= \|\mathbf{d}_{ilj}\|_2^2\psi^2 + 2\langle \mathbf{d}_{ilj}, \mathbf{e}_{ilj}\rangle\psi + \|\mathbf{e}_{ilj}\|^2_2\\
&=\lambda_{ilj,1}\psi^2 + \lambda_{ilj,2}\psi + \lambda_{ilj,3}.
\end{align*}

\subsubsection{Proof of Proposition \ref{prop:group}}\label{pf:group}
Recall $x_\sigma(\psi)=\psi\frac{\sigma P_{\mathcal{E}}X}{\|P_{\mathcal{E}}X\|_F}+P_{\mathcal{E}}^\perp X=\psi D + E,$ and fix any $\mathbf{v}\in \mathcal{V}.$ 
\begin{align*}
    \|[x_{\sigma}(\psi)]^\top \mathbf{v} \|^2&=\|(\psi D + E)^\top \mathbf{v}\|^2_2\\
    &=\|\psi D^\top \mathbf{v} + E^\top \mathbf{v}\|^2_2\\
    &= \|\psi \mathbf{d}_{\mathbf{v}} + \mathbf{e}_{\mathbf{v}}\|^2_2\\
    &=\|\mathbf{d}_{\mathbf{v}}\|_2^2\psi^2+2\langle \mathbf{d}_{\mathbf{v}}, \mathbf{e}_{\mathbf{v}} \rangle\psi + \|\mathbf{e}_{\mathbf{v}}\|^2_2\\
    &=\lambda_{{\mathbf{v}},1}\psi^2 + \lambda_{{\mathbf{v}},2} \psi + \lambda_{{\mathbf{v}},3}. 
\end{align*}

\subsubsection{Proof of Proposition \ref{prop:unknown_cts}}\label{pf:unknown_cts}
For the simplicity of notations, let  $u_\psi=\sqrt{\frac{\psi}{\psi+r^*}}.$ Fix any $i,i',i''\in \mathbb{N}.$ We first show 
\begin{align*}
    \|[x^*(\psi)]_i-[x^*(\psi)]_{i'}\|^2_2\leq \|[x^*(\psi)]_i-[x^*(\psi)]_{i''}\|^2_2\iff f_{ii'}(\psi)\leq f_{ii''}(\psi).
\end{align*}
Since $\|P_{\mathcal{E}}X\|_F^2+\|P_{1}X\|_F^2>0,$ we have 
\begin{align*}
    &\frac{1}{\|P_{\mathcal{E}}X\|_F^2+\|P_{1}X\|_F^2}\|[x^*(\psi)]_i-[x^*(\psi)]_{i'}\|^2_2\\
    &=\|(u_\psi A_i + \sqrt{1-u_\psi^2} B_i  +\mathcal{C}_i) -(u_\psi A_{i'} + \sqrt{1-u_\psi^2} B_{i'}  +\mathcal{C}_{i'})\|_2^2\\
    &=\|(A_i-A_{i'})u_\psi + (B_i-B_{i'})\sqrt{1-u^2_\psi}+\mathcal{C}_i-\mathcal{C}_{i'}\|^2_2\\
    &= \|\mathbf{a}_{ii'}u_\psi + \mathbf{b}_{ii'}\sqrt{1-u^2_\psi}+\mathbf{c}_{ii'}\|^2_2\\
    &=\|\mathbf{a}_{ii'}\|_2^2u_\psi^2 +
    \|\mathbf{b}_{ii'}\|_2^2(1-u^2_\psi) + 
    \|\mathbf{c}_{ii'}\|^2_2 +2\langle \mathbf{a}_{ii'}, \mathbf{b}_{ii'}\rangle u_\psi\sqrt{1-u^2_\psi} + 2\langle \mathbf{a}_{ii'},\mathbf{c}_{ii'}\rangle u_\psi\\
    &\qquad + 2\langle \mathbf{b}_{ii'}, \mathbf{c}_{ii'}\rangle\sqrt{1-u_\psi^2}\\
    &= (\|\mathbf{a}_{ii'}\|_2^2-\|\mathbf{b}_{ii'}\|_2^2)u_\psi^2 + 2\langle \mathbf{a}_{ii'}, \mathbf{b}_{ii'}\rangle u_\psi\sqrt{1-u^2_\psi}+2\langle \mathbf{a}_{ii'},\mathbf{c}_{ii'} \rangle u_\psi + 2\langle \mathbf{b}_{ii'}, \mathbf{c}_{ii'}\rangle\sqrt{1-u_\psi^2}\\
    &\qquad +\|\mathbf{b}_{ii'}\|^2_2 + \|\mathbf{c}_{ii'}\|^2_2.
\end{align*}
Substituting $u_\psi=\sqrt{\frac{\psi}{\psi+r^*}},$ we have 
\begin{align*}
&\frac{1}{\|P_{\mathcal{E}}X\|_F^2+\|P_{1}X\|_F^2}\|[x^*(\psi)]_i-[x^*(\psi)]_{i'}\|^2_2\\
&=(\|\mathbf{a}_{ii'}\|_2^2-\|\mathbf{b}_{ii'}\|_2^2)\frac{\psi}{\psi+r^*} + 2\langle \mathbf{a}_{ii'}, \mathbf{b}_{ii'}\rangle \sqrt{\frac{\psi}{\psi+r^*}}\sqrt{\frac{r^*}{\psi+r^*}}+2\langle \mathbf{a}_{ii'},\mathbf{c}_{ii'} \rangle \sqrt{\frac{\psi}{\psi+r^*}}\\
&\qquad+ 2\langle \mathbf{b}_{ii'}, \mathbf{c}_{ii'}\rangle\sqrt{\frac{r^*}{\psi+r^*}} +\|\mathbf{b}_{ii'}\|^2_2 + \|\mathbf{c}_{ii'}\|^2_2.
\end{align*}
Thus, 
\begin{align*}
&\frac{\psi+r^*}{\|P_{\mathcal{E}}X\|_F^2+\|P_{1}X\|_F^2}\|[x'(\psi)]_i-[x'(\psi)]_{i'}\|^2_2\\
&=(\|\mathbf{a}_{ii'}\|_2^2-\|\mathbf{b}_{ii'}\|_2^2)\psi+ 2\langle \mathbf{a}_{ii'}, \mathbf{b}_{ii'}\rangle \sqrt{r^*}\sqrt{\psi}+2\langle \mathbf{a}_{ii'},\mathbf{c}_{ii'} \rangle \sqrt{\psi}\sqrt{\psi+r^*}\\
&\qquad+ 2\langle \mathbf{b}_{ii'}, \mathbf{c}_{ii'}\rangle\sqrt{r^*}\sqrt{\psi+r^*} +(\|\mathbf{b}_{ii'}\|^2_2 + \|\mathbf{c}_{ii'}\|^2_2)(\psi+r^*)\\
&=(\|\mathbf{a}_{ii'}\|_2^2+\|\mathbf{c}_{ii'}\|_2^2)\psi+ 2\langle \mathbf{a}_{ii'}, \mathbf{b}_{ii'}\rangle \sqrt{r^*}\sqrt{\psi}+2\langle \mathbf{a}_{ii'},\mathbf{c}_{ii'} \rangle \sqrt{\psi}\sqrt{\psi+r^*}\\
&\qquad+ 2\langle \mathbf{b}_{ii'}, \mathbf{c}_{ii'}\rangle\sqrt{r^*}\sqrt{\psi+r^*} +(\|\mathbf{b}_{ii'}\|^2_2 + \|\mathbf{c}_{ii'}\|^2_2)r^*\\
&=\lambda_{ii',1}\psi + \lambda_{ii',2}\sqrt{\psi}+\lambda_{ii',3}\sqrt{\psi}\sqrt{\psi+r^*}+\lambda_{ii',4}\sqrt{\psi+r^*}+\lambda_{ii',5}.
\end{align*}
Similarly, we have 
\begin{multline*}
\frac{\psi+r^*}{\|P_{\mathcal{E}}X\|_F^2+\|P_{1}X\|_F^2}\|[x^*(\psi)]_i-[x^*(\psi)]_{i''}\|^2_2=\\
\lambda_{ii'',1}\psi + \lambda_{ii'',2}\sqrt{\psi}+\lambda_{ii'',3}\sqrt{\psi}\sqrt{\psi+r^*}+\lambda_{ii'',4}\sqrt{\psi+r^*}+\lambda_{ii'',5}.
\end{multline*}
Then, since $\psi+r^*>0,$ it follows that 
\begin{align*}
\|[x^*(\psi)]_i-[x^*(\psi)]_{i'}\|^2_2\leq \|[x^*(\psi)]_i-[x^*(\psi)]_{i''}\|^2_2\iff f_{ii'}(\psi)\leq f_{ii''}(\psi).
\end{align*}
Fix any $l,l'\in [K].$ We next show 
\begin{align*}
\|[x^*(\psi)]_i- M_l^{(j-1)}(x^*(\psi))\|^2_2\leq \|[x^*(\psi)]_i- M_{l'}^{(j-1)}(x^*(\psi))\|^2_2\iff f_{ijl}(\psi)\leq f_{ijl'}(\psi).
\end{align*}
Again, by $\|P_{\mathcal{E}}X\|_F^2+\|P_{1}X\|_F^2>0,$ we have  
\begin{align*}
  &\frac{1}{\|P_{\mathcal{E}}X\|_F^2+\|P_{1}X\|_F^2}\|[x^*(\psi)]_i- M_l^{(j-1)}(x^*(\psi))\|^2_2\\
  &=\frac{1}{\|P_{\mathcal{E}}X\|_F^2+\|P_{1}X\|_F^2}\|{[x^*(\psi)]_i- \sum_{s=1}^n w_{l,s}^{(j-1)}  [x^*(\psi)]_s }\|^2_2 \\
  &=\|(u_\psi A_i + \sqrt{1-u_\psi^2} B_i  +\mathcal{C}_i) - \sum_{s=1}^n w_{l,s}^{(j-1)} (u_\psi A_s + \sqrt{1-u_\psi^2} B_s  +\mathcal{C}_s)\|^2_2\\
  &=\|(A_i-\sum_{s=1}^n w_{l,s}^{(j-1)}A_s)u_\psi + (B_i-\sum_{s=1}^n w_{l,s}^{(j-1)}B_s)\sqrt{1-u_\psi^2}\\
  &\qquad +\mathcal{C}_i-\sum_{s=1}^n w_{l,s}^{(j-1)}\mathcal{C}_s\|^2_2\\
  &= \|(A_i-M_l^{(j-1)}(A))u_\psi +(B_i-M_l^{(j-1)}(B))\sqrt{1-u_\psi^2}+\mathcal{C}_i-M_l^{(j-1)}(C)\|_2^2\\
  &=\|\mathbf{a}_{ijl}u_\psi + \mathbf{b}_{ijl}\sqrt{1-u_\psi^2} +\mathbf{c}_{ijl}\|^2_2\\
  &=\|\mathbf{a}_{ijl}\|_2^2u_\psi^2 +
    \|\mathbf{b}_{ijl}\|_2^2(1-u^2_\psi) + 
    \|\mathbf{c}_{ijl}\|^2_2 +2\langle \mathbf{a}_{ijl}, \mathbf{b}_{ijl}\rangle u_\psi\sqrt{1-u^2_\psi} + 2\langle \mathbf{a}_{ijl},\mathbf{c}_{ijl}\rangle u_\psi\\
    &\qquad + 2\langle \mathbf{b}_{ijl}, \mathbf{c}_{ijl}\rangle\sqrt{1-u_\psi^2}\\
    &= (\|\mathbf{a}_{ijl}\|_2^2-\|\mathbf{b}_{ijl}\|_2^2)u_\psi^2 + 2\langle \mathbf{a}_{ijl}, \mathbf{b}_{ijl}\rangle u_\psi\sqrt{1-u^2_\psi}+2\langle \mathbf{a}_{ijl},\mathbf{c}_{ijl} \rangle u_\psi + 2\langle \mathbf{b}_{ijl}, \mathbf{c}_{ijl}\rangle\sqrt{1-u_\psi^2}\\
    &\qquad +\|\mathbf{b}_{ijl}\|^2_2 + \|\mathbf{c}_{ijl}\|^2_2.
\end{align*}
Substituting $u_\psi=\sqrt{\frac{\psi}{\psi+r^*}},$ we have 
\begin{align*}
&\frac{1}{\|P_{\mathcal{E}}X\|_F^2+\|P_{1}X\|_F^2}\|[x^*(\psi)]_i- M_l^{(j-1)}(x^*(\psi))\|^2_2\\
&=(\|\mathbf{a}_{ijl}\|_2^2-\|\mathbf{b}_{ijl}\|_2^2)\frac{\psi}{\psi+r^*} + 2\langle \mathbf{a}_{ijl}, \mathbf{b}_{ijl}\rangle \sqrt{\frac{\psi}{\psi+r^*}}\sqrt{\frac{r^*}{\psi+r^*}}+2\langle \mathbf{a}_{ijl},\mathbf{c}_{ijl} \rangle \sqrt{\frac{\psi}{\psi+r^*}}\\
&\qquad+ 2\langle \mathbf{b}_{ijl}, \mathbf{c}_{ijl}\rangle\sqrt{\frac{r^*}{\psi+r^*}} +\|\mathbf{b}_{ijl}\|^2_2 + \|\mathbf{c}_{ijl}\|^2_2.
\end{align*}
Thus, 
\begin{align*}
&\frac{\psi+r^*}{\|P_{\mathcal{E}}X\|_F^2+\|P_{1}X\|_F^2}\|[x^*(\psi)]_i- M_l^{(j-1)}(x^*(\psi))\|^2_2\\
&=(\|\mathbf{a}_{ijl}\|_2^2-\|\mathbf{b}_{ijl}\|_2^2)\psi+ 2\langle \mathbf{a}_{ijl}, \mathbf{b}_{ijl}\rangle \sqrt{r^*}\sqrt{\psi}+2\langle \mathbf{a}_{ijl},\mathbf{c}_{ijl} \rangle \sqrt{\psi}\sqrt{\psi+r^*}\\
&\qquad+ 2\langle \mathbf{b}_{ijl}, \mathbf{c}_{ijl}\rangle\sqrt{r^*}\sqrt{\psi+r^*} +(\|\mathbf{b}_{ijl}\|^2_2 + \|\mathbf{c}_{ijl}\|^2_2)(\psi+r^*)\\
&=(\|\mathbf{a}_{ijl}\|_2^2+\|\mathbf{c}_{ijl}\|_2^2)\psi+ 2\langle \mathbf{a}_{ijl}, \mathbf{b}_{ijl}\rangle \sqrt{r^*}\sqrt{\psi}+2\langle \mathbf{a}_{ijl},\mathbf{c}_{ijl} \rangle \sqrt{\psi}\sqrt{\psi+r^*}\\
&\qquad+ 2\langle \mathbf{b}_{ijl}, \mathbf{c}_{ijl}\rangle\sqrt{r^*}\sqrt{\psi+r^*} +(\|\mathbf{b}_{ijl}\|^2_2 + \|\mathbf{c}_{ijl}\|^2_2)r^*\\
&=\lambda_{ijl,1}\psi + \lambda_{ijl,2}\sqrt{\psi}+\lambda_{ijl,3}\sqrt{\psi}\sqrt{\psi+r^*}+\lambda_{ijl,4}\sqrt{\psi+r^*}+\lambda_{ijl,5}.
\end{align*}
Similarly, we have 
\begin{multline*}
\frac{\psi+r^*}{\|P_{\mathcal{E}}X\|_F^2+\|P_{1}X\|_F^2}\|[x^*(\psi)]_i- M_{l'}^{(j-1)}(x^*(\psi))\|^2_2=\\\lambda_{ijl',1}\psi + \lambda_{ijl',2}\sqrt{\psi}+\lambda_{ijl',3}\sqrt{\psi}\sqrt{\psi+r^*}+\lambda_{ijl',4}\sqrt{\psi+r^*}+\lambda_{ijl',5}.
\end{multline*}
Then, since $\psi+r^*>0,$ it follows that
\begin{align*}
\|[x^*(\psi)]_i- M_l^{(j-1)}(x^*(\psi))\|^2_2\leq \|[x^*(\psi)]_i- M_{l'}^{(j-1)}(x^*(\psi))\|^2_2\iff f_{ijl}(\psi)\leq f_{ijl'}(\psi).
\end{align*}

\subsubsection{Proof of Proposition \ref{V_dep_unknown}}\label{pf:V_dep_unknown}
For the simplicity of notations, let  $u_\psi=\sqrt{\frac{\psi}{\psi+r^*}}.$ Fix any $\mathbf{v},\mathbf{v}'\in \mathbb{R}^n.$ We first show 
\begin{align*}
    \|[x^*(\psi)]^\top \mathbf{v}\|_2^2\leq \|[x^*(\psi)]^\top \mathbf{v}'\|_2^2\iff h_{\mathbf{v}}(\psi)\leq h_{\mathbf{v}'}(\psi).
\end{align*}
Since $\|P_{\mathcal{E}}X\|_F^2+\|P_{1}X\|_F^2>0,$ we have 
\begin{align*}
    &\frac{1}{\|P_{\mathcal{E}}X\|_F^2+\|P_{1}X\|_F^2}\|[x^*(\psi)]^\top \mathbf{v}\|_2^2\\
    &=\|(u_\psi A + \sqrt{1-u_\psi^2} B  +C)^\top \mathbf{v}\|^2_2\\
    &= \|A^\top \mathbf{v} u_\psi + B^\top \mathbf{v} \sqrt{1-u_\psi^2}+C^\top \mathbf{v}\|^2_2\\
    &= \|\mathbf{a}_{\mathbf{v}}u_\psi + \mathbf{b}_{\mathbf{v}}\sqrt{1-u_\psi^2}+\mathbf{c}_{\mathbf{v}}\|^2_2\\
    &=\|\mathbf{a}_{\mathbf{v}}\|_2^2u_\psi^2 + \|\mathbf{b}_{\mathbf{v}}\|_2^2(1-u_\psi^2)+\|\mathbf{c}_{\mathbf{v}}\|_2^2+2\langle \mathbf{a}_{\mathbf{v}},\mathbf{b}_{\mathbf{v}}\rangle u_\psi \sqrt{1-u_\psi^2}+2\langle \mathbf{a}_{\mathbf{v}},\mathbf{c}_{\mathbf{v}}\rangle u_\psi + 2\langle \mathbf{b}_{\mathbf{v}}, \mathbf{c}_{\mathbf{v}}\rangle\sqrt{1-u_\psi^2}\\
    &= (\|\mathbf{a}_{\mathbf{v}}\|_2^2-\|\mathbf{b}_{\mathbf{v}}\|_2^2)u_\psi^2 +2\langle \mathbf{a}_{\mathbf{v}},\mathbf{b}_{\mathbf{v}}\rangle u_\psi \sqrt{1-u_\psi^2}+2\langle \mathbf{a}_{\mathbf{v}},\mathbf{c}_{\mathbf{v}}\rangle u_\psi + 2\langle \mathbf{b}_{\mathbf{v}}, \mathbf{c}_{\mathbf{v}}\rangle\sqrt{1-u_\psi^2}+\|\mathbf{b}_{\mathbf{v}}\|_2^2+\|\mathbf{c}_{\mathbf{v}}\|^2_2
\end{align*}
Substituting $u_\psi=\sqrt{\frac{\psi}{\psi+r^*}},$ we have 
\begin{align*}
&\frac{1}{\|P_{\mathcal{E}}X\|_F^2+\|P_{1}X\|_F^2}\|[x^*(\psi)]^\top \mathbf{v}\|_2^2\\
&=(\|\mathbf{a}_{\mathbf{v}}\|_2^2-\|\mathbf{b}_{\mathbf{v}}\|_2^2)\frac{\psi}{\psi+r^*} +2\langle \mathbf{a}_{\mathbf{v}},\mathbf{b}_{\mathbf{v}}\rangle \sqrt{\frac{\psi}{\psi+r^*}} \sqrt{\frac{r^*}{\psi+r^*}}+2\langle \mathbf{a}_{\mathbf{v}},\mathbf{c}_{\mathbf{v}}\rangle \sqrt{\frac{\psi}{\psi+r^*}} \\
    &\qquad + 2\langle \mathbf{b}_{\mathbf{v}}, \mathbf{c}_{\mathbf{v}}\rangle\sqrt{\frac{r^*}{\psi+r^*}}+\|\mathbf{b}_{\mathbf{v}}\|_2^2+\|\mathbf{c}_{\mathbf{v}}\|^2_2.
\end{align*}
Thus, 
\begin{align}
&\frac{\psi+r^*}{\|P_{\mathcal{E}}X\|_F^2+\|P_{1}X\|_F^2}\|[x^*(\psi)]^\top \mathbf{v}\|_2^2\notag\\
&=(\|\mathbf{a}_{\mathbf{v}}\|_2^2-\|\mathbf{b}_{\mathbf{v}}\|_2^2)\psi +2\langle \mathbf{a}_{\mathbf{v}},\mathbf{b}_{\mathbf{v}}\rangle \sqrt{r^*}\sqrt{\psi} +2\langle \mathbf{a}_{\mathbf{v}},\mathbf{c}_{\mathbf{v}}\rangle \sqrt{\psi}\sqrt{\psi+r^*} \notag\\
&\qquad + 2\langle \mathbf{b}_{\mathbf{v}}, \mathbf{c}_{\mathbf{v}}\rangle\sqrt{r^*}\sqrt{\psi+r^*}+(\|\mathbf{b}_{\mathbf{v}}\|_2^2+\|\mathbf{c}_{\mathbf{v}}\|^2_2)(\psi+r^*)\notag\\
&=(\|\mathbf{a}_{\mathbf{v}}\|_2^2+\|\mathbf{c}_{\mathbf{v}}\|_2^2)\psi +2\langle \mathbf{a}_{\mathbf{v}},\mathbf{b}_{\mathbf{v}}\rangle \sqrt{r^*}\sqrt{\psi} +2\langle \mathbf{a}_{\mathbf{v}},\mathbf{c}_{\mathbf{v}}\rangle \sqrt{\psi}\sqrt{\psi+r^*} \label{eq:prop_v_dep}\notag \\
&\qquad + 2\langle \mathbf{b}_{\mathbf{v}}, \mathbf{c}_{\mathbf{v}}\rangle\sqrt{r^*}\sqrt{\psi+r^*} +(\|\mathbf{b}_{\mathbf{v}}\|_2^2+\|\mathbf{c}_{\mathbf{v}}\|_2^2)r^*.\notag\\
&=\lambda_{v,1}\psi + \lambda_{v,2}\sqrt{\psi}+\lambda_{v,3}\sqrt{\psi}\sqrt{\psi+r^*}+\lambda_{v,4}\sqrt{\psi+r^*}+\lambda_{v,5}.
\end{align}
Similarly, we have 
\begin{multline*}
\frac{\psi+r^*}{\|P_{\mathcal{E}}X\|_F^2+\|P_{1}X\|_F^2}\|[x^*(\psi)]^\top \mathbf{v}'\|_2^2=
\lambda_{\mathbf{v}',1}\psi + \lambda_{\mathbf{v}',2}\sqrt{\psi}+\lambda_{\mathbf{v}',3}\sqrt{\psi}\sqrt{\psi+r^*}+\lambda_{\mathbf{v}',4}\sqrt{\psi+r^*}+\lambda_{\mathbf{v}',5}.
\end{multline*}
Then, since $\psi+r^*>0,$ it follows that
\begin{align*}
\|[x^*(\psi)]^\top \mathbf{v}\|_2^2\leq \|[x^*(\psi)]^\top \mathbf{v}'\|_2^2\iff h_{\mathbf{v}}(\psi)\leq h_{\mathbf{v}'}(\psi).
\end{align*}
Fix any $\mathbf{v}\in\mathbb{R}^n$ and $t>0.$ We next show 
\begin{align*}
    \|[x^*(\psi)]^\top \mathbf{v}\|_2^2\leq t\iff h_{\mathbf{v}}(\psi)\leq h(\psi).
\end{align*}
By \eqref{eq:prop_v_dep}, we have 
\[\frac{\psi+r^*}{\|P_{\mathcal{E}}X\|_F^2+\|P_{1}X\|_F^2}\|[x^*(\psi)]^\top \mathbf{v}\|_2^2=
h_{\mathbf{v}}(\psi).\]
Then, 
\begin{align*}
&\|[x^*(\psi)]^\top \mathbf{v}\|_2^2\leq t\\
&\iff \frac{\psi+r^*}{\|P_{\mathcal{E}}X\|_F^2+\|P_{1}X\|_F^2}\|[x^*(\psi)]^\top \mathbf{v}\|_2^2\leq \frac{t}{\|P_{\mathcal{E}}X\|_F^2+\|P_{1}X\|_F^2}\psi + \frac{r^*t}{\|P_{\mathcal{E}}X\|_F^2+\|P_{1}X\|_F^2}\\
&\iff h_{\mathbf{v}}(\psi)\leq h(\psi).
\end{align*}

\subsection{Notes on implementation}\label{app:impl}

All of the implementation is done in \verb|R|. Parts of the implementation of the proposed methods have been adapted from the source code of the package \verb|KmeansInference| of \citet{chen2023selective} or call its functions---more details can be found in the codes. In the rest of this section, we elaborate on various aspects of the implementation of the proposed tests.

\subsubsection{The proposed test of Section \ref{sec:proposed}}
 
\paragraph{Computation of $\mathrm{dim}(\mathcal{E})$}

A challenge in computing $\mathrm{dim}(\mathcal{E})$ given a set of vectors $\{\mathbf{v}_{k,k'}:(k,k')\in \mathcal{V}\}$ 
lies in finite precision. For instance, it may be the case that two linearly dependent vectors are numerically linearly independent due to finite precision. To address this phenomenon, we compute $\mathrm{dim}(\mathcal{E})$ differently depending on the value of $|\mathcal{V}|.$ 
\begin{itemize}
    \item If $|\mathcal{V}|=\binom{K}{2},$ then we know that $\mathrm{dim}(\mathcal{E})=K-1$ since the set $\{\mathbf{v}_{k,k+1}:k\in[K-1]\}$ forms a basis for $\mathrm{dim}(\mathcal{E})$ by Lemma \ref{lem:span} below. 
    \item If $|\mathcal{V}|<\binom{K}{2},$ we resort to numerical methods for computing $\mathrm{dim}(\mathcal{E}).$ We use the function \verb|fast.svd| from the package \verb|corpcor| for computing the condensed singular value decomposition (SVD) that gives the condensed SVD  $U\Sigma W^\top$ of $V,$ where $V$ is the matrix whose columns consist of the vectors in $\{\mathbf{v}_{k,k'}:(k,k')\in \mathcal{V}\}.$ In the case where $V$ does not have full rank, it is likely to be the case that $V$ still has full numerical rank due to finite precision. To account for finite precision, we set the rank of $V,$ i.e., $\mathrm{dim}(\mathcal{E}),$ to be the number of diagonal entries of $\Sigma,$ which corresponds to the number of ``nonzero" singular values of $V,$ where the default threshold value of the function \verb|fast.svd| is used for determining if a singular value is to be treated as zero. 
\end{itemize}

\paragraph{Computation of $P_{\mathcal{E}}$} To be consistent with the computation of $\mathrm{dim}(\mathcal{E})$ discussed above, we compute $P_{\mathcal{E}}$ differently depending on the value of $|\mathcal{V}|.$
\begin{itemize}
    \item If $|\mathcal{V}|=\binom{K}{2},$ we compute the projection matrix $P_{\mathcal{E}}$ by setting $P_{\mathcal{E}}=\widetilde{U}\widetilde{U}^\top,$ where $\widetilde{U}$ is an orthogonal matrix whose columns are generated by the Gram-Schmidt orthogonalization procedure applied to the columns of $\begin{bmatrix}
    v_{1,2}&\cdots&v_{K-1,K}\end{bmatrix},$ using the function \verb|gramSchmidt| from the package \verb|pracma|. 
    \item If $|\mathcal{V}|<\binom{K}{2},$ we compute $P_{\mathcal{E}}$ by setting $P_{\mathcal{E}}=UU^\top,$ where $U$ is as defined above. 
\end{itemize}
Here, we state and prove Lemma \ref{lem:span}, which is referred to in the discussion above. 
\begin{lemma}\label{lem:span}
   Suppose $|\mathcal{V}|=\binom{K}{2}.$ Then, the set $\left\{v_{k, k+1}:k\in [K-1]\right\}$ forms a basis of $\mathcal{E}.$
\end{lemma}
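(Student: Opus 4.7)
The plan is to verify the two defining properties of a basis---spanning and linear independence---for the set $B := \{\mathbf{v}_{k,k+1} : k \in [K-1]\}$. It will be convenient to write $u_k := \mathbf{1}_{\mathcal{C}_k}/|\mathcal{C}_k|$ for each $k \in [K]$, so that $\mathbf{v}_{k,k'} = u_k - u_{k'}$. Since $|\mathcal{V}| = \binom{K}{2}$ forces $\mathcal{V} = \mathcal{V}_{\mathrm{all}}$, we have $\mathcal{E} = \mathrm{span}\{u_k - u_{k'} : 1 \leq k < k' \leq K\}$, and $B \subset \mathcal{E}$ is clear.

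For the spanning direction, I would use a telescoping identity: for any $1 \leq k < k' \leq K$,
\[
\mathbf{v}_{k,k'} = u_k - u_{k'} = \sum_{j=k}^{k'-1} (u_j - u_{j+1}) = \sum_{j=k}^{k'-1} \mathbf{v}_{j,j+1},
\]
so every generator of $\mathcal{E}$ lies in $\mathrm{span}(B)$, giving $\mathcal{E} \subseteq \mathrm{span}(B)$ and hence equality.

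For linear independence, I would suppose $\sum_{k=1}^{K-1} c_k \mathbf{v}_{k,k+1} = 0$, substitute $\mathbf{v}_{k,k+1} = u_k - u_{k+1}$, and collect coefficients of the $u_j$'s to obtain
\[
c_1 u_1 + \sum_{j=2}^{K-1}(c_j - c_{j-1}) u_j - c_{K-1} u_K = 0.
\]
The key observation is that $u_1, \ldots, u_K$ are themselves linearly independent: since $\mathcal{C}_1, \ldots, \mathcal{C}_K$ partition $[n]$ into nonempty subsets, the indicator vectors $\mathbf{1}_{\mathcal{C}_k}$ have pairwise disjoint nonempty supports, and the same holds for their positive rescalings $u_k$. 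Matching coefficients against zero then yields $c_1 = 0$, $c_j = c_{j-1}$ for $j = 2, \ldots, K-1$, and $c_{K-1} = 0$, forcing $c_k = 0$ for all $k$.

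The argument is entirely elementary, so I do not anticipate a substantive obstacle; the only step that requires mild care is the telescoping identity rewriting an arbitrary $\mathbf{v}_{k,k'}$ as a sum of consecutive differences, together with noting that the partition structure of $\{\mathcal{C}_k\}_{k\in[K]}$ guarantees linear independence of the $u_k$'s via disjoint supports.
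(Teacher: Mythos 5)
Your proposal is correct and follows essentially the same route as the paper: the telescoping identity $\mathbf{v}_{k,k'}=\sum_{j=k}^{k'-1}\mathbf{v}_{j,j+1}$ for spanning, and collecting coefficients of the scaled indicators $\mathbf{1}_{\mathcal{C}_k}/|\mathcal{C}_k|$ for independence. The only cosmetic difference is that you invoke linear independence of the $u_k$'s via their disjoint supports, whereas the paper reads off the same conclusion entrywise (looking at coordinates $j\in\mathcal{C}_1,\mathcal{C}_2,\ldots$), which is the same argument.
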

\begin{proof}\label{pf:span} Note that if $|\mathcal{V}|=\binom{K}{2},$ then $\mathcal{E}=\mathrm{span}\{v_{k,k'}:k<k'\in [K]\}.$
  For any $k,k'\in [K]$ such that $k<k',$ note that 
\begin{align*}
    \mathbf{v}_{k,k'}&=\frac{1}{|\mathcal{C}_k|}\mathbf{1}_{\mathcal{C}_k}-\frac{1}{|\mathcal{C}_{k'}|}\mathbf{1}_{\mathcal{C}_{k'}}\\
    &=\frac{1}{|\mathcal{C}_k|}\mathbf{1}_{\mathcal{C}_k}-\frac{1}{|\mathcal{C}_{k+1}|}\mathbf{1}_{\mathcal{C}_{k+1}}+\cdots+\frac{1}{|\mathcal{C}_{k'-1}|}\mathbf{1}_{\mathcal{C}_{k'-1}}-\frac{1}{|\mathcal{C}_{k'}|}\mathbf{1}_{\mathcal{C}_{k'}}\\
    &=\mathbf{v}_{k,k+1}+\dots +\mathbf{v}_{k'-1,k'}.
\end{align*}
Thus, we have 
$\mathcal{E}\subset \mathrm{span}\left\{\mathbf{v}_{k,k+1}:k\in [K-1]\right\}.$ The other direction trivially holds.

We next show that $\mathbf{v}_{k,k+1}$ for $k\in [K-1]$ are linearly independent. Suppose $\summ{k=1}{K-1}\lambda_k \mathbf{v}_{k,k+1}=0.$ Note
\begin{align*}
\overset{K-1}{\sum_{k=1}}\lambda_k \mathbf{v}_{k,k+1}&= \overset{K-1}{\sum_{k=1}}\lambda_k\left(\frac{1}{|\mathcal{C}_k|}\mathbf{1}_{\mathcal{C}_k}-\frac{1}{|\mathcal{C}_{k+1}|}\mathbf{1}_{\mathcal{C}_{k+1}}\right)\\
&=\lambda_1\frac{1}{|\mathcal{C}_1|}\mathbf{1}_{\mathcal{C}_1}+\sum_{k=1}^{K-2}(\lambda_{k+1}-\lambda_k)\frac{1}{|\mathcal{C}_{k+1}|}\mathbf{1}_{\mathcal{C}_{k+1}}-\lambda_{K-1}\frac{1}{|\mathcal{C}_K|}\mathbf{1}_{\mathcal{C}_K}.
\end{align*}
Define $u\coloneqq\lambda_1\frac{1}{|\mathcal{C}_1|}\mathbf{1}_{\mathcal{C}_1}+\sum_{k=1}^{K-2}(\lambda_{k+1}-\lambda_k)\frac{1}{|\mathcal{C}_{k+1}|}\mathbf{1}_{\mathcal{C}_{k+1}}-\lambda_{K-1}\frac{1}{|\mathcal{C}_K|}\mathbf{1}_{\mathcal{C}_K},$ and let $u_j$ denote $j$th entry of $u.$ Since $u_j=\lambda_1\frac{1}{|\mathcal{C}_1|}=0$ for $j\in C_1,$ we have $\lambda_1=0.$ Then, since ${u_j=(\lambda_2-\lambda_1)\frac{1}{|\mathcal{C}_2|}=\lambda_2\frac{1}{|\mathcal{C}_2|}=0}$ for $j\in \mathcal{C}_2,$ we have $\lambda_2=0.$ The same argument holds for $\lambda_3,...,\lambda_{K-2}.$ Finally, since $u_j=\lambda_{K-1}\frac{1}{|\mathcal{C}_K|}=0$ for $j\in \mathcal{C}_K,$ we have $\lambda_{K-1}=0.$ Thus, $\lambda_k=0$ for all $k\in [K-1],$ and it follows that $\mathbf{v}_{k,k+1}$ for $k\in [K-1]$ are linearly independent.  
\end{proof}

\subsubsection{The proposed test of Section \ref{sec:ftest}}\label{sec:imple_5.2}

By Proposition \ref{prop:unknown_cts}, $S^*$ is a set of solutions to a system of inequalities, where each inequality takes the form 
\begin{align}\label{S*_computation}
    \lambda_1\psi + \lambda_2\sqrt{\psi} + \lambda_3 \sqrt{\psi}\sqrt{\psi+r^*}+\lambda_4\sqrt{\psi+r^*}+\lambda_5\leq 0.
\end{align}
In this section, we elaborate on the implementation for finding the set of solutions to \eqref{S*_computation}. Since $u\mapsto u^2$ on $[0, \infty)$ is a bijective map, the set of values of $\sqrt{\psi}$ satisfying the inequality is bijective with the set of values of $\psi$ satisfying the same inequality. Thus, we aim to find the sub-level set of $f$ at 0, where $f:[0, \infty)\rightarrow \mathbb{R},$ 
\begin{align}\label{S*_computation_rewritten}
    f(y)=\lambda_1y^2 + \lambda_2 y + \lambda_3 y\sqrt{y^2+r^*}+\lambda_4\sqrt{y^2+r^*}+\lambda_5.
\end{align}
Define $\mathcal{A}_f=\{y\in [0, \infty):f(y)=0\}$ to be the set of non-negative real roots of $f.$ Since $f$ is continuous, intermediate value theorem implies that for any $y_1<y_2\in [0, \infty),$ $\mathrm{sign}(f(y_1))=-\mathrm{sign}(f(y_2))$ only if there exists $y\in [y_1,y_2]$ such that $y\in \mathcal{A}_f,$ where $\mathrm{sign}(a)$ for $a\in\mathbb{R}$ is equal to 1 if $a>0,$ -1 if $a<0,$ and 0 if $a=0.$ Once we compute $\mathcal{A}_f,$ we find the set of solutions to the inequality $f(y)\leq 0$ by checking the sign of $f$ on each interval partitioned by its roots, i.e., the elements of $\mathcal{A}_f.$ To compute $\mathcal{A}_f,$ write
\begin{align*}
    f(y)=0\iff (\lambda_3y+\lambda_4)\sqrt{y^2+r^*}=-\lambda_1y^2-\lambda_2y-\lambda_5
\end{align*}
and define $f_1:[0, \infty)\rightarrow\mathbb{R},$\[f_1(y)=(\lambda_3y+\lambda_4)\sqrt{y^2+r^*}\] and $f_2:[0, \infty)\rightarrow \mathbb{R},$
\[f_2(y)=-\lambda_1y^2-\lambda_2y-\lambda_5.\] 
Note 
\begin{align}
    &f_1^2(y)=f_2^2(y)\notag\\
    &\iff(\lambda_3y+\lambda_4)^2(y^2+r^*)=(\lambda_1y^2+\lambda_2y+\lambda_5)^2\notag\\
    &\iff(\lambda_3^2y^2+2\lambda_3\lambda_4y+\lambda_4^2)(y^2+r^*)=\lambda_1^2y^4+\lambda_2^2y^2+\lambda_5^2+2\lambda_1\lambda_2y^3+2\lambda_2\lambda_5y+2\lambda_1\lambda_5y^2\notag\\
    &\iff\lambda_3^2y^4+2\lambda_3\lambda_4y^3+\lambda_4^2y^2+\lambda_3^2r^*y^2+2\lambda_3\lambda_4 r^* y + \lambda_4^2 r^*=\lambda_1^2y^4+\lambda_2^2y^2+\lambda_5^2+2\lambda_1\lambda_2y^3\notag\\
    &\hspace{27em}+2\lambda_2\lambda_5y+2\lambda_1\lambda_5y^2\notag\\
    &\iff 0=(\lambda_3^2-\lambda_1^2)y^4+2(\lambda_3\lambda_4-\lambda_1\lambda_2)y^3+(\lambda_4^2+\lambda_3^2r^*-\lambda_2^2-2\lambda_1\lambda_5)y^2\label{quartic}\\
    &\qquad\qquad\qquad +2(\lambda_3\lambda_4r^*-\lambda_2\lambda_5)y+\lambda_4^2r^*-\lambda_5^2\notag,
\end{align}
which is a quartic equation in $y.$ We then solve for $\mathcal{A}_f$ by checking the condition $f_1(y)=f_2(y)$ for each $y\in \{y\in[0,\infty):f_1^2(y)=f_2^2(y)\}.$

We use the \verb|R| base function \verb|polyroot| to solve for the roots of the quartic equality in \eqref{quartic}. We now briefly discuss how we address finite precision in our implementation in the rest of the procedure. For determining whether a root is real, i.e., if its imaginary part is 0, and for checking if the condition $f_1(y)=f_2(y)$ is satisfied, we use a relatively large tolerance level of 1---note that having additional values in $\mathcal{A}_f$ leads to a finer partition of the interval, which does not alter the output of the procedure for computing the set of solutions to \eqref{S*_computation_rewritten}. To determine if there is multiplicity for a root, we use the function \verb|almost.unique| from the package \verb|bazar| with a stringent threshold of $10^{-30}$ to avoid losing any root of $f.$

\subsubsection{Simulations of Section \ref{sec:simul}}
Note that the tests proposed throughout this paper, as well as those of \citet{chen2023selective}, assume that the $K$-means algorithm outputs $K$ clusters at each iteration of the algorithm, an assumption implicitly made in the characterization of the truncation sets $\mathcal{S}_{\sigma, k,k'}$ in \eqref{eq:charac}, $\mathcal{S}_{\sigma}$ in \eqref{eq:charac2}, and $\mathcal{S}^*$ in \eqref{eq:charac3}. Thus, the tests cannot be applied with theoretical guarantees if $K$-means clustering outputs a different number of clusters at any iteration of the algorithm. 

However, there are instances in practice where the algorithm produces a different number of clusters at the final step than that at the initial step. We omit such cases in our simulations, having the corresponding functions return \verb|NA| for the p-value, and only the outputs that are not \verb|NA| are reflected in the QQ plots and the computation of the empirical powers. We have observed that the number of instances of this phenomenon is small compared to the number of p-values generated.

\end{document}